%% file: main.tex
\documentclass{article}
\usepackage[margin=1.1in]{geometry}

\PassOptionsToPackage{numbers,compress}{natbib}

\usepackage[ruled,vlined]{algorithm2e}
\usepackage[utf8]{inputenc} 
\usepackage[T1]{fontenc}    
\usepackage{hyperref}       
\usepackage{url}            
\usepackage{booktabs}       
\usepackage{amsfonts}       
\usepackage{nicefrac}       
\usepackage{microtype}      
\usepackage{xcolor}         
\usepackage{amsmath,amssymb,amsthm}
\usepackage{graphicx}
\usepackage[font=small,labelfont=bf]{caption}
\usepackage{natbib}
\usepackage{subcaption}
\usepackage{authblk}
\usepackage{float}
\usepackage{placeins}
\usepackage{tikz}
\usetikzlibrary{arrows.meta, positioning, fit, backgrounds, calc, matrix, decorations.pathreplacing}

\definecolor{blockblue}{RGB}{218,227,243}
\definecolor{blockgreen}{RGB}{222,234,211}
\definecolor{blockyellow}{RGB}{246,229,176}

\tikzset{
  flowarrow/.style={-Latex, line width=0.9pt},
  paneltitle/.style={font=\bfseries\large},
  explabel/.style={align=center, font=\bfseries},
  mathlabel/.style={align=center},
}

\input{macros}

\title{Sort, Partition, Randomize: Optimal Binary Hypothesis Testing under Local Differential Privacy}

\author[1]{Elena Ghazi\thanks{Corresponding author: \texttt{elenaghazi@g.harvard.edu}}}
\author[2]{Jawad Nasser}
\author[1]{Flavio Calmon}
\author[2]{Ibrahim Issa}
\affil[1]{Harvard University}
\affil[2]{American University of Beirut}

\date{}

\begin{document}

\maketitle

\begin{abstract}
We study optimal design of $\varepsilon$-locally differentially private mechanisms for binary hypothesis testing. Each observation is drawn from one of two known distributions $P_0,P_1$ on a finite alphabet of size $k$, privatized by a mechanism $Q$, and then used to infer which distribution generated the data. We measure testing utility using an $f$-divergence---including total variation, KL, and hockey-stick divergences---between the two induced output distributions. Previous work established structural properties of optimal mechanisms, but only yielded exponential-time algorithms. We prove a sharp structure: for every $\varepsilon$ and every $f$-divergence objective, after sorting the alphabet by likelihood ratio, there exists an optimal mechanism that partitions the sorted alphabet into contiguous blocks and applies randomized response to the block label. We call this class \emph{Sort-Partition-Randomize} (SPR). This characterization yields an exact dynamic program that computes an optimal mechanism in $O(k^3)$ time, and more generally in $O(\ell k^2)$ time with an $\ell$-output budget. Our results make it possible to efficiently compute and characterize the exact optimum across the full privacy range, beyond asymptotic privacy regimes.

\end{abstract}

\tableofcontents
\newpage

\input{sections/01_introduction}
\input{sections/02_setup_and_spr}
\input{sections/03_main_results}
\input{sections/04_geometry_of_spr_optimality}
\input{sections/05_dynamic_program}
\input{sections/06_conclusion}
\newpage
\bibliographystyle{unsrtnat}
\bibliography{references}
\newpage
\appendix
\input{appendix/A_extreme_refinements/A0_overview}
\input{appendix/B_convex_hull_spr}
\input{appendix/C_coarsening_spr}
\input{appendix/D_egamma_proofs}
\input{appendix/E_numerical_experiments}

\end{document}

%% file: macros.tex
\theoremstyle{plain}
\newtheorem{theorem}{Theorem}[section]
\newtheorem{lemma}[theorem]{Lemma}
\newtheorem{remark}[theorem]{Remark}
\newtheorem{corollary}[theorem]{Corollary}
\newtheorem{proposition}[theorem]{Proposition}
\newtheorem{definition}[theorem]{Definition}
\theoremstyle{definition}
\newtheorem{example}[theorem]{Example}
\theoremstyle{plain}

\newcommand{\eps}{\varepsilon}
\DeclareMathOperator*{\argmax}{arg\,max}
\hypersetup{
    colorlinks=true,
    linkcolor=black,
    filecolor=blue,
    urlcolor=blue,
    citecolor=black
}

\newcommand{\X}{\mathcal{X}}
\newcommand{\Y}{\mathcal{Y}}

%% file: sections/01_introduction.tex
\section{Introduction}
\label{sec:introduction}

Balancing privacy and utility is a central challenge in data disclosure control.
A canonical instance of this challenge is privacy-preserving binary hypothesis testing: a set of samples is drawn from one of two distributions, denoted by $P_0$ and $P_1$, and each sample is randomized to ensure privacy.
An analyst then performs a hypothesis test to infer which distribution generated the samples given the privatized outputs.
A natural measure of how much testing power survives randomization is an $f$-divergence between the privatized output distributions induced by $P_0$ and $P_1$.
Different choices of $f$ capture different operational quantities: total variation determines the probability of correct guessing under a balanced prior (i.e., when the two hypotheses are equally likely), $E_\gamma$ (or hockey-stick) divergences trace operating points along the Neyman--Pearson ROC curve, and KL divergence controls the exponential decay of error probabilities as the analyst accumulates samples~\cite{Duchi2013MinimaxRates,Asoodeh2021EGammaContraction,Pensia2025BinaryTesting}.

We study privacy-preserving binary hypothesis testing under local differential privacy (LDP).
LDP~\cite{Kasiviswanathan2011WhatCanWeLearnPrivately,Warner1965RandomizedResponse} is among the most stringent notions of privacy, requiring each sample to be randomized individually prior to disclosure.
However, this strictness comes at a substantial utility cost in statistical applications~\cite{Duchi2020TheRightComplexity,Asoodeh2024ContractionLDPMechanisms}.
A natural design objective is therefore to find LDP mechanisms that maximize an $f$-divergence between the privatized output distributions induced by $P_0$ and $P_1$.

Formally, for $\nu \in \{0,1\}$, an observation $X \sim P_\nu$ on a finite alphabet $\X$ of size $k$ is privatized through an $\eps$-locally differentially private channel $Q(\cdot \mid x)$ with finite output alphabet $\Y$, producing an output $Y \in \Y$. The $\eps$-LDP constraint requires that for all $x,x' \in \X$ and all $y \in \Y$,
\begin{align}
    Q(y \mid x) \le e^\eps Q(y \mid x').
\end{align}
Denote by $M_\nu := Q^\top P_\nu$ the output distribution induced by $P_\nu$ and $Q$. Given a convex function $f : \mathbb{R}_+ \to \mathbb{R}$ with $f(1) = 0$, our goal is to design $Q$ to maximize the $f$-divergence defined as:
\begin{align}
    D_f(M_0 \| M_1) := \sum_{y \in \Y} M_1(y) f\!\Big(\tfrac{M_0(y)}{M_1(y)}\Big).
\end{align}
Let $\mathcal{Q}_{\eps}$ be the set of $\eps$-LDP channels (with input alphabet $\X$). We can formulate our optimization as:
\begin{align}
    \label{eq:core-formulation}
    \max_{Q\in \mathcal Q_{\varepsilon}} D_f(Q^\top P_0\|Q^\top P_1).
\end{align}

We also consider the case in which the output alphabet is constrained to be less than or equal to a given size $\ell$ (referred to as a communication constraint by~\citet{Pensia2025BinaryTesting}). To wit, we replace $\mathcal{Q}_\eps$ in~\eqref{eq:core-formulation} by $\mathcal{Q}_{\eps,\ell}$, the set of $\varepsilon$-LDP channels with at most $\ell$ outputs:
\begin{align}
    \label{eq:core-formulation-output}
    \max_{Q\in \mathcal Q_{\varepsilon,\ell}} D_f(Q^\top P_0\|Q^\top P_1).
\end{align}

Despite the relative simplicity of the formulations, exact optimal mechanisms were previously understood only in special regimes: a binary-output mechanism for all $f$-divergences at sufficiently small $\varepsilon$, and randomized response on the original alphabet for KL at sufficiently large $\varepsilon$~\cite{Kairouz2016ExtremalMechanisms} (the latter requires $\ell \geq k$ in the constrained case). The intermediate regime, where privacy and utility must genuinely be balanced---and arguably the case of greatest practical interest---has remained computationally out of reach: although significant results on the structure of the optimizers have been derived by Kairouz et al.~\cite{Kairouz2016ExtremalMechanisms} and Pensia et al.~\cite{Pensia2025BinaryTesting}, the resulting algorithms are exponential in $k$ (in the unconstrained output size case) or in $\ell$ (in the constrained case).

\paragraph{Contributions.}

Our contributions in this paper are threefold:
\begin{itemize}
    \item[1)] We provide the \emph{first polynomial time algorithm} to find an optimal mechanism. In particular, we provide a dynamic program that runs in $O(k^3)$ in the unconstrained case, and $O(\ell k^2)$ in the constrained case.
\end{itemize}
Consequently, our results enable the design of optimal mechanisms for alphabet sizes that were previously computationally infeasible. We illustrate this in Figure~\ref{fig:kl-eps-two-panel} for the unconstrained optimization with $k = 100$.

\input{figures/figure_KL_comparison}

\begin{itemize}
    \item[2)] Our dynamic program follows from a sharp structural characterization of the optimizers of~\eqref{eq:core-formulation} and~\eqref{eq:core-formulation-output}, formally stated in Theorem~\ref{thm:extreme-points-spr}. Specifically, after sorting the input symbols by the likelihood ratios $P_0(x)/P_1(x)$, there is an optimal channel that partitions the sorted alphabet into contiguous blocks and applies randomized response to the block label (we illustrate an example in Figure~\ref{fig:llr-pattern-mapping}). We refer to mechanisms of this form as \emph{sort--partition--randomize (SPR) mechanisms.}
\end{itemize}
Beyond computational efficiency, SPR mechanisms yield a simple interpretable description of the optimal privatization strategy: aggregate symbols with similar likelihood ratios, then privatize only the resulting coarse label. Furthermore, the optimality of SPR mechanisms immediately implies that an optimal mechanism does not require more than $k$ outputs (as a partition of an alphabet of size $k$ cannot have more than $k$ blocks). This recovers a result by~\citet{Kairouz2016ExtremalMechanisms}. Moreover, it enables us to jointly address both formulations~\eqref{eq:core-formulation} and~\eqref{eq:core-formulation-output}. In fact, our approach shows that SPR mechanisms are optimal for any convex objective. However, the dynamic program further utilizes the decomposability of $f$-divergences.

\begin{itemize}
    \item[3)] We recover and extend optimality results when $D_f$ is the $E_\gamma$ divergence. In particular, there exists an optimal \emph{binary-output} mechanism, for every $\ell \ge 2$, which can be viewed as a privatized version of the Neyman-Pearson threshold.
\end{itemize}
\citet{Zamanlooy2024E_gammaMixing} provided an upper bound on the optimal value in~\eqref{eq:core-formulation} for the $E_\gamma$ divergence. The upper bound was known to be tight for binary input distributions. We show that this result follows from our structural characterization theorem, and prove that it is tight for any input alphabet size.

\paragraph{Prior work.}
\citet{Kairouz2016ExtremalMechanisms} showed that, for a broad class of utilities including $f$-divergences, an optimal $\eps$-LDP mechanism may be chosen to be \emph{staircase}: for each output $y$, the probabilities $Q(y \mid x)$ take only two values, $\theta_y$ and $e^\eps \theta_y$, as $x$ varies. This reduces the optimization over arbitrary mechanisms to a finite linear program with one variable for each possible high set $H \subseteq \X$, hence $2^k$ variables. They also identified simple optimal mechanisms in extreme regimes: a binary-output mechanism for sufficiently small $\eps$ for any $f$-divergence, and randomized response on the original alphabet (i.e., $k$-ary randomized response) for sufficiently large $\eps$ in the KL case, with distribution-dependent $\eps$ thresholds. Our work replaces this exponential search by an $O(k^3)$ dynamic program, giving exact optimization across the full privacy range.

\citet{Pensia2025BinaryTesting} studied the geometry of achievable pairs of output distributions under LDP and communication constraints. In the $\eps$-LDP setting with an $\ell$-output budget, they showed that extreme achievable pairs can be realized by first quantizing the input using likelihood-ratio thresholds into at most $2\ell^2$ intermediate symbols, and then applying an extreme $\eps$-LDP channel from this intermediate alphabet to the $\ell$ outputs. This gives an algorithm with runtime polynomial in $k^{\ell^2}$ and $2^{O(\ell^3\log \ell)}$~\cite[Corollary 4]{Pensia2025BinaryTesting}, hence polynomial in $k$ only when $\ell$ is treated as a constant. Building on this geometric perspective and a perturbation-based argument, we sharpen the structure substantially: in the pure LDP, $f$-divergence setting, the intermediate quantizer and arbitrary private channel collapse to an SPR mechanism, yielding the $O(\ell k^2)$ dynamic program.

\citet{Tsitsiklis1993ExtremalProperties} studied the non-private analogue and showed the optimality of likelihood-ratio quantizers for broad classes of binary testing and quantization problems. Our SPR mechanisms are the private counterpart: sort by likelihood ratio, partition into contiguous blocks, and then apply randomized response. As $\eps \to \infty$, randomized response becomes deterministic, so SPR mechanisms recover the non-private likelihood-ratio quantizers.

The remainder of this paper is organized as follows.
Section~\ref{sec:setup-and-spr} fixes notation and introduces staircase and SPR mechanisms.
Section~\ref{sec:main-results-hub} states our main theorems: SPR optimality, the dynamic program, the closed-form mechanism for $E_\gamma$-divergences, and consequences for R\'enyi divergences and $\ell_r$ distance objectives.
Section~\ref{sec:geometry-of-spr} develops the joint-range geometry behind SPR optimality and proves Theorem~\ref{thm:extreme-points-spr} and Corollary~\ref{cor:fdiv-spr-optimality}.
Section~\ref{sec:dynamic-program} gives the $O(\ell k^2)$ dynamic program, including its runtime, traceback, and pseudocode.

%% file: figures/figure_KL_comparison.tex
\begin{figure}[tbp]
  \centering
  \begin{subfigure}{0.49\linewidth}
    \centering
    \includegraphics[width=\linewidth]{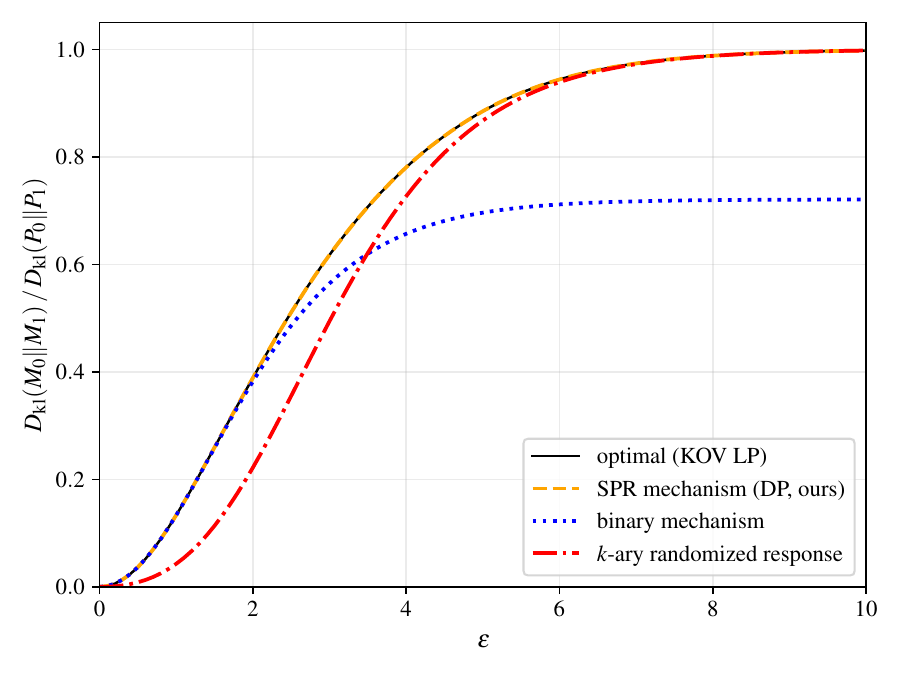}
    \caption{$k=6$.}
    \label{fig:kl-eps-k6}
  \end{subfigure}\hfill
  \begin{subfigure}{0.49\linewidth}
    \centering
    \includegraphics[width=\linewidth]{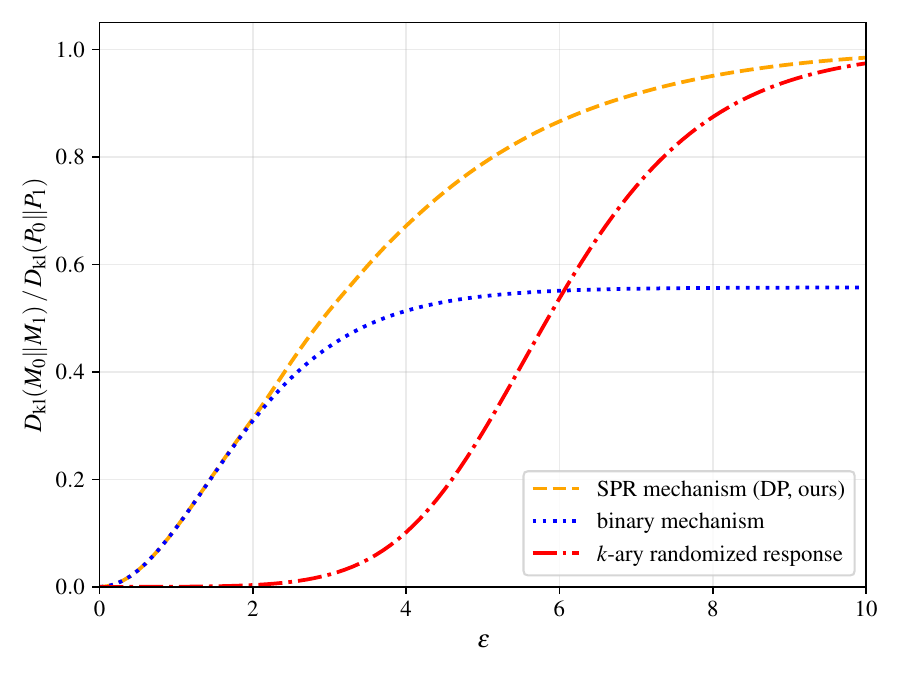}
    \caption{$k=100$.}
    \label{fig:kl-eps-k100}
  \end{subfigure}
  \caption{Average normalized KL utility $D_{\mathrm{kl}}(M_0\|M_1)/D_{\mathrm{kl}}(P_0\|P_1)$ versus $\eps$, averaged over $T=100$ Dirichlet$(\mathbf 1_k)$ pairs $(P_0,P_1)$ (NumPy seed $0$, $\eps$ grid $\{0,0.1,\dots,10\}$). \subref{fig:kl-eps-k6} $k=6$, where the KOV LP is tractable. \subref{fig:kl-eps-k100} $k=100$, where the KOV LP is infeasible. Full setup, mechanism implementations, runtimes, and bootstrap bands are in Appendix~\ref{sec:experiments}.}
  \label{fig:kl-eps-two-panel}
\end{figure}

%% file: sections/02_setup_and_spr.tex
\section{Problem setup and SPR mechanisms}
\label{sec:setup-and-spr}

This section fixes notation and introduces the two structural objects on which the rest of the paper is built: staircase mechanisms and sort--partition--randomize (SPR) mechanisms.

\paragraph{Notation.}
We recall the setup from Section~\ref{sec:introduction}. An observation $X\sim P_\nu$ for $\nu\in\{0,1\}$ on a finite alphabet $\X$ is privatized through an $\eps$-LDP channel $Q$ with finite output alphabet $\Y$, i.e., $Q(y\mid x)\le e^\eps Q(y\mid x')$ for all $x,x'\in\X$ and $y\in\Y$, producing an output $Y\in\Y$. The induced output marginals are $M_\nu := Q^\top P_\nu$, and utility is measured by an $f$-divergence $D_f(M_0\|M_1)$ for a convex $f:\mathbb R_+\to\mathbb R$ with $f(1)=0$. The goal is to design $Q$ to maximize $D_f(M_0\|M_1)$.

\paragraph{Likelihood-ratio ordering and reduction of ties.}
We first remove a few degenerate cases so that the likelihood-ratio order is strict, which is the setting used in intermediate arguments.
Deleting symbols with zero mass under both hypotheses and merging symbols with the same likelihood ratio $P_0(x)/P_1(x)$ does not change the induced output pair $(M_0,M_1)$ of any $\eps$-LDP mechanism: within each likelihood-ratio class, replacing the corresponding rows by the appropriate weighted average preserves $\eps$-LDP and leaves both induced marginals unchanged. Conversely, any mechanism on the reduced alphabet can be lifted to the original alphabet by copying its row to all symbols in the corresponding class. We relabel the reduced likelihood-ratio-sorted alphabet as $[k]$, and write $p,q$ for the reduced distributions. Thus $k$ may be smaller than $|\X|$, and
\begin{align}
    0 \le r_1 < r_2 < \cdots < r_k \le \infty,
    \qquad
    r_i := \frac{p_i}{q_i}.
\end{align}

\paragraph{Output constraints and the joint range.}
Let $\mathcal Q_\eps$ denote the set of $\eps$-LDP channels with input alphabet $[k]$. For $\ell\ge 1$, let $\mathcal Q_{\eps,\ell}$ denote the set of $\ell$-output $\eps$-LDP channels,
\begin{align}
    \mathcal Q_{\eps,\ell}
    :=
    \Bigl\{
    Q\in\mathbb R_+^{k\times \ell}:
    Q\mathbf{1}_\ell=\mathbf{1}_k,\ Q \text{ is }\eps\text{-LDP}
    \Bigr\},
\end{align}
and the fixed-output joint range
\begin{align}
    \mathcal R_\ell(p,q)
    :=
    \Bigl\{
    (Q^\top p,Q^\top q): Q\in\mathcal Q_{\eps,\ell}
    \Bigr\}.
\end{align}

We assume $\varepsilon>0$ throughout. When $\varepsilon=0$, every feasible channel is input-independent, so $Q^\top p=Q^\top q$ and every $f$-divergence objective is zero.

\paragraph{Staircase mechanisms.}

For any channel $Q$, an output $y\in\Y$ is \emph{active} if $Q(y\mid\cdot)$ is not identically zero.

\begin{definition}[Staircase mechanism and high set]
    \label{def:staircase}
    An $\eps$-LDP mechanism $Q$ is a \textbf{staircase mechanism} if, for every active $y$ and for all $x$ and $x'$, $Q(y|x)/Q(y|x') \in \{e^{-\eps},1,e^\eps\}.$ Moreover, for each $y$, we define the \textbf{high set} $H_y \subseteq \X$ as:
    \begin{align}
        H_y = \{x \in \X: Q(y|x) = \max_{x' \in \X} Q(y|x') \}.
    \end{align}
\end{definition}

If $Q$ is staircase, it will be convenient to express each active column $y$ as
\begin{align}
    Q(i,y)=\theta_y\Bigl(1+(e^\eps-1)\mathbf{1}\{i\in H_y\}\Bigr),
    \qquad \theta_y>0,\ H_y\subseteq [k].
\end{align}
We call $\theta_y$ the column scale and $H_y$ the \emph{high set} of column $y$. When an active staircase column is constant, i.e., $Q(i,y)\equiv \alpha_y>0$,  we write it as
\begin{align}
    Q(i,y)=\alpha_y e^{-\eps}\Bigl(1+(e^\eps-1)\mathbf{1}\{i\in [k]\}\Bigr).
\end{align}
With this convention, every active high set is nonempty.

\paragraph{SPR mechanisms and example.}
If $\tau:[m]\to[\ell]$ is a deterministic map and $Q\in\mathbb R_+^{k\times m}$, we write $\tau\circ Q\in\mathbb R_+^{k\times \ell}$ for the post-processed channel $(\tau\circ Q)(i,y):=\sum_{z:\tau(z)=y} Q(i,z)$. An \emph{LR-contiguous partition} of $[k]$ is a partition $\pi=(B_1,\dots,B_s)$ into nonempty intervals in the likelihood-ratio order.

\begin{definition}[Sort--partition--randomize mechanism]
    \label{def:spr}
    A \emph{sort--partition--randomize} (SPR) mechanism first sorts the input symbols as $x_1,\dots,x_k$ by likelihood ratio $P_0(x_i)/P_1(x_i)$, partitions this ordered alphabet into contiguous blocks $\pi=(B_1,\dots,B_s)$, and then applies $s$-ary randomized response (RR) to the block label:
    \begin{align}
        Q^\pi(j \mid x)=
        \begin{cases}
            \dfrac{e^\eps}{e^\eps+s-1}, & x\in B_j,\\[5pt]
            \dfrac{1}{e^\eps+s-1}, & x\notin B_j .
        \end{cases}
    \end{align}
    Equivalently, $Q^\pi$ is a staircase mechanism whose active high sets are exactly the likelihood-ratio-contiguous blocks $B_1,\dots,B_s$. When needed, we may also view $Q^\pi$ as a channel with a larger output alphabet by relabeling its $s$ active outputs and padding the remaining columns with zeros.
\end{definition}

\begin{example}
Figure~\ref{fig:llr-pattern-mapping} illustrates an SPR mechanism. Here $k=6$, the inputs are already ordered by likelihood ratio, and the partition has $s=3$ contiguous blocks. Each row has one high entry and two low entries, so after normalization the channel is exactly $3$-ary RR on the block label.
\end{example}
\begin{figure}[tbp]
  \centering
  \resizebox{0.97\linewidth}{!}{\input{figures/illustrative_example}}
  \caption{An LR-contiguous partition and the corresponding staircase pattern, with inputs labeled so that $r_1 \le r_2 \le \cdots \le r_6$ where $r_i = p_i/q_i$. After normalization, this becomes $3$-ary RR on the block label.}
  \label{fig:llr-pattern-mapping}
\end{figure}

\begin{remark}
The binary mechanism of \citet{Kairouz2016ExtremalMechanisms} is a special case with $s=2$. Their KL-optimal randomized response mechanism in the sufficiently low-privacy / large-$\eps$ regime is the special case in which every block is a singleton ($k$-ary RR).
\end{remark}

%% file: figures/illustrative_example.tex
\begin{tikzpicture}[scale=0.95, transform shape]

    \matrix (xlist) [
      matrix of math nodes,
      left delimiter={[},
      right delimiter={]},
      nodes={minimum width=9mm, minimum height=6mm, anchor=center},
      row sep=2.5mm
    ] at (3.5,1.2) {
      |[minimum height=12pt, inner sep=0pt]| {}\\[-2.5mm]
      x_1\\
      x_2\\
      x_3\\
      x_4\\
      x_5\\
      x_6\\[-2.5mm]
      |[minimum height=4pt, inner sep=0pt]| {}\\
    };

    \begin{scope}[on background layer]
      \node[fit=(xlist-2-1)(xlist-3-1), fill=blockblue, rounded corners=6pt, inner sep=4pt] {};
      \node[fit=(xlist-4-1)(xlist-6-1), fill=blockgreen, rounded corners=6pt, inner sep=4pt] {};
      \node[fit=(xlist-7-1)(xlist-7-1), fill=blockyellow, rounded corners=6pt, inner sep=4pt] {};
    \end{scope}

    \draw[decorate, decoration={brace, amplitude=6pt}]
      ($(xlist-2-1.east)+(0.35,0.22)$) -- ($(xlist-3-1.east)+(0.35,-0.22)$)
      node[midway, right=8pt] {$B_1=[1{:}2]$};

    \draw[decorate, decoration={brace, amplitude=6pt}]
      ($(xlist-4-1.east)+(0.35,0.22)$) -- ($(xlist-6-1.east)+(0.35,-0.22)$)
      node[midway, right=8pt, yshift=6pt] {$B_2=[3{:}5]$};

    \draw[decorate, decoration={brace, amplitude=6pt}]
      ($(xlist-7-1.east)+(0.35,0.22)$) -- ($(xlist-7-1.east)+(0.35,-0.22)$)
      node[midway, right=8pt] {$B_3=[6{:}6]$};

    \draw[flowarrow] (6.05,1.2) -- (7.85,1.2);
    \node[explabel] at (6.95,0.15)
      {staircase\\ patterns};

    \matrix (smat) [
      matrix of math nodes,
      left delimiter={[},
      right delimiter={]},
      nodes={minimum width=4.5mm, minimum height=6mm, anchor=center, font=\large},
      column sep=1.5mm,
      row sep=2.5mm
    ] at (9.7,1.2) {
      |[minimum height=4pt, inner sep=0pt]| {} & |[minimum height=4pt, inner sep=0pt]| {} & |[minimum height=4pt, inner sep=0pt]| {}\\[-2.5mm]
      e^{\varepsilon} & 1 & 1\\
      e^{\varepsilon} & 1 & 1\\
      1 & e^{\varepsilon} & 1\\
      1 & e^{\varepsilon} & 1\\
      1 & e^{\varepsilon} & 1\\
      1 & 1 & e^{\varepsilon}\\[-2.5mm]
      |[minimum height=4pt, inner sep=0pt]| {} & |[minimum height=4pt, inner sep=0pt]| {} & |[minimum height=4pt, inner sep=0pt]| {}\\
    };

    \node[above=0pt of smat-2-1] {$B_1$};
    \node[above=0pt of smat-2-2] {$B_2$};
    \node[above=0pt of smat-2-3] {$B_3$};

    \draw[flowarrow] (11.65,1.2) -- (13.45,1.2);
    \node[explabel] at (12.55,0.15)
      {apply\\ 3-ary RR\\ to block label};

    \matrix (rr) [
      matrix of math nodes,
      left delimiter={[},
      right delimiter={]},
      nodes={minimum width=9mm, minimum height=8mm, anchor=center, inner xsep=1pt},
      column sep=1mm,
      row sep=3mm,
      inner xsep=1pt
    ] at (16.4,1.2) {
      \dfrac{e^\varepsilon}{e^\varepsilon+2} & \dfrac{1}{e^\varepsilon+2} & \dfrac{1}{e^\varepsilon+2}\\
      \dfrac{1}{e^\varepsilon+2} & \dfrac{e^\varepsilon}{e^\varepsilon+2} & \dfrac{1}{e^\varepsilon+2}\\
      \dfrac{1}{e^\varepsilon+2} & \dfrac{1}{e^\varepsilon+2} & \dfrac{e^\varepsilon}{e^\varepsilon+2}\\
    };

    \node[above=3mm of rr-1-1] {$B_1$};
    \node[above=3mm of rr-1-2] {$B_2$};
    \node[above=3mm of rr-1-3] {$B_3$};

    \node[left=2mm of rr-1-1] {$B_1$};
    \node[left=2mm of rr-2-1] {$B_2$};
    \node[left=2mm of rr-3-1] {$B_3$};

\end{tikzpicture}

%% file: sections/03_main_results.tex
\section{Main results}
\label{sec:main-results-hub}

This section collects the headline statements of the paper. Section~\ref{sec:main-results-spr} states the structural characterization of optimal mechanisms; Section~\ref{sec:main-results-dp} states the resulting polynomial-time algorithm; Sections~\ref{sec:main-results-egamma} and \ref{sec:main-results-other} state consequences for $E_\gamma$-divergences and other utility objectives.

\subsection{SPR optimality theorem}
\label{sec:main-results-spr}

Our main structural result is that, after sorting the alphabet by likelihood ratio, every extreme point of the fixed-output joint range $\mathcal R_\ell(p,q)$ is attained by an SPR mechanism.

\begin{theorem}[Extreme points are sort--partition--randomize]
\label{thm:extreme-points-spr}
If $(u,v)\in\mathcal R_\ell(p,q)$ is an extreme point, then there exists an LR-contiguous partition $\pi=(B_1,\dots,B_s)$ of $[k]$ such that $(u,v)$ is induced by an $\ell$-output channel obtained from $Q^\pi$ by relabeling its active outputs and padding with zero columns. In particular, necessarily $s\le \ell$ and $s\le k$. Equivalently, every extreme point of $\mathcal R_\ell(p,q)$ is realized by a sort--partition--randomize mechanism.
\end{theorem}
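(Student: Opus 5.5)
We prove the theorem by linear-functional exposure. Since $\mathcal R_\ell(p,q)$ is the image of the compact convex polytope $\mathcal Q_{\eps,\ell}$ under a linear map, it is a polytope. Every extreme point $(u,v)$ is therefore exposed: there is a weight matrix $W\in\mathbb R^{2\times \ell}$, i.e.\ vectors $a,b\in\mathbb R^\ell$, such that $(u,v)$ is the unique maximizer of $\langle a,u\rangle+\langle b,v\rangle$ over $\mathcal R_\ell$. Writing $Q(i,y)$ for the entries, the objective equals
\begin{align}
  \sum_{y}\sum_i Q(i,y)\,\bigl(a_y p_i+b_y q_i\bigr) = \sum_y\sum_i Q(i,y)\, q_i\,(a_y r_i+b_y),
\end{align}
with the obvious modification when $q_i=0$. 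The column weight $g_y(i):=a_y p_i + b_y q_i$ is thus $q_i$ times an affine function of $r_i$. The task reduces to showing that the linear program $\max_{Q\in\mathcal Q_{\eps,\ell}}\sum_{i,y}Q(i,y)g_y(i)$ has an SPR optimizer. Since the maximizer in the range is unique, every optimal $Q$ then induces $(u,v)$, and in particular the SPR one does.

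\textbf{Step 1 (staircase vertices).} I would invoke the staircase structure of Kairouz et al. at the level of vertices of $\mathcal Q_{\eps,\ell}$: some optimal $Q$ is staircase, with columns $\theta_y(1+(e^\eps-1)\mathbf 1\{i\in H_y\})$. The row-sum constraints read $\sum_y\theta_y(1+(e^\eps-1)\mathbf 1\{i\in H_y\})=1$ for all $i$.

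\textbf{Step 2 (contiguity).} For a fixed column $y$ with fixed $\theta_y$, contributing to the objective means choosing $H_y$ to include symbols with large $g_y(i)$. Because $g_y(i)/q_i$ is affine, hence monotone, in $r_i$, the natural candidates are prefixes and suffixes. Row-sum constraints couple the columns, however, so a single column cannot simply be optimized alone. I would proceed by perturbation instead. Suppose two columns $y,y'$ have high sets that cross in the LR order, meaning there are $i<j<m$ with $i,m\in H_y$ and $j\in H_{y'}\setminus H_y$, or more generally the sets are not nested intervals. Then I would exhibit a feasible two-sided perturbation that moves mass between $Q(\cdot,y)$ and $Q(\cdot,y')$ on the offending symbols, preserving row sums and the LDP ratios to first order. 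The first-order change in the objective is $\pm$ a quantity that is linear in an affine function of $r$. Uniqueness of the exposed maximizer together with optimality forces both signs to be non-improving and the induced $(u,v)$ to stay unchanged. Iterating this exchange, as an uncrossing argument, shows that we may take all high sets to be LR-intervals, and further that they are pairwise disjoint.

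\textbf{Step 3 (partition and RR).} If the high sets are disjoint intervals, the row constraints force that every $i$ lies in exactly one $H_y$; otherwise uncovered rows have sum $\sum\theta_y\ne$ the sum for covered rows, unless the coverage is uniform. Coverage must be uniform because the row sum is $\sum_y\theta_y+(e^\eps-1)\sum_{y:i\in H_y}\theta_y$. This implies that $\sum_{y\ni i}\theta_y$ is constant in $i$. At a vertex, with a minimal number of active columns and after merging identical columns via post-processing, this gives a partition into blocks with equal $\theta_y=1/(e^\eps+s-1)$, which is exactly $Q^\pi$. The bounds $s\le\ell$ and $s\le k$ are then immediate.

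\textbf{Main obstacle.} The main obstacle is Step 2, the uncrossing and disjointness argument. Overlapping intervals, for example nested ones, are not excluded by a simple exchange, because the uniform-coverage constraint allows configurations such as $H_1=[k]$ together with a partition. I would handle these by showing that a column with $H_y=[k]$ is constant and can be split proportionally among the other columns without changing $(u,v)$. General nested or overlapping interval families, which are allowed by the constant-coverage condition, I would eliminate by writing the constant-coverage system as a convex combination of partition-type staircase channels. Extremality of $(u,v)$ then forces each term of this combination to induce the same pair.
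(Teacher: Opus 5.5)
Your exposure reformulation is sound (vertices of the polytope $\mathcal R_\ell(p,q)$ are exposed, and a uniqueness contradiction plays the role of the paper's midpoint argument). Step~1, however, cannot be imported. The staircase theorem of Kairouz et al.\ is about mechanisms with a free output alphabet: it is obtained by splitting columns into staircase pieces. Vertices of $\mathcal Q_{\varepsilon,\ell}$ need not be staircase. For example, with $c=e^\varepsilon$, $k=5$, $\ell=4$, the channel
\[
Q=\frac{1}{2c+3}\begin{bmatrix} c&c&1&2\\ c&1&c&2\\ 1&c&c&2\\ 1&1&1&2c\\ c&1&1&c+1\end{bmatrix}
\]
is row-stochastic and $\varepsilon$-LDP, and its last column takes three values $2<c+1<2c$ (up to scaling). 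It is a vertex. A direction $D$ usable with both signs must satisfy every tight constraint $Q(h,y)=cQ(l,y)$ with equality. This forces $D(\cdot,y)=\delta_yQ(\cdot,y)$ on rows $1$--$4$ for every $y$, and on row $5$ for $y\le 3$. The row sums of rows $1$--$4$ then give $Q_{1:4}\delta=0$ with $\det Q_{1:4}=-2(c-1)^3/(2c+3)^3\neq0$, and row $5$ forces $D=0$.

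So ``some optimal $Q\in\mathcal Q_{\varepsilon,\ell}$ is staircase'' has no justification short of the theorem itself. You must instead carry a refinement $\widetilde Q$ and a merging map $\tau$ throughout, which creates two obligations your plan does not meet. First, a perturbation of $\widetilde Q$ must stay visible after $\tau$; the paper refines with \emph{nested} high sets inside each merged group, so the pairwise non-nested columns of an obstruction land in distinct coarse outputs. Second, at the end you only know that $(u,v)$ is the image of $\tau\circ Q^\pi$. You therefore need the separate fact that nontrivial coarsenings of SPR channels are never extreme: $\tau\circ Q^\pi=\alpha R+(1-\alpha)U$, with $R$ randomized response on the coarser partition and $U$ input-independent. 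Merging identical columns in Step~3 does not cover this.

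The second gap is Step~2, which is the heart of the proof and would not work as sketched. By the same tightness reasoning, a non-constant staircase column can be perturbed in both signs only proportionally to itself. A local exchange of mass between the two offending columns is therefore feasible in one sign only, so your two-sided move does not exist. What works is to build two \emph{different} one-sided perturbations whose image increments are exact negatives, and this is where the strict likelihood-ratio order enters. In the zigzag case you take a move on rows $a,b$ and a move on rows $b,c$ scaled by $q_a(r_b-r_a)/(q_bp_c-q_cp_b)>0$. With exposure, optimality then forces both first-order changes to vanish, so the perturbed channel is also optimal, and uniqueness contradicts its nonzero image increment. There is no uncrossing iteration: infinitesimal moves never change high sets, so the argument has to be a direct contradiction.

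Moreover, your crossing condition does not fix the pattern of $y'$ on $\{i,m\}$. What row-stochasticity actually yields (rows $a,b,c$ have equal high-coverage) is that a $(101)$ column forces, on the same triple, either a $(010)$ column or both a $(110)$ and a $(011)$ column. The latter case needs a different construction. One transfers mass proportional to the $(101)$ column's profile into the $(011)$ or $(110)$ column and corrects with a local move, splitting cases on whether that column's output likelihood ratio is at least $r_b$.

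Your fallback for disjointness is correct and is the paper's layer-decomposition lemma: peel the constant-coverage interval system into partition layers and invoke extremality.
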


Since every $f$-divergence is convex in the induced pair $(M_0,M_1)$, the maximum over $\mathcal R_\ell(p,q)$ is attained at an extreme point. Theorem~\ref{thm:extreme-points-spr} therefore yields the following optimality statement.

\begin{corollary}[SPR optimality for $f$-divergences]
\label{cor:fdiv-spr-optimality}
For every $f$-divergence, the following hold.
\begin{enumerate}
    \item Over all finite-output $\eps$-LDP mechanisms, the optimum is attained by an SPR mechanism with at most $k$ active outputs.
    \item For any output budget $1\le \ell\le k$, $\max_{Q\in\mathcal Q_{\eps,\ell}} D_f(Q^\top p\|Q^\top q)$ is attained by an SPR mechanism with at most $\ell$ active outputs.
\end{enumerate}
\end{corollary}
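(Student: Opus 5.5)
The plan is to reduce the corollary to Theorem~\ref{thm:extreme-points-spr} by a standard convexity argument on the joint range. For a fixed budget $\ell$, I will first show that $\mathcal R_\ell(p,q)$ is a polytope. The set $\mathcal Q_{\eps,\ell}$ is cut out by finitely many linear constraints:
\begin{itemize}
\item nonnegativity of the entries;
\item rows summing to one;
\item the $\eps$-LDP inequalities $Q(x,y)\le e^\eps Q(x',y)$.
\end{itemize}
It is also bounded, so it is a polytope. The map $Q\mapsto(Q^\top p,Q^\top q)$ is linear, so its image $\mathcal R_\ell(p,q)$ is again a polytope. In particular, every point of $\mathcal R_\ell(p,q)$ is a finite convex combination of extreme points of $\mathcal R_\ell(p,q)$.

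Next I will use joint convexity of the objective. The map $(u,v)\mapsto D_f(u\|v)=\sum_y v_y f(u_y/v_y)$ is a sum of perspective functions of the convex function $f$. Hence it is jointly convex, possibly extended-valued, under the usual conventions
\[
0\cdot f(0/0)=0, \qquad 0\cdot f(a/0)=a\lim_{t\to\infty}f(t)/t .
\]
Take any $(u,v)\in\mathcal R_\ell(p,q)$ and write it as $(u,v)=\sum_j\lambda_j(u^j,v^j)$ with each $(u^j,v^j)$ an extreme point. Convexity then gives
\[
D_f(u\|v)\le\sum_j\lambda_j D_f(u^j\|v^j)\le\max_j D_f(u^j\|v^j).
\]
This inequality remains valid when some values are $+\infty$. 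So the supremum over $\mathcal R_\ell(p,q)$ equals the maximum over its finitely many extreme points, and in particular it is attained.

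By Theorem~\ref{thm:extreme-points-spr}, each extreme point is induced by a padded copy of some $Q^\pi$ with $s\le\min(\ell,k)$ blocks. Zero columns satisfy $M_0(y)=M_1(y)=0$, so they contribute $0$ to $D_f$. Removing the padding therefore leaves the divergence unchanged. This proves part 2: the maximum over $\mathcal Q_{\eps,\ell}$ is attained by an SPR mechanism with at most $\ell$ active outputs.

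For part 1, let $Q$ be any $\eps$-LDP mechanism with finitely many outputs, say $m$ of them. Then $Q\in\mathcal Q_{\eps,m}$, and applying the argument above with $\ell=m$ gives
\[
D_f(Q^\top p\|Q^\top q)\le D_f\bigl((Q^\pi)^\top p\,\big\|\,(Q^\pi)^\top q\bigr)
\]
for some LR-contiguous partition $\pi$ of $[k]$, which necessarily has at most $k$ blocks. There are only finitely many LR-contiguous partitions of $[k]$. The best one among them is therefore a well-defined SPR mechanism with at most $k$ active outputs, and it dominates every finite-output mechanism. It is itself feasible, so it attains the optimum.

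The only delicate point I anticipate is the handling of extended values and of zero columns in the convexity step. I will address it by stating the perspective conventions explicitly and noting that the Jensen-type inequality holds in $\mathbb R\cup\{+\infty\}$. All the genuine structural work is contained in Theorem~\ref{thm:extreme-points-spr}.
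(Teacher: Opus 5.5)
Your proposal is correct and follows the paper's proof essentially step for step: you write an arbitrary image point as a convex combination of extreme points of the polytope $\mathcal R_{\ell'}(p,q)$, use joint convexity of $D_f$ to find an extreme point that is at least as good, and invoke Theorem~\ref{thm:extreme-points-spr}, taking $\ell'=\ell$ for the budgeted case and $\ell'=m$ (arbitrary) for the unconstrained case. Your additional remarks make explicit some details that the paper leaves implicit: why the joint range is a polytope, that zero-padding does not change $D_f$, and that the set of LR-contiguous partitions is finite, which guarantees attainment in part~1.
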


\subsection{Exact dynamic program}
\label{sec:main-results-dp}

Corollary~\ref{cor:fdiv-spr-optimality} reduces the search for an optimal mechanism to a search over LR-contiguous partitions of $[k]$. We exploit the additive decomposability of $f$-divergences across blocks to obtain an exact dynamic program.

For $z\in\mathbb{R}_+^k$, let $\displaystyle \mu(z):=(q^\top z)\,f\!\left(\frac{p^\top z}{q^\top z}\right)$ be the column score associated with the chosen $f$-divergence. For $1\le a\le b\le k$, let
\begin{align}
    \mu[a:b]
    :=
    \mu\!\left(
        \bigl(1+(e^\varepsilon-1)\mathbf{1}\{a\le i\le b\}\bigr)_{i=1}^k
    \right).
\end{align}

\begin{proposition}[Dynamic program with an output budget]
\label{prop:dp-contiguous-partition}
Fix $1\le \ell\le k$. For $1\le s\le \ell$ and $s\le i\le k$, let $F[s,i]$ be the maximum raw score over all partitions of $[1:i]$ into exactly $s$ nonempty contiguous blocks. Then $F[1,i]=\mu[1:i]$, and for $s\ge 2$,
\begin{align}
    F[s,i]=\max_{s-1\le t<i}\Bigl\{F[s-1,t]+\mu[t+1:i]\Bigr\}.
\end{align}
The optimal $\ell$-output utility is $\displaystyle \max_{1\le s\le \ell}\frac{F[s,k]}{e^\varepsilon+s-1}$.
\end{proposition}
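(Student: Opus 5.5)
The plan is to combine Corollary~\ref{cor:fdiv-spr-optimality} with the positive homogeneity of the column score $\mu$, and then argue that the recursion is exact by the standard optimal-substructure argument for contiguous partitions.

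\textbf{Step 1 (reduce to a search over partitions).} By Corollary~\ref{cor:fdiv-spr-optimality}(2), the $\ell$-output optimum is attained by some SPR mechanism $Q^\pi$ with $s\le \ell$ active outputs. Here $\pi=(B_1,\dots,B_s)$ is an LR-contiguous partition of $[k]$. Padding with zero columns does not change $D_f$, because zero columns contribute nothing; we use the convention $0\cdot f(0/0)=0$. Conversely, every contiguous partition with $s\le\ell$ blocks gives a feasible channel in $\mathcal Q_{\eps,\ell}$. Hence the optimal value equals $\max_{s\le \ell}\max_{\pi:\,|\pi|=s} D_f\bigl((Q^\pi)^\top p\,\|\,(Q^\pi)^\top q\bigr)$.

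\textbf{Step 2 (express the SPR utility through $\mu$).} Column $j$ of $Q^\pi$ equals $z^{(j)}/(e^\eps+s-1)$, where $z^{(j)}_i=1+(e^\eps-1)\mathbf 1\{i\in B_j\}$. The map $\mu$ is positively homogeneous of degree one, since $\mu(cz)=c\,(q^\top z) f(p^\top z/q^\top z)$ for $c>0$. Therefore the utility equals $\frac{1}{e^\eps+s-1}\sum_{j=1}^s \mu[a_j:b_j]$, where $B_j=[a_j:b_j]$. Note that $q^\top z^{(j)}\ge \sum_i q_i>0$, so $\mu$ is well defined. The factor $1/(e^\eps+s-1)$ depends only on $s$. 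So for each fixed $s$ it suffices to maximize the raw score $\sum_j \mu[B_j]$ over contiguous partitions of $[k]$ into exactly $s$ blocks. That maximum is $F[s,k]$ by definition, which gives the final formula.

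\textbf{Step 3 (prove the recursion by induction on $s$).} The base case $s=1$ is immediate: the only partition of $[1:i]$ into one block is $[1:i]$ itself.

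For $s\ge2$, take any partition of $[1:i]$ into $s$ contiguous blocks. Its last block is $[t+1:i]$ for some $t$, and its first $s-1$ blocks partition $[1:t]$. This forces $t\ge s-1$ and $t<i$. The raw score splits additively into the score of the first $s-1$ blocks plus $\mu[t+1:i]$. The first part is at most $F[s-1,t]$, which gives "$\le$" in the recursion.

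For "$\ge$", concatenate an optimal $(s-1)$-partition of $[1:t]$ with the block $[t+1:i]$; this is a valid $s$-block partition of $[1:i]$.

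A subtle point is that $\mu[a:b]$ is computed with the full length-$k$ indicator vector, not a vector restricted to $[1:i]$. This is exactly what makes the score additive across blocks independently of the rest of the partition, so the prefix subproblem is well posed.

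\textbf{Main obstacle.} The only nontrivial ingredient is Step 1, and it is supplied entirely by Corollary~\ref{cor:fdiv-spr-optimality}. Beyond that, the main care needed is in the bookkeeping: the normalization depends on $s$, so the maximization must be done separately for each exact block count before dividing by $e^\eps+s-1$.
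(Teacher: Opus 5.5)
Your proposal is correct and follows essentially the same route as the paper: reduce to LR-contiguous partitions via Corollary~\ref{cor:fdiv-spr-optimality}, write the SPR utility as the block-sum of $\mu$ divided by the $s$-dependent constant $e^\varepsilon+s-1$, and prove the recurrence by optimal substructure on the last block $[t+1:i]$. You also make explicit details the paper leaves implicit: the degree-one homogeneity of $\mu$, both directions of the recurrence inequality, the admissible range of $t$, and the positivity of $q^\top z$.
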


\begin{corollary}[Runtime]
\label{cor:dp-runtime}
For an output budget $1\le \ell\le k$, the optimal value and an optimal SPR mechanism can be computed in $O(\ell k^2)$ time and $O(\ell k)$ space. In particular, the unconstrained-output problem is obtained by taking $\ell=k$, giving $O(k^3)$ time and $O(k^2)$ space.
\end{corollary}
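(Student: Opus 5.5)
The statement to prove is Corollary~\ref{cor:dp-runtime}. The plan is to combine Proposition~\ref{prop:dp-contiguous-partition} with a precomputation that makes every block score $\mu[a:b]$ available in $O(1)$ time. Then we fill the table $F$ row by row and recover an optimal partition by traceback.

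\emph{Precomputation.} For the all-low vector $\mathbf 1_k$ the column score is $\mu(\mathbf 1_k)$. For the block indicator we have
\begin{align}
p^\top z = 1+(e^\eps-1)\,(P_b-P_{a-1}), \qquad q^\top z = 1+(e^\eps-1)\,(R_b-R_{a-1}),
\end{align}
where $P_i:=\sum_{j\le i}p_j$ and $R_i:=\sum_{j\le i}q_j$ are prefix sums. These prefix sums are computed once in $O(k)$ time. Each $\mu[a:b]$ is then a single evaluation of $f$, which we treat as $O(1)$ (the standard arithmetic model), with the usual conventions $0\cdot f(0/0)=0$ and the $f'(\infty)$ convention when $q^\top z=0$; since $e^\eps>1$, $q^\top z\ge 1>0$ anyway. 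So no $\Theta(k^2)$ table of scores needs to be stored; block scores are computed on the fly.

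\emph{Table filling.} Row $s=1$ costs $O(k)$. For each $s\in\{2,\dots,\ell\}$ and each $i\in\{s,\dots,k\}$, the recurrence maximizes over at most $i-s+1\le k$ values of $t$, each in $O(1)$. This gives $O(k^2)$ per row and $O(\ell k^2)$ in total. Alongside $F[s,i]$ we store the maximizing index $\mathrm{arg}[s,i]$. The table and the pointers use $O(\ell k)$ space.

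\emph{Output and traceback.} The final value is $\max_{s\le\ell}F[s,k]/(e^\eps+s-1)$, computed in $O(\ell)$. The division by $e^\eps+s-1$ is correct because $Q^\pi$ has column scale $\theta=1/(e^\eps+s-1)$, and $\mu$ is positively homogeneous of degree one in $z$, since it is a perspective function. For the maximizing $s^\star$, we follow $\mathrm{arg}$ pointers from $(s^\star,k)$ back to $s=1$. This recovers the block boundaries in $O(\ell)$ time and yields the partition $\pi$ and the mechanism $Q^\pi$. Writing out $Q^\pi$ explicitly as a $k\times s^\star$ matrix costs $O(k\ell)$, which is within the budget.

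Optimality of the resulting mechanism follows from Corollary~\ref{cor:fdiv-spr-optimality}(2) together with the correctness of the recurrence in Proposition~\ref{prop:dp-contiguous-partition}. Setting $\ell=k$ covers the unconstrained case, because by Corollary~\ref{cor:fdiv-spr-optimality}(1) at most $k$ outputs are needed; this gives $O(k^3)$ time and $O(k^2)$ space. The only point that needs care is justifying $O(1)$ access to $\mu[a:b]$ through the prefix sums, rather than spending $O(k)$ on each inner product, which would inflate the runtime to $O(\ell k^3)$.
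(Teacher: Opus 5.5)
Your proposal is correct and takes essentially the same route as the paper: prefix sums give $O(1)$ interval scores, the $O(\ell k)$-state table with $O(k)$ split points per state gives $O(\ell k^2)$ time and $O(\ell k)$ space, and a traceback recovers the partition. The differences are cosmetic. You store argmax pointers and trace back in $O(\ell)$, whereas the paper recomputes each maximization from the stored $F$ table in $O(\ell k)$. You also leave out the initial $O(k\log k)$ likelihood-ratio sort, which the paper explicitly notes is dominated by the $O(\ell k^2)$ fill.
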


The proofs of Proposition~\ref{prop:dp-contiguous-partition} and Corollary~\ref{cor:dp-runtime}, together with pseudocode for the algorithm, are given in Section~\ref{sec:dynamic-program}.

\subsection{Closed-form optimal mechanism for \texorpdfstring{$E_\gamma$-divergences}{E-gamma-divergences}}
\label{sec:main-results-egamma}

For $\gamma\ge 1$, define the $E_\gamma$- (or hockey-stick) divergence
\begin{align}
    E_\gamma(P_0 \| P_1) := \sup_{A\subseteq\X} \bigl( P_0(A) - \gamma P_1(A) \bigr),
\end{align}
which corresponds to the convex function $f_\gamma(t) = (t-\gamma)_+$, with $\gamma=1$ recovering total variation. The $E_\gamma$-divergence has a direct testing interpretation. If a test decides $P_0$ on an event $A\subseteq\Y$, then $M_0(A^c)$ is its missed-detection probability and $M_1(A)$ is its false-alarm probability. Therefore
\begin{align}
    E_\gamma(M_0\|M_1)
    =
    1-\inf_{A\subseteq\Y}\{M_0(A^c)+\gamma M_1(A)\}.
\end{align}
Maximizing $E_\gamma$ is equivalent to minimizing this weighted testing risk: $\gamma=1$ gives total variation, while $\gamma>1$ captures asymmetric costs. Such asymmetry is common in privacy-sensitive decisions (rare-event detection, medical screening, fraud or abuse detection, and content moderation) where false alarms and missed detections need not have comparable costs.

For the $E_\gamma$-divergences, the SPR dynamic program collapses to a closed form.

\begin{definition}[Generalized binary mechanism]
\label{def:generalized-binary-mechanism}
With $S_\gamma := \{ x \in \X : P_0(x) \ge \gamma P_1(x) \}$, the \emph{generalized binary mechanism} is the binary-output channel $Q_\gamma$ defined by
\begin{align}
    Q_\gamma(0 \mid x) &= \begin{cases}
        \dfrac{e^\eps}{1 + e^\eps}, & x \in S_\gamma, \\[1.6ex]
        \dfrac{1}{1 + e^\eps},      & x \notin S_\gamma,
    \end{cases}
    & Q_\gamma(1 \mid x) &= 1 - Q_\gamma(0 \mid x).
\end{align}
\end{definition}

\begin{theorem}[Optimal $E_\gamma$ mechanism]
\label{thm:optimal-egamma}
For every $\gamma \ge 1$ and every $\eps$-LDP mechanism $Q$ with induced marginals $M_0, M_1$,
\begin{align}
    \label{eq:upper_bound_e_gamma}
    E_\gamma(M_0 \| M_1)
    \le
    \left(
        \frac{e^\eps - 1}{e^\eps + 1} E_\gamma(P_0 \| P_1)
        + \frac{1 - \gamma}{e^\eps + 1}
    \right)_+ .
\end{align}
The generalized binary mechanism $Q_\gamma$ attains \eqref{eq:upper_bound_e_gamma} with equality, and therefore maximizes $E_\gamma(M_0 \| M_1)$ over $\mathcal Q_\eps$.
\end{theorem}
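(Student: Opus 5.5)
We will use Corollary~\ref{cor:fdiv-spr-optimality}, which reduces the problem to SPR mechanisms, and then evaluate $E_\gamma$ on an arbitrary SPR mechanism directly. Let $\pi=(B_1,\dots,B_s)$ be an LR-contiguous partition and write $c:=e^\eps+s-1$. The induced marginals are
\[
M_0(j)=\frac{1+(e^\eps-1)p(B_j)}{c},\qquad M_1(j)=\frac{1+(e^\eps-1)q(B_j)}{c}.
\]
Since $E_\gamma(M_0\|M_1)=\sum_j (M_0(j)-\gamma M_1(j))_+$, we obtain
\[
E_\gamma(M_0\|M_1)=\frac{1}{c}\sum_{j}\Bigl((e^\eps-1)\bigl(p(B_j)-\gamma q(B_j)\bigr)+1-\gamma\Bigr)_+ .
\]
It therefore suffices to bound this expression by the right-hand side of \eqref{eq:upper_bound_e_gamma}, uniformly over $s$ and $\pi$.

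\textbf{Upper bound.} Let $J$ be the set of blocks with a positive term, and set $m:=|J|$. If $m=0$ the value is $0$ and the bound is trivial. Otherwise the sum equals
\[
\frac{(e^\eps-1)\bigl(p(A)-\gamma q(A)\bigr)+m(1-\gamma)}{e^\eps+s-1},
\]
where $A=\cup_{j\in J}B_j$. We have $p(A)-\gamma q(A)\le E:=E_\gamma(P_0\|P_1)$, and since $\gamma\ge1$, also $m(1-\gamma)\le 1-\gamma$. This gives the bound
\[
\frac{(e^\eps-1)E+1-\gamma}{e^\eps+s-1}.
\]
When the numerator is nonnegative, this is at most the same numerator divided by $e^\eps+1$, because $s\ge2$. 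The case $s=1$ gives value $0$. When the numerator is negative, every positive-part term must actually be controlled by $0$, which is consistent since the total is nonnegative anyway. So the value is at most $\bigl(((e^\eps-1)E+1-\gamma)/(e^\eps+1)\bigr)_+$.

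One subtlety needs care: in the case $m\ge1$ the argument above must not produce a positive value when the numerator is negative. It does not, because the displayed sum with $m\ge1$ is already a lower-bounded quantity that is itself at most the negative numerator divided by $c$, which is a contradiction. Hence $m=0$ in that case. By Corollary~\ref{cor:fdiv-spr-optimality}, the maximum over all of $\mathcal Q_\eps$ (with $\ell=k$) is attained by an SPR mechanism, so the bound holds for every $Q$.

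\textbf{Attainment.} Apply the formula to $Q_\gamma$, which has $s=2$ with $B_1=S_\gamma$ and $B_2=S_\gamma^c$. $S_\gamma$ is an upper set in the likelihood-ratio order, hence contiguous, though this is not actually needed for the computation. Then $p(S_\gamma)-\gamma q(S_\gamma)=E$. The block $S_\gamma^c$ has $p-\gamma q\le 0$, and its term $(e^\eps-1)(\cdot)+1-\gamma$ is $\le 0$, so it contributes nothing. The value is therefore $((e^\eps-1)E+1-\gamma)_+/(e^\eps+1)$, which matches the bound.

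The main obstacle is the reduction step. It relies on Corollary~\ref{cor:fdiv-spr-optimality}, with $f_\gamma$ being convex and $f_\gamma(1)=0$ for $\gamma\ge1$; everything after that is elementary bookkeeping. If a direct proof were preferred, the alternative is the staircase decomposition: write $M_0(y)-\gamma M_1(y)=\theta_y\bigl[(e^\eps-1)(p(H_y)-\gamma q(H_y))+1-\gamma\bigr]$ and use $\sum_y\theta_y(1+(e^\eps-1)\mathbf 1\{i\in H_y\})=1$. Bounding $\sum_{y\in J}\theta_y$ and the weighted sum then requires a more delicate averaging argument, which we avoid by invoking the SPR theorem.
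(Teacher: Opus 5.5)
Your proof is correct, and it is essentially the paper's argument. The attainment computation for $Q_\gamma$ matches the paper's proof. Your upper bound via Corollary~\ref{cor:fdiv-spr-optimality} is the same per-block argument the paper gives in its appendix corollary recovering the bound from SPR optimality: sum the positive parts over blocks, use $m(1-\gamma)\le 1-\gamma$, then use $e^\eps+s-1\ge e^\eps+1$.

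The one difference is that the paper's formal proof of the theorem takes the upper bound from \citet{Zamanlooy2024E_gammaMixing} rather than from the SPR theorem. Your contradiction argument for the negative-numerator case is valid but convoluted. The paper avoids it by factoring out $e^\eps-1$ and bounding the sum of positive parts directly by $\bigl(E_\gamma(P_0\|P_1)-\frac{\gamma-1}{e^\eps-1}\bigr)_+$.
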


In the non-private problem, the event maximizing $P_0(A)-\gamma P_1(A)$ is the threshold set $S_\gamma=\{x\in\X:P_0(x)\ge\gamma P_1(x)\}$, obtained by selecting exactly the points where $P_0(x)-\gamma P_1(x)\ge 0$. Theorem~\ref{thm:optimal-egamma} states that privacy does not change this threshold: the optimal private mechanism applies binary randomized response to the bit $\mathbf 1\{X\in S_\gamma\}$. Notably, $S_\gamma$ does not depend on $\varepsilon$: privacy leaves the classical likelihood-ratio threshold unchanged and privatizes only the resulting bit. This captures a private Neyman--Pearson lemma for the weighted-risk formulation. Since hockey-stick divergences also underlie privacy profiles and worst-case LDP contraction bounds~\cite{Balle2020PrivacyProfiles,Asoodeh2021EGammaContraction}, this result identifies, for the fixed pair $(P_0,P_1)$, the $\eps$-LDP channel that preserves the largest possible $E_\gamma$-separation.

\citet{Zamanlooy2024E_gammaMixing} (Theorem~3) established the upper bound \eqref{eq:upper_bound_e_gamma}; Theorem~\ref{thm:optimal-egamma} shows that the generalized binary mechanism $Q_\gamma$ attains it. The argument is short: under $Q_\gamma$, the contribution of the output $0$ to $E_\gamma(M_0\|M_1)$ is $M_0(0)-\gamma M_1(0)=\sum_x(P_0(x)-\gamma P_1(x))Q_\gamma(0\mid x)$. Splitting this sum over $S_\gamma$ and $S_\gamma^c$ and using the Neyman--Pearson identity $E_\gamma(P_0\|P_1)=P_0(S_\gamma)-\gamma P_1(S_\gamma)$ yields exactly the right-hand side of \eqref{eq:upper_bound_e_gamma}. The full proof is given in Appendix~\ref{sec:proof-optimal-egamma}; the same appendix shows that the upper bound \eqref{eq:upper_bound_e_gamma} can also be recovered directly from SPR optimality, by analyzing the contribution of each block of an LR-contiguous partition.

\subsection{Other objectives: R\'enyi divergences and \texorpdfstring{$\ell_r$ distance}{l-r distance}}
\label{sec:main-results-other}

SPR optimality applies to any objective maximized at extreme points of the joint-range polytope; the $O(\ell k^2)$ dynamic program, however, requires an additive decomposition over contiguous blocks, as $f$-divergences have. R\'enyi divergences fit the same framework via a monotone reduction to an $f$-divergence; for $\ell_r$ distance, the optimum coincides with the TV-optimal binary SPR mechanism.

\begin{proposition}[SPR optimality for R\'enyi divergences]
\label{prop:renyi_divergence_optimality}
For every fixed finite order $\alpha>0$, the SPR optimality result and the $O(\ell k^2)$ dynamic program apply to maximization of $D_\alpha$.
\end{proposition}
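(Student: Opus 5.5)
We reduce R\'enyi divergence maximization to an $f$-divergence maximization through a monotone transformation, and then invoke Theorem~\ref{thm:extreme-points-spr}, Corollary~\ref{cor:fdiv-spr-optimality} and Proposition~\ref{prop:dp-contiguous-partition}. For $\alpha\neq 1$ recall
\begin{align}
    D_\alpha(M_0\|M_1)=\frac{1}{\alpha-1}\log\sum_{y} M_0(y)^\alpha M_1(y)^{1-\alpha}
    =\frac{1}{\alpha-1}\log\Bigl(1+\mathrm{sgn}(\alpha-1)\,D_{f_\alpha}(M_0\|M_1)\Bigr),
\end{align}
where $f_\alpha(t):=\mathrm{sgn}(\alpha-1)\,(t^\alpha-1)$. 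This function is convex with $f_\alpha(1)=0$: for $\alpha>1$ it equals $t^\alpha-1$, which is convex, and for $0<\alpha<1$ it equals $1-t^\alpha$, which is convex because $t^\alpha$ is concave. The map $g_\alpha(d)=\frac{1}{\alpha-1}\log(1+\mathrm{sgn}(\alpha-1)d)$ is strictly increasing in $d$ on its domain in both cases. For $\alpha>1$, $1/(\alpha-1)>0$ and $\log(1+d)$ is increasing. For $\alpha<1$, both $1/(\alpha-1)<0$ and $\log(1-d)$ is decreasing. For $\alpha=1$, $D_1$ is the KL divergence, which is already an $f$-divergence.

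The first step is to handle degenerate cases. When $\alpha>1$, $D_\alpha$ or $D_{f_\alpha}$ may equal $+\infty$ if $M_1(y)=0<M_0(y)$. Because $\eps<\infty$, every active column of an $\eps$-LDP channel is strictly positive on all inputs, so $M_1(y)>0$ whenever $M_0(y)>0$, and every quantity is finite. If $q$ has zero entries after the reduction, this still holds because $q^\top z>0$ for any strictly positive column $z$. We use the convention $0\cdot f(0/0)=0$ for inactive columns, consistent with how $\mu$ is defined.

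The second step is the transfer of optimality. Maximizing $D_\alpha$ over $\mathcal Q_{\eps,\ell}$ is equivalent to maximizing $D_{f_\alpha}(Q^\top p\|Q^\top q)$, since $g_\alpha$ is strictly monotone increasing. Corollary~\ref{cor:fdiv-spr-optimality} applied to $f_\alpha$ therefore gives an SPR maximizer with at most $\ell$ outputs (respectively at most $k$ outputs in the unconstrained case), and this mechanism also maximizes $D_\alpha$. The dynamic program of Proposition~\ref{prop:dp-contiguous-partition}, run with the column score $\mu$ built from $f_\alpha$, returns the optimal $D_{f_\alpha}$ value $d^\star$ together with an optimal partition. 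The optimal R\'enyi value is then $g_\alpha(d^\star)$, at $O(1)$ extra cost, so the $O(\ell k^2)$ runtime of Corollary~\ref{cor:dp-runtime} carries over.

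The main point requiring care is the sign bookkeeping for $0<\alpha<1$. There the objective $\sum_y M_0^\alpha M_1^{1-\alpha}$ is jointly concave rather than convex, so $D_\alpha$ is not convex in $(M_0,M_1)$ and one cannot argue directly at extreme points. The reduction avoids this: the quantity $1-\sum_y M_0^\alpha M_1^{1-\alpha}$ is a genuine convex $f$-divergence, and maximizing it is what maximizes $D_\alpha$. We also check that $1-D_{f_\alpha}>0$, which holds because the Hellinger-type sum is positive whenever the supports overlap, and they always do under LDP. This keeps the logarithm well defined throughout.
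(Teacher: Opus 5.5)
Your proof is correct and follows the paper's argument: for $\alpha\neq 1$ you pass through a strictly increasing map to the $f$-divergence with $f(t)=t^\alpha-1$ ($\alpha>1$) or $f(t)=1-t^\alpha$ ($0<\alpha<1$), and you treat $\alpha=1$ as KL; your finiteness and positivity checks under $\eps$-LDP are a useful addition that the paper leaves implicit. One side remark of yours is wrong, though it does not affect the proof: for $0<\alpha<1$, $D_\alpha$ is in fact jointly convex, being $-\tfrac{1}{1-\alpha}\log$ of a positive jointly concave function, so there the reduction is needed for the additive block decomposition used by the dynamic program, not to restore convexity.
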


\begin{proof}
The case $\alpha=1$ is KL divergence, an $f$-divergence. For $\alpha\neq 1$,
\begin{align}
    D_\alpha(u\|v) = \frac{1}{\alpha-1}\log\sum_y u_y^\alpha v_y^{1-\alpha},
\end{align}
so maximizing $D_\alpha$ is equivalent to maximizing the $f$-divergence with $f(t)=t^\alpha-1$ for $\alpha>1$ and $f(t)=1-t^\alpha$ for $0<\alpha<1$; both are convex with $f(1)=0$.
\end{proof}

\begin{proposition}[Optimal mechanism for $\ell_r$ distance is binary]
\label{prop:optimality_l_p}
For any $1\le r\le\infty$ and any $\eps$-LDP channel $Q$ with at least two outputs,
\begin{align}
    \|Q^\top p-Q^\top q\|_r \le 2^{1/r}\tanh(\eps/2)\,d_{\rm TV}(p,q), \qquad 2^{1/\infty}:=1.
\end{align}
The maximum is achieved by binary randomized response applied to $S:=\{i\in[k]:p_i\ge q_i\}$.
\end{proposition}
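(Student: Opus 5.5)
The plan is to reduce to SPR mechanisms via Theorem~\ref{thm:extreme-points-spr}, and then bound the $\ell_r$ norm directly on that structured class. Fix an $\eps$-LDP channel $Q$ with $\ell\ge 2$ outputs. As explained in Section~\ref{sec:setup-and-spr}, we may pass to the reduced, likelihood-ratio-sorted alphabet $[k]$: the induced pair $(Q^\top p, Q^\top q)$ is unchanged, and so is $d_{\rm TV}(p,q)$, because merging symbols with equal likelihood ratio preserves the positive part of $p-q$.

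\textbf{Reduction to SPR.} The map $(u,v)\mapsto \|u-v\|_r$ is convex, and $\mathcal R_\ell(p,q)$ is a polytope (the image of the polytope $\mathcal Q_{\eps,\ell}$ under a linear map). Hence the maximum of this map over $\mathcal R_\ell(p,q)$ is attained at an extreme point. By Theorem~\ref{thm:extreme-points-spr}, that extreme point is induced by some $Q^\pi$ for an LR-contiguous partition $\pi=(B_1,\dots,B_s)$, padded with zero columns; the padding contributes nothing to the norm. It therefore suffices to prove the bound for $Q^\pi$ for every $s\ge 1$.

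\textbf{Computation on SPR.} Write $w=p-q$ and $T:=d_{\rm TV}(p,q)=\sum_i (w_i)_+$. Since $\sum_i w_i=0$, the $j$-th output difference of $Q^\pi$ is
\begin{align}
 (Q^{\pi\top}p-Q^{\pi\top}q)_j=\frac{1}{e^\eps+s-1}\Bigl(\sum_i w_i+(e^\eps-1)\,w(B_j)\Bigr)=\frac{e^\eps-1}{e^\eps+s-1}\,a_j,
\end{align}
where $a_j:=w(B_j)$. The case $s=1$ gives $a_1=0$, so the norm is zero. For $s\ge 2$ the prefactor is at most $\frac{e^\eps-1}{e^\eps+1}=\tanh(\eps/2)$.

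\textbf{Bounding $\|a\|_r$.} The vector $a$ satisfies $\sum_j a_j=0$ and $\|a\|_1\le\|w\|_1=2T$. Consider the set
\begin{align}
 \Bigl\{a\in\mathbb R^s:\ \textstyle\sum_j a_j=0,\ \|a\|_1\le 2T\Bigr\}.
\end{align}
This is a polytope whose extreme points are the vectors $T(e_i-e_j)$ with $i\ne j$. Indeed, any vector in it splits as its positive part minus its negative part, each of mass at most $T$, and so it is a convex combination of such differences. By convexity of $\|\cdot\|_r$, we get $\|a\|_r\le \|T(e_i-e_j)\|_r=2^{1/r}T$, with the convention $2^{1/\infty}=1$. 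Combining this with the prefactor bound gives the stated inequality.

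\textbf{Achievability.} For binary randomized response on $S=\{i:p_i\ge q_i\}$ (an LR-contiguous partition with $s=2$, or its lift to the original alphabet), we have $a=(T,-T)$ and prefactor exactly $\tanh(\eps/2)$. So the bound holds with equality, and $\ell\ge 2$ is needed for this mechanism to be feasible.

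The main point requiring care is the reduction step. Theorem~\ref{thm:extreme-points-spr} is stated for a fixed output budget $\ell$, so for a given $Q$ I would apply it with $\ell=|\Y|$ and check that convex objectives other than $f$-divergences are allowed; the theorem concerns extreme points of $\mathcal R_\ell(p,q)$ and makes no reference to the objective, so this should go through.
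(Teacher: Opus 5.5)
Your proof is correct, but it takes a different route from the paper. The paper does not use the SPR structure for this proposition. It imports the known total-variation contraction bound $d_{\rm TV}(Q^\top p,Q^\top q)\le\tanh(\eps/2)\,d_{\rm TV}(p,q)$ for $\eps$-LDP channels as a black box. It then splits $\Delta=Q^\top p-Q^\top q$ into its positive and negative parts, each of mass $m$, which gives $\|\Delta\|_r^r\le 2m^r$.

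You instead use the fact that Theorem~\ref{thm:extreme-points-spr} concerns extreme points of $\mathcal R_\ell(p,q)$, so it applies to any convex objective, as the paper itself notes. This reduces the problem to SPR mechanisms, where you compute the output difference exactly as $\frac{e^\eps-1}{e^\eps+s-1}\,w(B_j)$. You then apply the same mass-split idea, via your extreme-point argument, to the block-sum vector $a$ rather than to $\Delta$.

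Your route makes the result self-contained within the paper's framework. At $r=1$ it re-derives the $\tanh(\eps/2)$ contraction coefficient from SPR optimality, in parallel with the paper's recovery of the $E_\gamma$ bound in Appendix~\ref{sec:proof-optimal-egamma}. It also shows why extra outputs hurt, since the prefactor decreases in $s$. The cost is dependence on the full extreme-point machinery and on the reduction to a strictly ordered alphabet. The paper's argument takes a few lines and applies directly to any channel.

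One small detail: when the positive mass of $a$ is $m<T$, your convex-combination claim needs the zero vector written as $\tfrac12 T(e_i-e_j)+\tfrac12 T(e_j-e_i)$, which is available because $s\ge 2$. Alternatively, use $\|a\|_r^r=\|a^+\|_r^r+\|a^-\|_r^r\le 2m^r$ directly. Also, your computation never uses LR-contiguity; it only needs extreme points to come from randomized response on some partition.
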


\begin{proof}
Let $\Delta := Q^\top p-Q^\top q$ and $m:=d_{\rm TV}(Q^\top p,Q^\top q)$. By the total-variation contraction for $\eps$-LDP channels,
\begin{align}
    m\le \tanh(\eps/2)\,d_{\rm TV}(p,q).
\end{align}
The positive entries of $\Delta$ sum to $m$, and the negative entries have absolute values summing to $m$. Hence, for $1\le r<\infty$,
\begin{align}
    \|\Delta\|_r^r\le m^r+m^r=2m^r,
\end{align}
therefore,
\begin{align}
    \|\Delta\|_r
    \le 2^{1/r}\tanh(\eps/2)\,d_{\rm TV}(p,q).
\end{align}
For $r=\infty$, the same argument gives
\begin{align}
    \|\Delta\|_\infty
    \le \tanh(\eps/2)\,d_{\rm TV}(p,q).
\end{align}
For tightness, let $S=\{i\in[k]:p_i\ge q_i\}$ and apply binary randomized response to $\mathbf 1\{i\in S\}$. Since $p(S)-q(S)=d_{\rm TV}(p,q)$, the output difference vector is $(a,-a)$ where $a:=\tanh(\eps/2)\,d_{\rm TV}(p,q)$. Its $\ell_r$ norm is $2^{1/r}a$, with $2^{1/\infty}:=1$.
\end{proof}

\subsection{Numerical illustration}
\label{sec:experiments-main}

Figure~\ref{fig:kl-eps-two-panel} illustrates SPR optimality on the KL utility for $\eps$-LDP mechanisms. We compare the SPR dynamic program (Algorithm~\ref{alg:optimal-spr-dp}) against the closed-form binary mechanism, $k$-ary randomized response, and (where tractable) the Kairouz--Oh--Viswanath staircase linear program~\cite{Kairouz2016ExtremalMechanisms}, on $T=100$ Dirichlet$(\mathbf 1_k)$ pairs $(P_0,P_1)$ at each $\eps$ on a uniform grid in $[0,10]$. For $k=6$ the LP is tractable and the SPR DP curve coincides with it to numerical precision ($\le 6.1\times 10^{-14}$ across the $T\times 101$ grid), validating Theorem~\ref{thm:extreme-points-spr} empirically. For $k=100$ the LP has $2^{100}\approx 1.27\times 10^{30}$ variables and is omitted, while the SPR DP completes the entire sweep in under $20$ seconds on a laptop. The binary mechanism plateaus and $k$-ary RR is uninformative for small $\eps$; the SPR optimum interpolates between them across the full privacy range. Full setup, bootstrap confidence bands, and runtime details are reported in Appendix~\ref{sec:experiments}.

%% file: sections/04_geometry_of_spr_optimality.tex
\section{Geometry of SPR optimality}
\label{sec:geometry-of-spr}

This section assembles the geometric ingredients behind the SPR optimality theorem (Theorem~\ref{thm:extreme-points-spr}) and its $f$-divergence corollary (Corollary~\ref{cor:fdiv-spr-optimality}). The argument has three steps, each carried out by one proposition: every extreme image point can be realized by a staircase channel whose active columns are LR-contiguous (\S\ref{sec:geom-extreme-refinement}); every such staircase channel lies in the convex hull of SPR mechanisms (\S\ref{sec:geom-convex-hull}); and deterministically merging active outputs of an SPR mechanism never produces a new extreme point (\S\ref{sec:geom-coarsening}). Section~\ref{sec:geom-proof} chains these three propositions to prove Theorem~\ref{thm:extreme-points-spr} and Corollary~\ref{cor:fdiv-spr-optimality}. Figure~\ref{fig:joint-range-k5-eps2} previews the resulting polytope for a representative $(p,q)$ at $k=5$, $\ell=2$, $\varepsilon=2$.

\begin{figure}[tbp]
  \centering
  \includegraphics[width=0.72\linewidth]{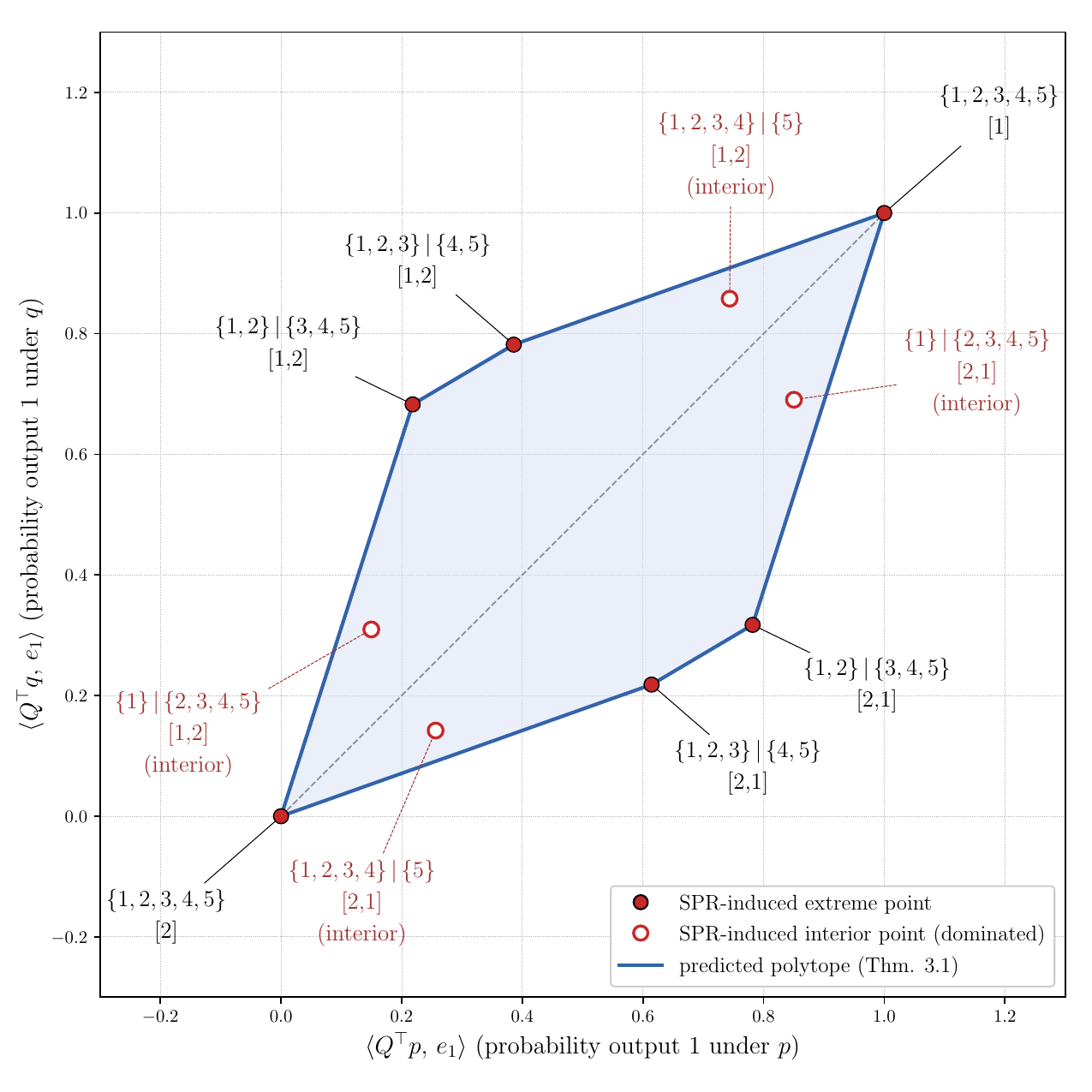}
  \caption{Geometry of SPR optimality for $k=5$, $\ell=2$, $\varepsilon=2$, with $p=(0.04, 0.09, 0.22, 0.47, 0.18)$ and $q=(0.25, 0.49, 0.13, 0.10, 0.03)$. The joint range $\mathcal{R}_2(p,q) = \{(Q^\top p, Q^\top q) : Q\in\mathcal{Q}_{\varepsilon,2}\}$ is the shaded hexagon, equal to the convex hull of the $2(k-1)+2=10$ SPR-induced candidate points (Theorem~\ref{thm:extreme-points-spr}). Since $\ell=2$, the plotted coordinates determine the full pair $(Q^\top p,Q^\top q)$: the omitted coordinates are their complements. Thus this two-dimensional plot is a lossless representation of $\mathcal R_2(p,q)$. Each candidate is the image of one LR-contiguous partition into two blocks together with a block-to-output assignment $[a,b]$ meaning $B_1\to a,\ B_2\to b$, plus the two constant $s=1$ mechanisms at $(0,0)$ and $(1,1)$. Six of the ten are \emph{extreme} (filled red); the other four are \emph{dominated} (hollow red): they are SPR mechanisms but sit strictly inside the polytope as convex combinations of others, so by Corollary~\ref{cor:fdiv-spr-optimality} they cannot maximize any $f$-divergence. Whether a given SPR mechanism is extreme depends on $(p,q)$ and $\varepsilon$.}
  \label{fig:joint-range-k5-eps2}
\end{figure}

\subsection{Extreme refinements: from extreme points to LR-contiguous staircase channels}
\label{sec:geom-extreme-refinement}

We first show that every extreme image point can be realized, up to deterministic post-processing, by a staircase channel whose active columns respect the likelihood-ratio order.

\begin{proposition}[Extreme points admit LR-contiguous staircase refinements]
\label{prop:extreme-refinement}
If $(u,v)\in\mathcal R_\ell(p,q)$ is an extreme point, then there exist an integer $m\ge 1$, a staircase channel $\widetilde Q\in\mathcal Q_{\eps,m}$, and a deterministic map $\tau:[m]\to[\ell]$ such that
\begin{align}
    (u,v)=\bigl((\tau\circ \widetilde Q)^\top p,(\tau\circ \widetilde Q)^\top q\bigr),
\end{align}
and every active column of $\widetilde Q$ has an LR-contiguous high set.
\end{proposition}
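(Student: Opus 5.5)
Fix an extreme point $(u,v)$ and a channel $Q\in\mathcal Q_{\eps,\ell}$ realizing it. The plan is to decompose $Q$, column by column, into staircase pieces and then argue by extremality that the pieces with non-contiguous high sets can be replaced.

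\textbf{Step 1: staircase refinement.} For each output $y$, the set of admissible columns
\begin{align}
C:=\{z\in\mathbb R_+^k: z_i\le e^\eps z_j\ \forall i,j\}
\end{align}
is a polyhedral cone. Its extreme rays are exactly the staircase vectors $\theta(1+(e^\eps-1)\mathbf 1_H)$ with $\emptyset\neq H\subseteq[k]$. Write each column $Q(\cdot,y)=\sum_H z_{y,H}$ as a sum of such rays. Splitting output $y$ into one sub-output per $H$ yields a staircase channel $\widetilde Q$. This channel is still $\eps$-LDP, because each column is a ray of $C$, and its rows sum to one. Merging the sub-outputs back via $\tau$ recovers $Q$. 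So $(u,v)$ is realized as $\tau\circ\widetilde Q$ with $\widetilde Q$ staircase, and it remains to make the high sets contiguous.

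\textbf{Step 2: extremality forces contiguity.} Among all staircase refinements realizing $(u,v)$ through some $\tau$, choose one minimizing a potential, for instance the number of columns with non-LR-contiguous high sets together with a lexicographic measure of how far their high sets are from intervals. Suppose some active column $\tilde z=\theta(1+(e^\eps-1)\mathbf 1_H)$ has a non-contiguous $H$. Then there are indices $a<b<c$ with $a,c\in H$ and $b\notin H$. The idea is a perturbation that moves mass, in a $(p,q)$-balanced way, between rows $b$ and $\{a,c\}$ of this high set. Concretely, consider the column directions $\mathbf 1_b$ and the combination $\lambda\mathbf 1_a+(1-\lambda)\mathbf 1_c$. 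Since $r_a<r_b<r_c$, we can pick $\lambda$ so that $p_b/q_b$ matches the ratio of the combination, i.e.
\begin{align}
(p_b,q_b)\parallel(\lambda' p_a+\mu' p_c,\ \lambda' q_a+\mu' q_c)
\end{align}
for some nonnegative $\lambda',\mu'$. Trading high-set membership of $b$ against $a,c$ in appropriate proportions then changes the contribution $(p^\top z,q^\top z)$ of that column only along a single direction $d$.

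To keep row sums equal to one, the same trade must be compensated in other columns, for example in a complementary column, or by splitting and using the constant column. This produces channels $\widetilde Q^\pm$ whose images $(u,v)\pm t\,w$ lie in $\mathcal R_\ell$ for small $t>0$. Extremality then forces $w=0$. When $w=0$, the perturbed channel realizes the same $(u,v)$ and can be pushed, by taking $t$ to its boundary, until a staircase column changes its high set to a strictly "more contiguous" one. This decreases the potential and contradicts minimality.

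\textbf{Main obstacle.} The delicate part is Step 2. The perturbation has to preserve three things at once: row-stochasticity, the $\eps$-LDP ratio constraints, which stay tight on staircase columns, and the budget of $\ell$ merged outputs. It must also change the image only along a direction whose negation is equally feasible. My first attempt would follow Pensia et al.'s likelihood-ratio-threshold argument. Within a single merged output $y$, take its sub-columns, use the $\ell$-output image coordinate $(u_y,v_y)$, and note that the set of sums of staircase rays with fixed total row vector is a polytope. At an extreme image point, each $(u_y,v_y)$ must be supported by a linear functional $\alpha u_y-\beta v_y$. That functional is maximized over the high sets of the relevant columns by thresholding $\alpha p_i-\beta q_i$, which is monotone in the LR order. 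Threshold sets $\{i:r_i\ge t\}$ are intervals, and since the difference of two nested thresholds is again an interval, a layered decomposition gives LR-contiguous high sets. Ties and zero-$q$ symbols must be handled by the tie reduction already performed in Section~\ref{sec:setup-and-spr}.
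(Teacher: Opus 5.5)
Your Step 1 is a valid staircase refinement, but Step 2 does not go through. You claim that both $(u,v)\pm t\,w$ are realized; this is unjustified, and false for the move you describe. In a staircase column every high-to-low ratio constraint is tight, so in your $(101)$ column you may raise the low row $b$ and lower the high rows $a,c$, but never the reverse. A midpoint decomposition must therefore combine two different one-sided feasible moves whose image increments are exact negatives. This is where $r_a<r_b<r_c$ enters:
\begin{itemize}
\item trading $q$-balanced mass between rows $a,b$ gives a $p$-increment $q_a(r_b-r_a)>0$ with zero $q$-increment;
\item trading between rows $b,c$ gives a $p$-increment of the opposite sign, so a positive multiple of it cancels the first.
\end{itemize}
Your balancing choice $(p_b,q_b)\parallel(\lambda'p_a+\mu'p_c,\lambda'q_a+\mu'q_c)$ instead produces an image-neutral move, which can never contradict extremality.

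You also never say where the offsetting mass goes. Row sums force the move to be undone in another column whose LDP constraints allow it (essentially one where $b$ is high and $a,c$ are low), and a constant column need not exist. The missing lemma is this: by row-stochasticity the rows $a,b,c$ have equal high coverage $h_i=\sum_{y:\,i\in H_y}\theta_y$. That forces, on the same triple, either a $(010)$ column (zigzag) or both a $(110)$ and a $(011)$ column ($3$-cycle). Each case then needs its own explicit pair of one-sided perturbations; in the $3$-cycle, one move transfers a rescaled copy of the whole reference column.

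Second, an arbitrary ray decomposition in Step 1 loses what you need to carry non-extremality back through $\tau$. If the columns of the forbidden configuration are merged into the same coarse output, the perturbation is invisible in $(u,v)$. The paper refines each coarse column by its level sets, so the fine columns of one coarse output have nested high sets. Since $(101)$ and $(010)$, and likewise $(101),(110),(011)$, are pairwise non-nested on $\{a,b,c\}$, the perturbed columns land in distinct coarse outputs. Post-processing is then injective on vectors supported on them, so the distinct fine endpoints stay distinct.

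Your fallback, pushing an image-neutral move to the boundary to lower a potential, is not carried out. The pushed channel is no longer staircase, re-refining it can create new non-contiguous columns, and you exhibit no potential that provably decreases. The supporting-functional route has a similar gap. An extreme point is exposed by a single functional $\sum_y(\alpha_y u_y-\beta_y v_y)$ on all outputs jointly, and the resulting LP over channels is coupled across columns through the row constraints. Its optimality conditions only say that each used high set maximizes $\sum_{i\in H}(\alpha_y p_i-\beta_y q_i-\nu_i)$ for row multipliers $\nu_i$, and nothing forces such sets to be likelihood-ratio intervals.
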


The proof is given in Appendix~\ref{sec:proof-extreme-refinement}. Starting from an arbitrary extreme image point, we refine it to a staircase channel and then show that any noncontiguous active column creates a local obstruction that contradicts extremality. This step adapts the joint-range perturbation idea of \citet{Pensia2025BinaryTesting}: we rule out such a column by producing two feasible induced pairs whose midpoint is the original pair.

\subsection{Convex hull of SPR mechanisms}
\label{sec:geom-convex-hull}

Once the active columns are LR-contiguous, the staircase channel decomposes into a convex combination of SPR mechanisms.

\begin{proposition}[Convex hull of SPR channels]
\label{prop:convex-hull-partition-rr}
Let $m\ge 1$, and let $Q\in\mathcal Q_{\eps,m}$ be a staircase channel. Assume every active column of $Q$ has an LR-contiguous high set, with active constant columns represented by the high set $[k]$. Then there exist LR-contiguous partitions $\pi_1,\dots,\pi_N$, channels $\widehat Q^{(1)},\dots,\widehat Q^{(N)}\in\mathcal Q_{\eps,m}$, and coefficients $\beta_1,\dots,\beta_N>0$ with $\sum_{t=1}^N \beta_t=1$ such that
\begin{align}
    Q=\sum_{t=1}^N \beta_t \widehat Q^{(t)},
\end{align}
where, for each $t$, the channel $\widehat Q^{(t)}$ is obtained from the SPR mechanism $Q^{\pi_t}$ by relabeling its active outputs into $[m]$ and padding the remaining columns with zeros.
\end{proposition}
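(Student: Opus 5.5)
The plan is to reduce the statement to a decomposition of a fractional exact cover of $[k]$ by intervals into integral exact covers, which are the LR-contiguous partitions.

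\textbf{Setup.} Write each active column of $Q$ as $Q(i,y)=\theta_y\bigl(1+(e^\eps-1)\mathbf 1\{i\in H_y\}\bigr)$ with $\theta_y>0$ and $H_y=[a_y:b_y]$ an interval (constant columns have $H_y=[k]$). The row-sum constraint $Q\mathbf 1_m=\mathbf 1_k$ reads
\begin{align}
    \sum_{y}\theta_y+(e^\eps-1)\sum_{y:\,i\in H_y}\theta_y=1\qquad\text{for all } i\in[k].
\end{align}
Since $\eps>0$, the coverage $c:=\sum_{y:\,i\in H_y}\theta_y$ is independent of $i$, and $\sum_y\theta_y=1-(e^\eps-1)c$. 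Every active high set is nonempty, so some column covers some point; since $c$ is constant in $i$, this gives $c>0$. Set $x_y:=\theta_y/c$. Then $x\ge 0$ and $Ax=\mathbf 1_k$, where $A\in\{0,1\}^{k\times m_{\rm act}}$ is the incidence matrix $A_{iy}=\mathbf 1\{i\in H_y\}$. Columns with identical intervals are kept as distinct columns of $A$, so each active output has its own variable.

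\textbf{Decomposition step.} Each column of $A$ has consecutive ones, so $A$ is an interval matrix and is totally unimodular. The polytope $P=\{x\ge0:Ax=\mathbf 1\}$ is bounded, since $x_y\le 1$. It is therefore integral, and its vertices are $0/1$ vectors selecting a family of intervals that covers every point exactly once, i.e.\ an LR-contiguous partition $\pi_t$. Each block of $\pi_t$ is tagged with a specific column $y$, and distinct blocks get distinct columns because they are distinct $0/1$ coordinates. By Carath\'eodory we get $x=\sum_{t=1}^N\lambda_t\chi_{\pi_t}$ with $\lambda_t>0$. Evaluating $Ax=\mathbf 1$ at $i=1$, each partition contributes exactly one interval containing $1$, so $\sum_t\lambda_t=1$.

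As an alternative to citing total unimodularity, there is a constructive sweep. Take the interval through point $1$ that has the smallest remaining weight, then the one through its right endpoint plus one, and so on to build a partition. Subtract the minimum weight along this chain and repeat; constant coverage is preserved at every stage. This is the step I expect to need the most care: constant coverage must be maintained, and the procedure must terminate.

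\textbf{Rescaling to SPR channels.} Let $s_t:=|\pi_t|$ and define
\begin{align}
    \beta_t:=c\lambda_t\,(e^\eps+s_t-1).
\end{align}
Let $\widehat Q^{(t)}$ be $Q^{\pi_t}$ with its block $B$ placed in the column $y$ tagged to $B$, and zeros in all other columns. Column $y$ of $\sum_t\beta_t\widehat Q^{(t)}$ then has high set $H_y$ and scale
\begin{align}
    \sum_{t:\,y\in\pi_t}\frac{\beta_t}{e^\eps+s_t-1}=c\sum_{t:\,y\in\pi_t}\lambda_t=c\,x_y=\theta_y,
\end{align}
so it equals $Q(\cdot,y)$. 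Inactive columns are zero on both sides. Finally, the weights sum to one. Using $\sum_t\lambda_t s_t=\sum_y x_y=\sum_y\theta_y/c$,
\begin{align}
    \sum_t\beta_t=c(e^\eps-1)\sum_t\lambda_t+c\sum_t\lambda_t s_t=c(e^\eps-1)+\sum_y\theta_y=1.
\end{align}
This yields the claimed representation $Q=\sum_t\beta_t\widehat Q^{(t)}$ with $\beta_t>0$ and $\sum_t\beta_t=1$.
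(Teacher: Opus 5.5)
Your proof is correct, but your decomposition step takes a different route from the paper's. The paper does not normalize by $c$ or use polyhedral theory. Instead it proves a standalone peeling lemma (Lemma~\ref{lem:constant-coverage-interval-decomposition}) and builds each layer greedily from left to right. An interval through $1$ must start at $1$, and equal remaining coverage at $b_r$ and $b_r+1$ forces a positive-weight interval starting exactly at $b_r+1$. It then subtracts the minimum weight on the layer, which keeps coverage constant and zeros at least one weight, so the process terminates. Finally it sets $\beta_t=\lambda_t(e^\eps+s_t-1)$ (its $\lambda_t$ sum to $h$ rather than $1$) and obtains $\sum_t\beta_t=1$ by summing a row of $Q=\sum_t\beta_t\widehat Q^{(t)}$, whereas you compute this sum directly.

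Your version replaces the combinatorial lemma with total unimodularity of consecutive-ones matrices. This makes the bounded polytope $\{x\ge 0: Ax=\mathbf 1\}$ integral with the interval partitions as its vertices. It is shorter and exposes the structural reason: normalized interval weights of an LR-contiguous staircase channel lie in an exact-cover polytope whose vertices are LR-contiguous partitions. Carath\'eodory then bounds $N$ by $\dim P+1$. The paper's argument is self-contained and algorithmic, with at most as many layers as active columns. The authors tie it to flow decomposition on the DAG with arcs $a-1\to b$. The two views coincide: differencing consecutive rows of your interval matrix gives that DAG's node--arc incidence matrix, and exact covers become unit $0\to k$ paths.

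A small point on your constructive aside: the next interval must \emph{start} at $b_r+1$, since an interval merely containing $b_r+1$ could overlap the previous block. Its existence is exactly what the coverage balance between $b_r$ and $b_r+1$ provides, and choosing the smallest-weight interval is unnecessary.
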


The proof is deferred to Appendix~\ref{sec:proof-convex-hull-partition-rr}. It rewrites $Q$ as a weighted family of LR-contiguous intervals, uses the row-sum condition to show that every row sees the same total interval weight, and then peels this weighted interval system into layers, each layer being a partition of $[k]$. Each layer yields an SPR mechanism on the same output alphabet after relabeling and zero-padding. This peeling step can be viewed as a weighted flow-decomposition argument \cite{Ahuja1993NetworkFlows} specialized to interval families.

\subsection{Nontrivial coarsenings are not extreme}
\label{sec:geom-coarsening}

Finally, we show that deterministically merging active outputs of an SPR mechanism does not create new extreme points.

\begin{proposition}[Nontrivial coarsenings are not extreme]
\label{prop:no-new-vertices-under-coarsening}
Let $\pi=(B_1,\dots,B_s)$ be an LR-contiguous partition of $[k]$, let $Q^\pi\in\mathcal Q_{\eps,s}$ be the associated SPR mechanism, and let $\tau:[s]\to[\ell]$ be deterministic. If $t:=|\tau([s])|$ satisfies $1<t<s$, then the image point
\begin{align}
    \bigl((\tau\circ Q^\pi)^\top p,(\tau\circ Q^\pi)^\top q\bigr)\in\mathcal R_\ell(p,q)
\end{align}
is not an extreme point of $\mathcal R_\ell(p,q)$.
\end{proposition}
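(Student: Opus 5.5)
The plan is to show directly that the image point is the midpoint of two distinct points of $\mathcal R_\ell(p,q)$. We obtain these by perturbing the channel $\tau\circ Q^\pi$ in a direction that moves mass between two of its active output columns. The key observation is that merging two or more blocks into one output leaves that output's column \emph{strictly} inside the $\eps$-LDP constraint, so it can absorb a small perturbation in either direction.

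\textbf{Step 1 (locating the slack column).} Write $Z:=e^\eps+s-1$ and $n_y:=|\tau^{-1}(y)|$ for $y\in\tau([s])$. For each such $y$, column $y$ of $\tau\circ Q^\pi$ equals
\begin{align}
    (\tau\circ Q^\pi)(i,y)=\frac{1}{Z}\Bigl(n_y+(e^\eps-1)\mathbf 1\{i\in U_y\}\Bigr),
    \qquad U_y:=\bigcup_{j\in\tau^{-1}(y)}B_j .
\end{align}
Since $t<s$, some output $y_1$ has $n_{y_1}\ge 2$. The ratio of the maximum to the minimum entry of its column is at most
\begin{align}
    \frac{n_{y_1}-1+e^\eps}{n_{y_1}}<e^\eps .
\end{align}
This uses $\eps>0$ and $n_{y_1}\ge 2$. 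Column $y_1$ is strictly positive, and its pairwise LDP inequalities all hold strictly. Since $t>1$, there is a second active output $y_2\ne y_1$, whose column $c:=(\tau\circ Q^\pi)(\cdot,y_2)$ is strictly positive.

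\textbf{Step 2 (the perturbation).} For $\delta\in\mathbb R$, let $Q_\delta$ agree with $\tau\circ Q^\pi$ except in two columns:
\begin{align}
    Q_\delta(\cdot,y_1)=(\tau\circ Q^\pi)(\cdot,y_1)+\delta c,
    \qquad
    Q_\delta(\cdot,y_2)=(1-\delta)c .
\end{align}
We check that $Q_\delta\in\mathcal Q_{\eps,\ell}$ for all sufficiently small $|\delta|$.
\begin{itemize}
    \item Row sums are unchanged, since the perturbation adds $\delta c$ to one column and subtracts it from the other.
    \item Entries stay nonnegative for small $|\delta|$, because both columns are strictly positive.
    \item Column $y_2$ is a positive multiple of an $\eps$-LDP column, so its ratio constraints are unaffected.
    \item Column $y_1$ satisfies finitely many strict inequalities $Q(i,y_1)<e^\eps Q(i',y_1)$ at $\delta=0$. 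These persist for small $|\delta|$ by continuity.
\end{itemize}
Fix such a $\delta_0>0$.

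\textbf{Step 3 (distinct images with the right midpoint).} The map $Q\mapsto(Q^\top p,Q^\top q)$ is linear. Hence the image of $\tau\circ Q^\pi$ is the midpoint of the images of $Q_{\delta_0}$ and $Q_{-\delta_0}$. These two images differ in coordinate $y_1$ of the second marginal by $2\delta_0\, q^\top c$. This is strictly positive because $c>0$ entrywise and $q$ is a probability vector. The two points are therefore distinct, so the image point is not extreme.

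The only step with real content is Step 1. We must verify that every ratio constraint in the merged column is strict, including the pairs of inputs lying on the same side of $U_{y_1}$, where the ratio equals $1$. All remaining steps are routine continuity and linearity checks. The hypothesis $t>1$ is used only to guarantee the existence of a second column $y_2$ to trade mass with. When $t=1$ the single column is constant and no such perturbation exists; consistently, the image $(\mathbf 1,\mathbf 1)$ restricted to that output is genuinely extreme.
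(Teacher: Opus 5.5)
Your proof is correct, but it takes a different route from the paper's.

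The paper uses an explicit global decomposition $\tau\circ Q^\pi=\alpha R+(1-\alpha)U$, with $\alpha=(e^\eps+t-1)/(e^\eps+s-1)\in(0,1)$. Here $R$ is randomized response on the coarser partition $\{C_y\}_{y\in\tau([s])}$ and $U$ is input-independent. So the image point lies on the segment between the image of $R$ and a diagonal point $(c,c)$. This forces a case split: if the image of $R$ happens to equal $(c,c)$, the segment degenerates, and the paper must separately show that $(c,c)$ is not extreme by shifting mass between two outputs.

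You work locally instead. The merged column has max/min ratio $(n_{y_1}-1+e^\eps)/n_{y_1}<e^\eps$, so none of its LDP constraints is tight, and it can trade $\pm\delta c$ with a second active column. Distinctness of the two endpoints is then automatic from the nonzero difference $2\delta_0\,q^\top c$ in coordinate $y_1$ of the second marginal, so no case analysis is needed.

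Your route is shorter and more general. It shows that any $\eps$-LDP channel with at least two active outputs, one of whose columns has no tight high--low pair, has a non-extreme image. It uses the SPR structure only to compute that one ratio.

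The paper's route is more interpretive. It makes precise the slogan that merging outputs amounts to mixing a coarser randomized-response mechanism with hypothesis-independent noise. By joint convexity and $D_f(c\|c)=0$, this also shows that the coarsened mechanism is dominated, for every $f$-divergence, by randomized response on the coarser partition.
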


The proof is deferred to Appendix~\ref{sec:proof-no-new-vertices-under-coarsening}. The key observation is that such a coarsened image decomposes into a randomized-response point on the coarser partition induced by $\tau$, together with a hypothesis-independent remainder coming from an input-independent channel. This gives a nontrivial convex decomposition inside $\mathcal R_\ell(p,q)$.

\subsection{Proof of the SPR theorem and its \texorpdfstring{$f$-divergence}{f-divergence} corollary}
\label{sec:geom-proof}

Putting these three propositions together gives the main geometric statement: Theorem~\ref{thm:extreme-points-spr}.

\begin{proof}[Proof of Theorem~\ref{thm:extreme-points-spr}]
Let $(u,v)\in\mathcal R_\ell(p,q)$ be extreme. By Proposition~\ref{prop:extreme-refinement}, there exist $m\ge 1$, a staircase channel $\widetilde Q\in\mathcal Q_{\eps,m}$ whose active columns are LR-contiguous, and a deterministic map $\tau:[m]\to[\ell]$ such that $(u,v)=\bigl((\tau\circ\widetilde Q)^\top p,(\tau\circ\widetilde Q)^\top q\bigr)$.
By Proposition~\ref{prop:convex-hull-partition-rr}, $\widetilde Q=\sum_{t=1}^N \beta_t \widehat Q^{(t)}$, with $\beta_t>0$ and $\sum_{t=1}^N \beta_t=1$, where each $\widehat Q^{(t)}$ is a relabeled and zero-padded SPR mechanism. Applying $\tau$ and taking induced marginals gives
\begin{align}
    (u,v)=\sum_{t=1}^N
    \beta_t\bigl((\tau\circ\widehat Q^{(t)})^\top p,(\tau\circ\widehat Q^{(t)})^\top q\bigr).
\end{align}
Since $(u,v)$ is extreme, every term with $\beta_t>0$ must already equal $(u,v)$. Fix such a $t$.

If $\tau\circ\widehat Q^{(t)}$ has only one active output, then its induced pair is realized by the one-block partition $\pi=([k])$, viewed as an $\ell$-output channel by relabeling its unique active output and padding zeros. Otherwise $\tau\circ\widehat Q^{(t)}$ has more than one active output. Since its induced pair is $(u,v)$ and hence extreme, Proposition~\ref{prop:no-new-vertices-under-coarsening} implies that $\tau$ does not merge two active outputs of $\widehat Q^{(t)}$. Thus $\tau\circ\widehat Q^{(t)}$ differs from $\widehat Q^{(t)}$ only by relabeling active outputs and padding zeros. In either case, $(u,v)$ is induced by an SPR mechanism, up to relabeling and zero-padding.
\end{proof}

The extreme-point characterization converts into optimal mechanisms, both with and without a communication constraint, via Corollary~\ref{cor:fdiv-spr-optimality}.

\begin{proof}[Proof of Corollary~\ref{cor:fdiv-spr-optimality}]
    Fix any $\ell'\ge 1$ and any $Q\in\mathcal Q_{\eps,\ell'}$, and set $(u,v)=(Q^\top p,Q^\top q)\in\mathcal R_{\ell'}(p,q)$. Since $\mathcal R_{\ell'}(p,q)$ is a compact convex polytope, write $(u,v)$ as a convex combination of its extreme points: $(u,v)=\sum_{t=1}^N \lambda_t(u^{(t)},v^{(t)})$ for $\lambda_t\ge 0$, $\sum_{t=1}^N\lambda_t=1$.
    By convexity of $D_f$,
    \begin{align}
        D_f(u\|v)
        \le \sum_{t=1}^N \lambda_t D_f(u^{(t)}\|v^{(t)})
        \le \max_t D_f(u^{(t)}\|v^{(t)}),
    \end{align}
so some extreme point $(u^{(t)},v^{(t)})$ satisfies $D_f(u^{(t)}\|v^{(t)}) \ge D_f(u\|v)$. By Theorem~\ref{thm:extreme-points-spr}, that extreme point is induced by an SPR mechanism with at most $\min(\ell',k)$ active outputs.

For the unconstrained claim, take $\ell'$ arbitrary: every finite-output $\eps$-LDP mechanism is matched or improved by an SPR mechanism with at most $k$ active outputs, and any such SPR mechanism is itself a feasible finite-output mechanism. Hence the unconstrained finite-output optimum is attained among SPR mechanisms with at most $k$ active outputs. For the output-constrained claim, take $\ell'=\ell$: every $\ell$-output mechanism is matched or improved by an SPR mechanism with at most $\ell$ active outputs, and any such SPR mechanism can be viewed as an element of $\mathcal Q_{\eps,\ell}$ by padding zero columns.
\end{proof}

%% file: sections/05_dynamic_program.tex
\section{Exact dynamic program}
\label{sec:dynamic-program}
\label{sec:polynomial-time-algorithm}

Fix an $f$-divergence objective and an output budget $1\le \ell\le k$. By Corollary~\ref{cor:fdiv-spr-optimality}, it is enough to optimize over LR-contiguous partitions of $[k]$ into at most $\ell$ nonempty blocks. The unconstrained finite-output problem is the case $\ell=k$. We now use this partition structure to obtain an $O(\ell k^2)$ dynamic program, and hence an $O(k^3)$ algorithm in the unconstrained case. Section~\ref{sec:pseudocode} gives a pseudocode version.

\subsection{Interval scores and DP recurrence}
\label{sec:dp-interval-scores}
\label{sec:dp-recurrence}

For $z\in\mathbb{R}_+^k$, let $\displaystyle \mu(z):=(q^\top z)\,f\!\left(\frac{p^\top z}{q^\top z}\right)$ be the column score associated with the chosen $f$-divergence. For $1\le a\le b\le k$, let
\begin{align}
    \mu[a:b]
    :=
    \mu\!\left(
        \bigl(1+(e^\varepsilon-1)\mathbf{1}\{a\le i\le b\}\bigr)_{i=1}^k
    \right).
\end{align}
Thus $\mu[a:b]$ is the contribution of the unnormalized staircase pattern with high set $[a:b]$.

If $\pi=(B_1,\dots,B_s)$ is an LR-contiguous partition of $[k]$ into $s$ nonempty blocks, then each output column of $Q^\pi$ is the corresponding unnormalized staircase pattern divided by $e^\varepsilon+s-1$. Therefore
\begin{align}
    D_f\bigl((Q^\pi)^\top p\|(Q^\pi)^\top q\bigr)
    =
    \frac{\sum_{t=1}^s \mu(B_t)}{e^\varepsilon+s-1},
\end{align}
where $\mu(B_t)=\mu[a:b]$ when $B_t=[a:b]$. Thus, for each fixed $s\in\{1,\dots,\ell\}$, it is enough to maximize the numerator over all partitions of $[k]$ into $s$ nonempty contiguous blocks. This yields Proposition~\ref{prop:dp-contiguous-partition}.

\begin{proof}[Proof of Proposition~\ref{prop:dp-contiguous-partition}]
The base case is immediate. For the recurrence, take an optimal partition of $[1:i]$ into $s$ contiguous blocks, and let its last block be $[t+1:i]$. Then the prefix $[1:t]$ must already be partitioned optimally into $s-1$ contiguous blocks; otherwise we could replace it by a better one and improve the whole partition. This gives the recurrence. The last equation follows because for fixed $s$ the denominator $e^\varepsilon+s-1$ is constant.
\end{proof}

\subsection{Runtime, output constraint, and traceback}
\label{sec:dp-runtime-traceback}

\begin{proof}[Proof of Corollary~\ref{cor:dp-runtime}]
Sorting the input alphabet by likelihood ratio takes $O(k \log k)$ time, which is dominated by the $O(\ell k^2)$ DP fill. Let $P_i:=\sum_{j=1}^i p_j$ and $Q_i:=\sum_{j=1}^i q_j$ with $P_0=Q_0=0$. Then for any interval $[a:b]$,
\begin{align}
    \mu[a:b]
    =
    \left(1+(e^\varepsilon-1)(Q_b-Q_{a-1})\right)
    f\!\left(
        \frac{1+(e^\varepsilon-1)(P_b-P_{a-1})}
             {1+(e^\varepsilon-1)(Q_b-Q_{a-1})}
    \right).
\end{align}
Thus each interval score can be evaluated in $O(1)$ time from the prefix sums, without precomputing or storing the $O(k^2)$ possible scores.
The DP table $F[s,i]$ has $O(\ell k)$ states, and each state checks $O(k)$ split points, evaluating the needed interval score on demand from the prefix sums. Hence filling the table takes $O(\ell k^2)$ time. The prefix sums use $O(k)$ space and the DP table uses $O(\ell k)$ space, so the total space is $O(\ell k)$.

To recover an optimal mechanism, first choose
\begin{align}
    s^\star\in\arg\max_{1\le s\le \ell}
    \frac{F[s,k]}{e^\varepsilon+s-1}.
\end{align}
Then trace backward from $(s^\star,k)$: at state $(s,i)$, choose any
\begin{align}
    t\in\arg\max_{s-1\le t<i}
    \Bigl\{F[s-1,t]+\mu[t+1:i]\Bigr\},
\end{align}
record the block $[t+1:i]$, and continue from $(s-1,t)$. The final block is $[1:i]$. This traceback uses only the stored $F$ table and the prefix sums, so it does not change the $O(\ell k)$ space bound. It adds at most $O(\ell k)$ time, since at most $\ell$ states are traced back and each recomputes one maximization over $O(k)$ split points. Thus the dominant running-time cost remains the $O(\ell k^2)$ DP fill, which becomes $O(k^3)$ when $\ell=k$.
\end{proof}

\subsection{Pseudocode}
\label{sec:pseudocode}

\begin{algorithm}[H]
\small
\DontPrintSemicolon
\KwIn{Distributions $p, q \in \Delta^k$, privacy level $\varepsilon > 0$, output budget $1 \le \ell \le k$, convex generator $f$ with $f(1)=0$.}
\KwOut{Optimal value $V^\star$, optimal LR-contiguous partition $\pi^\star = (B_1, \dots, B_{s^\star})$, and the associated SPR mechanism $Q^{\pi^\star}$.}

\BlankLine
\tcp{Stage 1: sort by likelihood ratio and compute prefix sums}
Let $\sigma$ be a permutation of $[k]$ with $p_{\sigma(1)}/q_{\sigma(1)} \le \cdots \le p_{\sigma(k)}/q_{\sigma(k)}$\;
Relabel so that $p_i \leftarrow p_{\sigma(i)}$ and $q_i \leftarrow q_{\sigma(i)}$ for $i = 1, \dots, k$\;
Set $P_0 \leftarrow 0$, $Q_0 \leftarrow 0$\;
\For{$i = 1, \dots, k$}{
    $P_i \leftarrow P_{i-1} + p_i$\;
    $Q_i \leftarrow Q_{i-1} + q_i$\;
}

\BlankLine
\tcp{Interval-score oracle: $O(1)$ from prefix sums}
\SetKwFunction{FMu}{\textsc{IntervalScore}}
\SetKwProg{Fn}{Function}{:}{}
\Fn{\FMu{$a, b$}}{
    $\Delta P \leftarrow P_b - P_{a-1}$\;
    $\Delta Q \leftarrow Q_b - Q_{a-1}$\;
    \Return $\bigl(1 + (e^\varepsilon - 1)\Delta Q\bigr)\, f\!\left(\dfrac{1 + (e^\varepsilon - 1)\Delta P}{1 + (e^\varepsilon - 1)\Delta Q}\right)$\;
}

\BlankLine
\tcp{Stage 2: fill the DP table $F[s, i]$}
Initialize $F[s, i] \leftarrow -\infty$ for all $1 \le s \le \ell$, $1 \le i \le k$\;
\For{$i = 1, \dots, k$}{
    $F[1, i] \leftarrow \FMu(1, i)$\;
}
\For{$s = 2, \dots, \ell$}{
    \For{$i = s, \dots, k$}{
        $F[s, i] \leftarrow \displaystyle\max_{s-1 \le t < i}\bigl\{F[s-1, t] + \FMu(t+1, i)\bigr\}$\;
    }
}

\BlankLine
\tcp{Stage 3: select the optimal block count and value}
$s^\star \leftarrow \displaystyle\argmax_{1 \le s \le \ell} \dfrac{F[s, k]}{e^\varepsilon + s - 1}$\;
$V^\star \leftarrow \dfrac{F[s^\star, k]}{e^\varepsilon + s^\star - 1}$\;

\BlankLine
\tcp{Stage 4: traceback to recover the partition (in sorted order)}
$i \leftarrow k$\;
\For{$s = s^\star, s^\star - 1, \dots, 2$}{
    $t^\star \leftarrow \displaystyle\argmax_{s-1 \le t < i}\bigl\{F[s-1, t] + \FMu(t+1, i)\bigr\}$\;
    $\widetilde{B}_s \leftarrow [t^\star + 1 : i]$\;
    $i \leftarrow t^\star$\;
}
$\widetilde{B}_1 \leftarrow [1 : i]$\;

\BlankLine
\tcp{Undo the sort to express blocks on the original alphabet}
\For{$s = 1, \dots, s^\star$}{
    $B_s \leftarrow \{\sigma(j) : j \in \widetilde{B}_s\}$\;
}
$\pi^\star \leftarrow (B_1, \dots, B_{s^\star})$\;

\BlankLine
\tcp{Stage 5: instantiate the SPR mechanism}
\For{$s \in [s^\star]$, $x \in [k]$}{
    $Q^{\pi^\star}(s \mid x) \leftarrow \dfrac{1 + (e^\varepsilon - 1)\mathbf{1}\{x \in B_s\}}{e^\varepsilon + s^\star - 1}$\;
}

\Return $V^\star$, $\pi^\star$, $Q^{\pi^\star}$\;
\caption{Optimal SPR mechanism via dynamic programming}
\label{alg:optimal-spr-dp}
\end{algorithm}
\FloatBarrier

%% file: sections/06_conclusion.tex
\section{Conclusion and limitations}
\label{sec:conclusion-and-limitations}
We showed that the exact mechanism-design problem for binary testing under local privacy has a much smaller structure than the general staircase formulation suggests. After sorting by likelihood ratio, optimal mechanisms can be chosen by partitioning the ordered alphabet into contiguous blocks and applying randomized response to the block label (Section~\ref{sec:geometry-of-spr}), yielding an exact polynomial-time dynamic program (Section~\ref{sec:dynamic-program}). For $E_\gamma$-divergences this collapses to a closed-form binary mechanism, and the same machinery extends to R\'enyi divergences and $\ell_r$ distance objectives (Sections~\ref{sec:main-results-egamma}--\ref{sec:main-results-other}).

Our characterization is specific to simple binary hypothesis testing under pure, non-interactive $\varepsilon$-LDP on a finite alphabet. Extending the SPR geometry to multi-hypothesis or composite testing, approximate or R\'enyi local privacy, and interactive protocols remains open. On the algorithmic side, the $O(\ell k^2)$ dynamic program relies on the additive block decomposition of $f$-divergences; finding comparably efficient exact algorithms for broader quasi-convex objectives, such as Chernoff information, is an interesting direction. Another natural question is how to adapt the mechanism-design results when the hypotheses $P_0,P_1$ are not known exactly and must be estimated or specified through uncertainty classes.

%% file: appendix/A_extreme_refinements/A0_overview.tex
\section{Extreme points admit LR-contiguous staircase refinements}
\label{sec:proof-LR-contiguous}

We prove Proposition~\ref{prop:extreme-refinement} under the assumption that inputs with the same likelihood ratios have been merged, i.e., that the likelihood ratios are strictly increasing in the LR order.

Let $(u,v)\in\mathcal R_\ell(p,q)$ be an extreme point. We will show that $(u,v)$ is realized, up to deterministic post-processing, by a staircase channel whose active columns are LR-contiguous.

The proof has three steps. First, we refine an arbitrary channel realizing $(u,v)$ into a staircase channel. Second, we show that if one of its active columns is not LR-contiguous, then on some LR-ordered triple it creates a forbidden local configuration. Third, we use a perturbation argument to show that each such configuration yields a nontrivial midpoint decomposition of $(u,v)$, contradicting extremality.

Figure~\ref{fig:proof-logic} traces the logical flow of the proof: starting from an extreme image pair, we refine to a staircase (Lemma~\ref{lem:nested-staircase-refinement}), assume for contradiction that some active column has a non-contiguous high set, apply the dichotomy of Lemma~\ref{lemma:any-101-triple-forces-zigzag-or-3-cycle}, perturb in either branch (Lemmas~\ref{lemma:zigzag-implies-non-extremality} and~\ref{lemma:3-cycle-implies-non-extremality-in-joint-range}), push the perturbations back through $\tau$, and reach a contradiction with extremality.

\begin{figure}[tbp]
  \centering
  \resizebox{!}{0.92\textheight}{%
    \input{figures/proof_logic_figure}%
  }
  \caption{Logical structure of the proof of Proposition~\ref{prop:extreme-refinement}. Blue boxes are established facts; the red dashed arrow marks the assumption-for-contradiction trail; yellow boxes are the two forbidden configurations; the green box is the conclusion.}
  \label{fig:proof-logic}
\end{figure}

\subsection{Staircase refinement}
\label{sec:staircase-refinement}
The next lemma shows that any fixed-output $\varepsilon$-LDP channel can be refined into a staircase channel that merges back to the original one under a deterministic post-processing map. This lets us work at the staircase level and later transfer a contradiction back to the original image point. Example~\ref{ex:staircase-refinement} below illustrates the construction on a small $5\times 2$ channel.
\begin{lemma}[Nested staircase refinement]
\label{lem:nested-staircase-refinement}
Let $Q\in\mathbb{R}_+^{k\times \ell}$ be a row-stochastic $\varepsilon$-LDP channel, where $\varepsilon>0$. Then there exist a staircase $\varepsilon$-LDP channel $\widetilde Q\in\mathbb{R}_+^{k\times \widetilde m}$ with output alphabet $\widetilde{\mathcal Y}=[\widetilde m]$, and a deterministic map $\tau:\widetilde{\mathcal Y}\to[\ell]$ such that
\begin{align*}
    Q(i,y)=\sum_{\tilde y:\tau(\tilde y)=y}\widetilde Q(i,\tilde y)
    \qquad (i\in[k],\ y\in[\ell]).
\end{align*}
Moreover, for each $y\in[\ell]$, the high sets of the fine columns merged into $y$ are nested.
\end{lemma}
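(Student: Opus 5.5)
We decompose each column of $Q$ separately by a "layer-cake" (level-set) construction. Fix an output $y\in[\ell]$ and write $c_i:=Q(i,y)\ge 0$. If the column is identically zero we assign no fine columns to it. Otherwise the $\eps$-LDP constraint gives $\max_i c_i\le e^\eps\min_i c_i$, so $c_{\min}:=\min_i c_i>0$.

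Sort the distinct values of the column as $c_{\min}=a_0<a_1<\dots<a_r=\max_i c_i$. Define the nested sets
\begin{align*}
H_j:=\{i: c_i\ge a_j\},\qquad j=1,\dots,r,
\end{align*}
so that $[k]\supsetneq H_1\supsetneq\cdots\supsetneq H_r\neq\emptyset$. The layer-cake identity is
\begin{align*}
c_i=a_0+\sum_{j=1}^r (a_j-a_{j-1})\mathbf 1\{i\in H_j\}.
\end{align*}
We want to rewrite this as a nonnegative combination of staircase columns $\theta_j\bigl(1+(e^\eps-1)\mathbf 1\{i\in H_j\}\bigr)$, together with possibly one constant column represented with high set $[k]$. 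Matching the indicator parts forces
\begin{align*}
\theta_j(e^\eps-1)=a_j-a_{j-1},\qquad\text{i.e.}\qquad \theta_j=\frac{a_j-a_{j-1}}{e^\eps-1}>0 .
\end{align*}
The baseline these columns contribute is
\begin{align*}
\sum_j\theta_j=\frac{a_r-a_0}{e^\eps-1}.
\end{align*}
Since $a_r\le e^\eps a_0$, we have $a_r-a_0\le (e^\eps-1)a_0$, so this baseline is at most $a_0$. The leftover $a_0-\sum_j\theta_j\ge 0$ is placed in a constant column. We keep that column only if the leftover is strictly positive, so that every fine column we create is active.

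Each fine column is then staircase. Its entries take the values $\theta_j$ and $e^\eps\theta_j$ only, or a single constant value. Its ratios therefore lie in $\{e^{-\eps},1,e^\eps\}$, and by the same token it individually satisfies the $\eps$-LDP constraint. The high sets $[k]\supseteq H_1\supseteq\cdots\supseteq H_r$ attached to $y$ form a chain, which gives the nestedness claim.

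Now collect all fine columns over all $y$ into $\widetilde Q$, and let $\tau$ send each fine column to the $y$ it came from. By construction the fine columns for $y$ sum to column $y$ of $Q$. Row-stochasticity of $\widetilde Q$ is inherited, because each row of $\widetilde Q$ sums to the corresponding row sum of $Q$, which is $1$. Nonnegativity and $\eps$-LDP hold column by column. Finiteness of $\widetilde m$ is clear, since there are at most $k+1$ fine columns per $y$.

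The only delicate step is verifying that the residual constant part is nonnegative. This is exactly where the LDP ratio bound $\max c\le e^\eps\min c$ enters. Without it the decomposition into staircase columns with nonnegative weights would fail.
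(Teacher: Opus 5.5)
Your proposal is correct and is essentially the paper's construction. The paper normalizes by the column minimum $\theta=a_0$ and uses levels $\lambda_t=(a_t-a_0)/((e^\eps-1)a_0)$, so its weights $\alpha_t=\theta(\lambda_t-\lambda_{t-1})$ and its residual $\alpha_0=\theta(1-\lambda_L)$ coincide exactly with your $\theta_j=(a_j-a_{j-1})/(e^\eps-1)$ and your leftover constant; nonnegativity of the residual rests on the same LDP bound $a_r\le e^\eps a_0$, and the remaining differences (dropping a zero residual column, and writing the constant column with high set $[k]$ rather than $\emptyset$) are cosmetic, as the paper's own remark notes.
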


\begin{proof}
Zero columns need no refinement, so fix a nonzero column $Q(\cdot,y)$ of $Q$.
Because $Q$ is $\varepsilon$-LDP and this column is nonzero, every entry in it is positive.
Let
\begin{align*}
    \theta:=\min_{i\in[k]} Q(i,y)>0
\end{align*}
be its smallest entry. Then every entry in this column lies between $\theta$ and $e^\varepsilon\theta$.

We now show that this column can be written as a sum of staircase columns with nested high sets.

Write each entry as
\begin{align*}
    Q(i,y)=\theta\bigl(1+(e^\varepsilon-1)u_i\bigr), \qquad u_i\in[0,1],
\end{align*}
where
\begin{align*}
    u_i:=\frac{Q(i,y)-\theta}{(e^\varepsilon-1)\theta}\in[0,1].
\end{align*}
Here, $u_i$ records how much of the maximum possible ``extra above the minimum'' row $i$ receives.

If $u_i=0$ for all $i$, then the column is already constant: $Q(i,y)=\theta$ for all $i$.
In that case we define a single fine column by
\begin{align*}
    \widetilde Q(i,(y,0)):=\theta \qquad (i\in[k]),
\end{align*}
and there is nothing further to do for this $y$.

Otherwise, let
\begin{align*}
    0=\lambda_0<\lambda_1<\cdots<\lambda_L\le 1
\end{align*}
be the distinct values taken by the $u_i$'s. For each level $t=1,\dots,L$, let
\begin{align*}
    H_t:=\{i\in[k]:u_i\ge \lambda_t\}.
\end{align*}
Then
\begin{align*}
    H_1\supseteq H_2\supseteq \cdots \supseteq H_L,
\end{align*}
because rows with a larger boost automatically belong to all smaller-boost groups.

Each $u_i$ is obtained by adding the jump sizes $\lambda_t-\lambda_{t-1}$ for exactly those sets $H_t$ that contain row $i$:
\begin{align*}
    u_i=\sum_{t=1}^L (\lambda_t-\lambda_{t-1})\mathbf 1\{i\in H_t\}.
\end{align*}
Now let $H_0=\varnothing$, set
\begin{align*}
    \alpha_t:=\theta(\lambda_t-\lambda_{t-1}) \qquad (t=1,\dots,L),
\end{align*}
and
\begin{align*}
    \alpha_0:=\theta(1-\lambda_L).
\end{align*}
Substituting the expression for $u_i$ into
$Q(i,y)=\theta\bigl(1+(e^\varepsilon-1)u_i\bigr)$ gives
\begin{align*}
    Q(i,y)=\sum_{t=0}^L \alpha_t\Bigl(1+(e^\varepsilon-1)\mathbf 1\{i\in H_t\}\Bigr).
\end{align*}
For this fixed $y$, define the corresponding fine columns by
\begin{align*}
    \widetilde Q(i,(y,t))
    :=\alpha_t\Bigl(1+(e^\varepsilon-1)\mathbf 1\{i\in H_t\}\Bigr)
    \qquad (t=0,\dots,L).
\end{align*}
Then
\begin{align*}
    Q(i,y)=\sum_{t=0}^L \widetilde Q(i,(y,t))
    \qquad (i\in[k]).
\end{align*}

Apply the same construction to every nonzero column of $Q$, using for each column $y$ its own levels $\lambda_t^{(y)}$, sets $H_t^{(y)}$, and coefficients $\alpha_t^{(y)}$.
The resulting fine columns form the refined channel $\widetilde Q$.
Let $\tau$ send each fine column back to the original column it came from.
Thus,
\begin{align*}
    Q(i,y)=\sum_{\tilde y:\tau(\tilde y)=y}\widetilde Q(i,\tilde y) \qquad (i\in[k],\ y\in[\ell]).
\end{align*}

Each fine column has the form $\alpha\bigl(1+(e^\varepsilon-1)\mathbf 1\{i\in H\}\bigr)$ for some $\alpha\ge 0$ and some set $H\subseteq[k]$, so it has staircase form.
Since within each original column the fine columns sum back to that column, summing over all original columns shows that each row sum of $\widetilde Q$ equals the corresponding row sum of $Q$; hence $\widetilde Q$ is row-stochastic. Finally, for a fixed original output $y$, the fine columns coming from $Q(\cdot,y)$ have high sets among the sets $H_t^{(y)}$, and these sets are nested.
\end{proof}
\begin{remark}
In the proof of Lemma~\ref{lem:nested-staircase-refinement}, we allow constant fine columns to be written with high set $\emptyset$, since this makes the decomposition clearer.
Elsewhere in the paper, when discussing staircase channels abstractly, we use the convention that an active constant column is represented with high set $[k]$. These are just two equivalent representations of the same constant column, and the choice here is purely for notational convenience.
\end{remark}

\begin{example}[Staircase refinement of a $5\times 2$ channel]
\label{ex:staircase-refinement}
Take $\eps=1$, so $e^\eps=e$, and consider the $5\times 2$ mechanism
\begin{align*}
    Q \approx
    \begin{bmatrix}
    0.3124 & 0.6876\\
    0.4017 & 0.5983\\
    0.2865 & 0.7135\\
    0.3588 & 0.6412\\
    0.4279 & 0.5721
    \end{bmatrix}.
\end{align*}
Up to rounding, each row sums to $1$ and each column has max/min ratio at most $e^\eps$, so $Q$ is $\eps$-LDP; this is a generic non-staircase example. We decompose each output column into staircase pieces with nested high sets. For the first output,
\begin{align*}
    Q(\cdot,y_1)
    \approx
    \begin{bmatrix}
    0.3124\\
    0.4017\\
    0.2865\\
    0.3588\\
    0.4279
    \end{bmatrix}
    =
    0.2042
    \begin{bmatrix}
    1\\1\\1\\1\\1
    \end{bmatrix}
    +
    0.0151
    \begin{bmatrix}
    e^\eps\\ e^\eps\\ 1\\ e^\eps\\ e^\eps
    \end{bmatrix}
    +
    0.0270
    \begin{bmatrix}
    1\\ e^\eps\\ 1\\ e^\eps\\ e^\eps
    \end{bmatrix}
    +
    0.0250
    \begin{bmatrix}
    1\\ e^\eps\\ 1\\ 1\\ e^\eps
    \end{bmatrix}
    +
    0.0152
    \begin{bmatrix}
    1\\ 1\\ 1\\ 1\\ e^\eps
    \end{bmatrix}.
\end{align*}
The high sets are
\begin{align*}
    \{1,2,4,5\}\supset \{2,4,5\}\supset \{2,5\}\supset \{5\}.
\end{align*}
For the second output,
\begin{align*}
    Q(\cdot,y_2)
    \approx
    \begin{bmatrix}
    0.6876\\
    0.5983\\
    0.7135\\
    0.6412\\
    0.5721
    \end{bmatrix}
    =
    0.4898
    \begin{bmatrix}
    1\\1\\1\\1\\1
    \end{bmatrix}
    +
    0.0152
    \begin{bmatrix}
    e^\eps\\ e^\eps\\ e^\eps\\ e^\eps\\ 1
    \end{bmatrix}
    +
    0.0250
    \begin{bmatrix}
    e^\eps\\ 1\\ e^\eps\\ e^\eps\\ 1
    \end{bmatrix}
    +
    0.0270
    \begin{bmatrix}
    e^\eps\\ 1\\ e^\eps\\ 1\\ 1
    \end{bmatrix}
    +
    0.0151
    \begin{bmatrix}
    1\\ 1\\ e^\eps\\ 1\\ 1
    \end{bmatrix}.
\end{align*}
The high sets are
\begin{align*}
    \{1,2,3,4\}\supset \{1,3,4\}\supset \{1,3\}\supset \{3\}.
\end{align*}
\paragraph{Refined channel.}
Stacking the staircase columns from both outputs gives
\begin{align*}
    \widetilde Q \approx [\,A \mid B\,],
\end{align*}
where $A$ holds the $y_1$ staircase columns and $B$ holds the $y_2$ staircase columns:
\begin{align*}
    A \approx
    \begin{bmatrix}
    0.2042 & 0.0151\,e^\eps & 0.0270 & 0.0250 & 0.0152 \\
    0.2042 & 0.0151\,e^\eps & 0.0270\,e^\eps & 0.0250\,e^\eps & 0.0152 \\
    0.2042 & 0.0151 & 0.0270 & 0.0250 & 0.0152 \\
    0.2042 & 0.0151\,e^\eps & 0.0270\,e^\eps & 0.0250 & 0.0152 \\
    0.2042 & 0.0151\,e^\eps & 0.0270\,e^\eps & 0.0250\,e^\eps & 0.0152\,e^\eps
    \end{bmatrix},
\end{align*}
\begin{align*}
    B \approx
    \begin{bmatrix}
    0.4898 & 0.0152\,e^\eps & 0.0250\,e^\eps & 0.0270\,e^\eps & 0.0151 \\
    0.4898 & 0.0152\,e^\eps & 0.0250 & 0.0270 & 0.0151 \\
    0.4898 & 0.0152\,e^\eps & 0.0250\,e^\eps & 0.0270\,e^\eps & 0.0151\,e^\eps \\
    0.4898 & 0.0152\,e^\eps & 0.0250\,e^\eps & 0.0270 & 0.0151 \\
    0.4898 & 0.0152 & 0.0250 & 0.0270 & 0.0151
    \end{bmatrix}.
\end{align*}
Summing the columns of $A$ recovers $Q(\cdot,y_1)$, and summing the columns of $B$ recovers $Q(\cdot,y_2)$, so $Q$ is obtained from $\widetilde{Q}$ by deterministic post-processing.
\end{example}

\subsection{Staircase obstructions}
\label{sec:staircase-obstructions}
We now identify the local staircase configurations that obstruct LR-contiguity. We first show that if a staircase refinement contains a noncontiguous pattern $(101)$ on some LR-ordered triple, then on that same triple it must contain one of two forbidden configurations. In the next subsection, we show that either configuration yields a nontrivial midpoint decomposition and hence contradicts extremality.
\begin{definition}[The Zigzag Configuration]
Let $Q\in\mathbb{R}_+^{k\times m}$ be an $\varepsilon$-LDP \emph{staircase} channel with $\varepsilon>0$, and assume the inputs $[k]$ are ordered in strictly increasing likelihood ratio. We say that $Q$ \emph{contains a zigzag configuration} if there exist three (LR-ordered) rows $a<b<c$ and two distinct columns $j\neq j'$ such that for some $\theta_j,\theta_{j'}>0$,
\begin{align*}
    (Q(a,j),Q(b,j),Q(c,j))=(e^\varepsilon\theta_j,\ \theta_j,\ e^\varepsilon\theta_j),
\end{align*}
and
\begin{align*}
    (Q(a,j'),Q(b,j'),Q(c,j'))=(\theta_{j'},\ e^\varepsilon\theta_{j'},\ \theta_{j'}).
\end{align*}
Equivalently, restricted to rows $\{a,b,c\}$, column $j$ has pattern $(101)$ and column $j'$ has pattern $(010)$.
\end{definition}
\paragraph{In matrix form.}
The entries shown sit at rows $a<b<c$ and columns $j,j'$ as in the definition above; remaining entries (denoted $\cdots$) are arbitrary:
\begin{align*}
    Q=
    \begin{bmatrix}
    \cdots & \cdots & \cdots & \cdots & \cdots \\
    \cdots & e^\varepsilon \theta_j & \cdots & \theta_{j'} & \cdots \\
    \cdots & \cdots & \cdots & \cdots & \cdots \\
    \cdots & \theta_j & \cdots & e^\varepsilon \theta_{j'} & \cdots \\
    \cdots & \cdots & \cdots & \cdots & \cdots \\
    \cdots & e^\varepsilon \theta_j & \cdots & \theta_{j'} & \cdots \\
    \cdots & \cdots & \cdots & \cdots & \cdots
    \end{bmatrix}.
\end{align*}
\begin{definition}[The $3$-Cycle Configuration]
Let $Q\in\mathbb{R}_+^{k\times m}$ be an $\varepsilon$-LDP \emph{staircase} channel with $\varepsilon>0$, and assume the inputs $[k]$ are ordered in strictly increasing likelihood ratio. We say that $Q$ \emph{contains a $3$-cycle configuration} if there exist three (LR-ordered) rows $a<b<c$ and three distinct columns $j_1,j_2,j_3$ such that for some $\theta_{j_1},\theta_{j_2},\theta_{j_3}>0$,
\begin{align*}
    (Q(a,j_1),Q(b,j_1),Q(c,j_1))=(e^\varepsilon\theta_{j_1},\ \theta_{j_1},\ e^\varepsilon\theta_{j_1}),
\end{align*}
\begin{align*}
    (Q(a,j_2),Q(b,j_2),Q(c,j_2))=(e^\varepsilon\theta_{j_2},\ e^\varepsilon\theta_{j_2},\ \theta_{j_2}),
\end{align*}
and
\begin{align*}
    (Q(a,j_3),Q(b,j_3),Q(c,j_3))=(\theta_{j_3},\ e^\varepsilon\theta_{j_3},\ e^\varepsilon\theta_{j_3}).
\end{align*}
Equivalently, restricted to rows $\{a,b,c\}$, the three columns realize the three patterns $(101)$, $(110)$, and $(011)$ (in any column order).
\end{definition}
\paragraph{In matrix form.}
The entries shown sit at rows $a<b<c$ and columns $j_1,j_2,j_3$ as in the definition above; remaining entries (denoted $\cdots$) are arbitrary:
\begin{align*}
    Q=
    \begin{bmatrix}
    \cdots & \cdots & \cdots & \cdots & \cdots & \cdots & \cdots \\
    \cdots & e^\varepsilon \theta_{j_1} & \cdots & e^\varepsilon \theta_{j_2} & \cdots & \theta_{j_3} & \cdots \\
    \cdots & \cdots & \cdots & \cdots & \cdots & \cdots & \cdots \\
    \cdots & \theta_{j_1} & \cdots & e^\varepsilon \theta_{j_2} & \cdots & e^\varepsilon \theta_{j_3} & \cdots \\
    \cdots & \cdots & \cdots & \cdots & \cdots & \cdots & \cdots \\
    \cdots & e^\varepsilon \theta_{j_1} & \cdots & \theta_{j_2} & \cdots & e^\varepsilon \theta_{j_3} & \cdots \\
    \cdots & \cdots & \cdots & \cdots & \cdots & \cdots & \cdots
    \end{bmatrix}.
\end{align*}
To prove LR-contiguity, it is enough to rule out the local pattern $(101)$ on LR-ordered triples. Indeed, a staircase column has noncontiguous high set if and only if there exist rows $a<b<c$ on which the column has pattern $(101)$. The next lemma shows that any such $(101)$ pattern forces a more structured forbidden configuration on the same triple.
\begin{lemma}[Any $101$ triple forces a zigzag configuration or $3$-cycle configuration on that triple]
\label{lemma:any-101-triple-forces-zigzag-or-3-cycle}
Let $Q \in \mathbb{R}_+^{k\times m}$ be a row-stochastic staircase $\varepsilon$-LDP channel with $\varepsilon>0$, and fix a staircase representation of its columns: for each $y\in[m]$, choose $\theta_y \ge 0$ and $H_y\subseteq[k]$ such that
\begin{align*}
    Q(i,y)=\theta_y\Bigl(1+(e^\varepsilon-1)\mathbf{1}\{i\in H_y\}\Bigr).
\end{align*}
Fix rows $a < b < c$. For each $s \in \{0,1\}^3$, let
\begin{align*}
    w_s := \sum_{y=1}^{m} \theta_y
    \mathbf{1}\left\{
    \bigl(\mathbf{1}\{a\in H_y\},\mathbf{1}\{b\in H_y\},\mathbf{1}\{c\in H_y\}\bigr)=s
    \right\}.
\end{align*}

If $w_{101} > 0$, then either $w_{010} > 0$, or both $w_{110} > 0$ and $w_{011} > 0$.
\end{lemma}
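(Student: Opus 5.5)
The plan is to use only the row-stochasticity of $Q$ restricted to the three rows $a,b,c$. For each row $i\in\{a,b,c\}$ we will write its row sum in terms of the weights $w_s$. With $c:=e^\varepsilon-1>0$ (we will rename it $\delta$ to avoid a clash with the row index), each column $y$ contributes $\theta_y$ to row $i$, plus an extra $\delta\theta_y$ whenever $i\in H_y$. Let $W:=\sum_s w_s=\sum_y\theta_y$. Row-stochasticity then gives, for each of the three rows,
\begin{align*}
    1 = W + \delta\sum_{s:\,s_a=1} w_s = W + \delta\sum_{s:\,s_b=1} w_s = W + \delta\sum_{s:\,s_c=1} w_s ,
\end{align*}
where $s=(s_a,s_b,s_c)$. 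Since $\delta>0$, the three partial sums of weights over patterns containing $a$, $b$, $c$ respectively must all be equal.

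Next we subtract these identities pairwise and cancel the common terms. Equating the $a$-sum with the $b$-sum cancels $w_{110}$ and $w_{111}$, which leaves
\begin{align*}
    w_{100}+w_{101} = w_{010}+w_{011}.
\end{align*}
Equating the $b$-sum with the $c$-sum cancels $w_{011}$ and $w_{111}$, which leaves
\begin{align*}
    w_{110}+w_{010} = w_{101}+w_{001}.
\end{align*}
All $w_s$ are nonnegative because each $\theta_y\ge 0$.

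To conclude, assume $w_{101}>0$ and $w_{010}=0$. The first identity gives $w_{011}=w_{100}+w_{101}\ge w_{101}>0$, and the second gives $w_{110}=w_{101}+w_{001}\ge w_{101}>0$. This is exactly the claimed dichotomy.

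There is no real obstacle here. The one point needing care is the representation convention. A constant active column may be represented with high set $[k]$ (pattern $111$, scale $\alpha e^{-\varepsilon}$) or with high set $\emptyset$ (pattern $000$, scale $\alpha$). In either case it contributes equally to all three row sums, so it cancels in the pairwise differences. Zero columns have $\theta_y=0$ and contribute nothing. Hence the argument holds for any fixed staircase representation, as the statement requires.
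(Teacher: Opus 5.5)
Your proof is correct and matches the paper's argument step for step. Row-stochasticity with $e^\varepsilon-1>0$ gives equal coverage at $a,b,c$, which yields the identities $w_{100}+w_{101}=w_{010}+w_{011}$ and $w_{001}+w_{101}=w_{010}+w_{110}$, and setting $w_{010}=0$ then forces $w_{011},w_{110}\ge w_{101}>0$; your added remark that this holds for any representation of constant columns is also correct.
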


\begin{proof}
For $i\in\{a,b,c\}$, define
\begin{align*}
    h_i:=\sum_{y:i\in H_y}\theta_y.
\end{align*}
Since $Q$ is row-stochastic,
\begin{align*}
    1=\sum_{y=1}^m Q(i,y) = \sum_{y=1}^m \theta_y + (e^\varepsilon-1)\sum_{y:i\in H_y}\theta_y,
\end{align*}
so $h_a=h_b=h_c$.

Also,
\begin{align*}
    h_a = w_{100}+w_{101}+w_{110}+w_{111},
\end{align*}
\begin{align*}
    h_b = w_{010}+w_{011}+w_{110}+w_{111},
\end{align*}
\begin{align*}
    h_c = w_{001}+w_{011}+w_{101}+w_{111}.
\end{align*}
Hence
\begin{align}
    w_{100}+w_{101} &= w_{010}+w_{011}, \label{eq:ha-eq-hb}\\
    w_{001}+w_{101} &= w_{010}+w_{110}. \label{eq:hc-eq-hb}
\end{align}

Assume $w_{101}>0$. If $w_{010}>0$, we are done. If $w_{010}=0$, then
\begin{align*}
    w_{011}=w_{100}+w_{101}>0,
    \qquad
    w_{110}=w_{001}+w_{101}>0
\end{align*}
by \eqref{eq:ha-eq-hb} and \eqref{eq:hc-eq-hb}. This gives the claimed alternative.
\end{proof}

\subsection{Perturbation lemmas}
\label{sec:perturbation-lemmas}
The next step is to show that each forbidden local configuration makes the induced point in the joint range non-extreme. The general strategy is to build small feasible perturbations of the staircase channel that preserve row-stochasticity and $\varepsilon$-LDP, and then choose them so that the original point becomes a nontrivial midpoint of two distinct feasible image points.

We first record a simple perturbation criterion for staircase channels.
\begin{lemma}[Perturbation criterion for staircase channels]
\label{lemma:perturbation-criterion-staircase}
Let $Q\in\mathbb{R}_+^{k\times m}$ be a row-stochastic staircase $\varepsilon$-LDP channel with $\varepsilon>0$. Let $D\in\mathbb{R}^{k\times m}$ be supported on active columns of $Q$, and assume
\begin{align*}
    D\mathbf{1}=0.
\end{align*}
Assume also that for every modified column $y$, every row $h$ that is high in column $y$, and every row $\ell$ that is low in column $y$,
\begin{align}
    \label{eq:perturbation-high-low-inequality}
    D_{h,y}\le e^\varepsilon D_{\ell,y}.
\end{align}
Then $Q+\gamma D$ is row-stochastic and $\varepsilon$-LDP for all sufficiently small $\gamma>0$.
\end{lemma}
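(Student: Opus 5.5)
Row-stochasticity of $Q+\gamma D$ is immediate from $D\mathbf 1=0$, for every $\gamma$. The work is in verifying nonnegativity and the $\eps$-LDP ratio constraints column by column. Unmodified columns (those with $D_{\cdot,y}=0$) are unchanged, and $D$ vanishes on inactive columns by assumption. So we only need to handle a modified active column $y$, written as $Q(i,y)=\theta_y(1+(e^\eps-1)\mathbf 1\{i\in H_y\})$ with $\theta_y>0$.

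First, nonnegativity. Every entry of an active column is at least $\theta_y>0$, so $Q(i,y)+\gamma D_{i,y}>0$ once $\gamma<\theta_y/\max_i|D_{i,y}|$. Taking the minimum over the finitely many modified columns gives a uniform threshold.

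Next, the LDP constraint $Q(i,y)+\gamma D_{i,y}\le e^\eps\bigl(Q(i',y)+\gamma D_{i',y}\bigr)$ for all pairs $i,i'$. We split by the high/low status of $i$ and $i'$.

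\emph{Pairs with strict slack.} If $i,i'$ are both high, both low, or $i$ is low and $i'$ is high, the unperturbed ratio is $Q(i,y)/Q(i',y)\in\{1,e^{-\eps}\}$. Since $\eps>0$, the inequality holds strictly at $\gamma=0$. Both sides are affine in $\gamma$, so it persists for all sufficiently small $\gamma$. Only finitely many pairs and columns are involved, so one threshold works for all of them.

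\emph{The tight pair.} If $i=h$ is high and $i'=\ell$ is low, the unperturbed constraint holds with equality: $e^\eps\theta_y=e^\eps\cdot\theta_y$. Subtracting this equality, the perturbed constraint reduces exactly to $\gamma D_{h,y}\le e^\eps\gamma D_{\ell,y}$. For $\gamma>0$ this is hypothesis \eqref{eq:perturbation-high-low-inequality}.

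This tight case is the only potential obstacle, and the hypothesis is tailored to it. The one subtlety is the convention for constant active columns. If such a column is written with $H_y=[k]$ (or $\emptyset$), there are no high--low pairs, and every pair has ratio $1<e^\eps$ with slack, so it falls under the first case. Combining the finitely many thresholds gives some $\gamma_0>0$ such that $Q+\gamma D$ is row-stochastic and $\eps$-LDP for all $0<\gamma\le\gamma_0$. By the same slack argument applied to $-D$ wherever the inequality is symmetric, this should later also be usable for $\pm\gamma$ perturbations in the midpoint arguments.
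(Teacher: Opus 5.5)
Your proof is correct and follows the paper's argument essentially step for step. Row sums are preserved by $D\mathbf 1=0$, nonnegativity follows from strict positivity of active staircase columns, and every ordered privacy inequality except the high-to-low one is strict at $\gamma=0$ and survives by continuity over finitely many constraints. The tight high-to-low case reduces exactly to your hypothesis. Your closing remark about $-D$ is not needed, and it would only apply where that hypothesis holds with equality. The paper's midpoint arguments instead pair two different one-sided admissible perturbations, namely $A$ and $\lambda B$, or $D_+$ and $D_-=G+\lambda C$, whose image increments cancel.
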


\begin{proof}
Since $D\mathbf{1}=0$, every row sum of $Q+\gamma D$ equals the corresponding row sum of $Q$, hence equals $1$.

Because $D$ is supported on active columns and every active staircase column is strictly positive entrywise, each modified entry of $Q$ starts strictly positive. Since there are finitely many modified entries, there exists $\gamma_0>0$ such that $Q+\gamma D\ge 0$ for all $0<\gamma<\gamma_0$.

Fix a modified column $y$. In a staircase column, every entry is either $\theta_y$ or $e^\varepsilon\theta_y$ for some $\theta_y>0$. Thus:
\begin{itemize}
    \item if both rows are high, or both are low, then the two entries are equal, so the ordered privacy inequality is
    \begin{align*}
        Q_{i,y}\le e^\varepsilon Q_{i',y},
    \end{align*}
    which is strict because $\varepsilon>0$;
    \item if row $h$ is high and row $\ell$ is low, then
    \begin{align*}
        Q_{h,y}=e^\varepsilon Q_{\ell,y},
    \end{align*}
    so this is the only potentially tight ordered privacy inequality;
    \item if row $\ell$ is low and row $h$ is high, then
    \begin{align*}
        Q_{\ell,y}<e^\varepsilon Q_{h,y},
    \end{align*}
    so that ordered inequality is also strict.
    \end{itemize}
    Therefore it is enough to check the high-to-low pairs. For such a pair,
    \begin{align*}
    Q_{h,y}=e^\varepsilon Q_{\ell,y},
\end{align*}
so after perturbation we get
\begin{align*}
    Q_{h,y}+\gamma D_{h,y}\le e^\varepsilon\bigl(Q_{\ell,y}+\gamma D_{\ell,y}\bigr)
\end{align*}
by \eqref{eq:perturbation-high-low-inequality}.

All other ordered privacy inequalities are strict at $\gamma=0$, so by continuity they remain valid for all sufficiently small $\gamma>0$. Since there are only finitely many such inequalities, shrinking $\gamma_0$ if necessary gives a single $\gamma_0>0$ for which every privacy inequality holds for all $0<\gamma<\gamma_0$. Therefore $Q+\gamma D$ is row-stochastic and $\varepsilon$-LDP for all sufficiently small $\gamma>0$.
\end{proof}

The next lemma adapts the midpoint perturbation argument used in the proof of Theorem~6 of \cite{Pensia2025BinaryTesting}. There, Pensia et al.\ rule out a non-threshold deterministic channel by showing that the induced point in the fixed-output joint range is not extreme. We use the same perturbative idea here for staircase $\varepsilon$-LDP channels containing a zigzag configuration.

\begin{lemma}[Zigzag configuration implies non-extremality in the joint range]
\label{lemma:zigzag-implies-non-extremality}
Let $p,q \in \Delta_k$ be in reduced strict extended likelihood-ratio order: all input symbols with $p_i=q_i=0$ have been deleted, equal extended likelihood ratios have been merged, and
\begin{align*}
    0\le r_1<r_2<\cdots<r_k\le \infty,
    \qquad
    r_i:=\frac{p_i}{q_i},
\end{align*}
with the convention that $p_i/0=\infty$ when $p_i>0$.
Let $Q \in \mathbb{R}_+^{k\times m}$ be a row-stochastic staircase $\varepsilon$-LDP channel with $\varepsilon>0$. If $Q$ contains a zigzag configuration, then the induced point $(Q^\top p,Q^\top q)$ is not an extreme point of
\begin{align*}
    \mathcal{R}_m(p,q)
    :=
    \{(T^\top p,T^\top q):
    T\in\mathbb{R}_+^{k\times m}
    \text{ is row-stochastic and }\varepsilon\text{-LDP}\}.
\end{align*}
Equivalently, there exist row-stochastic $\varepsilon$-LDP channels $Q',Q''\in\mathbb{R}_+^{k\times m}$ such that
\begin{align*}
    (Q^\top p,Q^\top q)
    =
    \tfrac12(Q'^\top p,Q'^\top q)
    +
    \tfrac12(Q''^\top p,Q''^\top q),
\end{align*}
and
\begin{align*}
    (Q'^\top p,Q'^\top q)\neq (Q''^\top p,Q''^\top q).
\end{align*}
Moreover, the channels $Q'$ and $Q''$ may be chosen so that they differ from $Q$ only in the two columns participating in the zigzag configuration.
\end{lemma}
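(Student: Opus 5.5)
The idea is to build two one-sided feasible perturbations of $Q$, supported on the two zigzag columns $j,j'$ and the three rows $a<b<c$, whose image displacements in $\mathbb R^m\times\mathbb R^m$ are exact negatives of each other. Then $(Q^\top p,Q^\top q)$ is the midpoint of two distinct feasible image points. We cannot simply use $\pm D$ for one perturbation $D$: Lemma~\ref{lemma:perturbation-criterion-staircase} only permits moves that lower high entries and raise low entries (relative to entries left unchanged). The strict likelihood-ratio order is what lets two different admissible moves produce opposite image shifts.

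\textbf{Step 1 (admissible directions).} Fix $x,y,z\ge 0$. Let $D=D(x,y,z)$ be supported on columns $j,j'$ and rows $a,b,c$, with
\begin{align*}
D_{a,j}=-x,\quad D_{b,j}=y,\quad D_{c,j}=-z,\qquad D_{a,j'}=x,\quad D_{b,j'}=-y,\quad D_{c,j'}=z .
\end{align*}
Each row sums to zero, so $D\mathbf 1=0$. We check \eqref{eq:perturbation-high-low-inequality} column by column.
In column $j$, rows $a,c$ are high and $b$ is low. For any high row $h$ and low row $\ell$ we have $D_{h,j}\le 0$ and $D_{\ell,j}\ge 0$, so the inequality holds; rows outside $\{a,b,c\}$ have $D=0$.
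In column $j'$, row $b$ is high with $D_{b,j'}=-y\le 0$, and rows $a,c$ are low with nonnegative entries, so again the inequality holds.
Lemma~\ref{lemma:perturbation-criterion-staircase} then shows that $Q+\gamma D(x,y,z)$ is row-stochastic and $\varepsilon$-LDP for all small $\gamma>0$. Moreover, only columns $j,j'$ are modified.

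\textbf{Step 2 (image displacement).} Write $w_i:=(p_i,q_i)\in\mathbb R^2$. The induced pair changes only in coordinates $j$ and $j'$. Column $j$ shifts by $\gamma\,(y\,w_b-x\,w_a-z\,w_c)$, and column $j'$ shifts by the negative of this. So the displacement is determined by the planar vector
\begin{align*}
v(x,y,z):=y\,w_b-x\,w_a-z\,w_c .
\end{align*}

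\textbf{Step 3 (using $r_a<r_b<r_c$).} The vectors $w_a,w_b,w_c$ are nonzero, since zero-mass symbols were deleted. Their slopes satisfy $r_a<r_b<r_c$ in the extended sense, which also covers $r_a=0$ and $r_c=\infty$. Hence $w_b$ lies in the open cone spanned by $w_a$ and $w_c$, and we can write $w_b=x_0w_a+z_0w_c$ with $x_0,z_0>0$. Now take two perturbations:
\begin{align*}
D':=D(x_0,1,0)\ \text{ with }\ v=z_0w_c,\qquad D'':=D(0,0,z_0)\ \text{ with }\ v=-z_0w_c .
\end{align*}
Choose $\gamma>0$ small enough for both, and set $Q'=Q+\gamma D'$ and $Q''=Q+\gamma D''$. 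Both are feasible by Step 1.
Their image points are the image of $Q$ shifted by $\pm\gamma z_0$ times the displacement built from $w_c$ in coordinates $j,j'$. Therefore their average is $(Q^\top p,Q^\top q)$.
The two image points are distinct because $w_c\neq 0$, $z_0>0$ and $j\ne j'$. Finally, $Q'$ and $Q''$ differ from $Q$ only in columns $j,j'$, as claimed.

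The step I expect to need the most care is Step 3. Writing $w_b$ as a strictly positive combination of $w_a$ and $w_c$ requires the strict order, and the degenerate cases $q_a=0$ or $p_a=0$ (and similarly for $c$) have to be checked. The key observation is that the cone generated by $w_b,-w_a,-w_c$ contains a full line; once that is in place, the remaining checks are routine.
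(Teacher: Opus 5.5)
Your proof is correct and follows the paper's strategy. Both arguments use only the sign-pattern moves permitted by Lemma~\ref{lemma:perturbation-criterion-staircase} on columns $j,j'$ and rows $a,b,c$, and use the strict order $r_a<r_b<r_c$ to find two such moves with opposite, nonzero image displacements. The one difference is how that pair is chosen: the paper takes two $q$-neutral perturbations (rows $a,b$, and rows $b,c$ rescaled by $\lambda=\Delta_{ab}/\Delta_{bc}$), whereas you use the conic decomposition $w_b=x_0w_a+z_0w_c$ to pair a rows-$(a,b)$ move against a single-row-$c$ move, which is equally valid.
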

\input{appendix/A_extreme_refinements/perturbation_zigzag_proof}

We now turn to the $3$-cycle configuration.

\begin{lemma}[$3$-cycle configuration implies non-extremality in the joint range]
\label{lemma:3-cycle-implies-non-extremality-in-joint-range}
Let $p,q \in \Delta_k$ and let $r_i := \frac{p_i}{q_i}$ be strictly increasing in the LR order. Let $Q \in \mathbb{R}_+^{k\times m}$ be a row-stochastic staircase $\varepsilon$-LDP channel with $\varepsilon > 0$. Suppose $Q$ contains a $3$-cycle configuration. Then the induced point $(Q^\top p, Q^\top q)$ is not an extreme point of the joint range
\begin{align*}
    \mathcal{R}_m(p,q):=\{(T^\top p,T^\top q): T \in \mathbb{R}_+^{k\times m} \text{ is }\varepsilon\text{-LDP and row-stochastic}\}.
\end{align*}
Equivalently, there exist row-stochastic $\varepsilon$-LDP channels $Q^+,Q^-\in\mathbb{R}_+^{k\times m}$ such that
\begin{align*}
    (Q^\top p,Q^\top q)
    =
    \tfrac12\bigl((Q^+)^\top p,(Q^+)^\top q\bigr)
    +
    \tfrac12\bigl((Q^-)^\top p,(Q^-)^\top q\bigr),
\end{align*}
and
\begin{align*}
    \bigl((Q^+)^\top p,(Q^+)^\top q\bigr)\neq \bigl((Q^-)^\top p,(Q^-)^\top q\bigr).
\end{align*}
Moreover, the channels $Q^+$ and $Q^-$ may be chosen so that they differ from $Q$ only in columns participating in the $3$-cycle configuration.
\end{lemma}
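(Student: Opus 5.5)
The plan is to produce two feasible perturbations of $Q$ whose images in the joint range are exact negatives of each other and nonzero. Then $(Q^\top p,Q^\top q)$ is their midpoint, mirroring the proof of Lemma~\ref{lemma:zigzag-implies-non-extremality}. Concretely, I look for $D^+,D^-\in\mathbb R^{k\times m}$ supported on columns $j_1,j_2,j_3$, each satisfying the hypotheses of Lemma~\ref{lemma:perturbation-criterion-staircase}, and such that
\begin{align*}
\bigl((D^+)^\top p,(D^+)^\top q\bigr)=-\bigl((D^-)^\top p,(D^-)^\top q\bigr)\neq 0 .
\end{align*}
Setting $Q^\pm:=Q+\gamma D^\pm$ for small $\gamma>0$ then gives the claimed decomposition, with changes only in the participating columns.

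I will use two families of elementary moves. Write $P_i:=(p_i,q_i)\in\mathbb R_+^2$. The first family is \emph{transfers}. At a row $x$ that is high in column $A$ and low in column $B$, lower $Q(x,A)$ and raise $Q(x,B)$ by the same amount. This keeps row sums fixed and satisfies \eqref{eq:perturbation-high-low-inequality}, since a high entry goes down and a low entry goes up. Its image is $(e_B-e_A)\otimes P_x$. On rows $a,b,c$ the patterns $(101),(110),(011)$ allow six transfers: $j_1\to j_3$ and $j_2\to j_3$ at $a$, $j_2\to j_1$ and $j_3\to j_1$ at $b$, and $j_1\to j_2$ and $j_3\to j_2$ at $c$. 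The second family is \emph{rescalings} $\delta_y$ times the full pattern vector of column $y$. These preserve the staircase shape, so they are admissible with either sign, and their image is $\delta_y\bigl((Q^\top p)_y,(Q^\top q)_y\bigr)$. Rescalings, however, violate row sums on every row and must be compensated by transfers.

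The key geometric input is strict LR order $r_a<r_b<r_c$. It lets me write $P_b=\alpha P_a+\beta P_c$ with $\alpha,\beta>0$; the cases $q_a=0$ or $p_c=0$ need separate care at the extended endpoints. This identity is what makes the $3$-cycle non-extreme: around the cycle $j_1\to j_2\to j_3\to j_1$, mass at $b$ can be re-expressed as mass at $a$ and $c$.

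My first check is whether transfers alone suffice, i.e.\ whether there are nonnegative coefficients $\lambda_1,\dots,\lambda_6$ on the six transfers with zero total image and some nonzero partial sum. Splitting those terms into $D^+$ and $D^-$ would finish the proof. Writing the $e_{j_1},e_{j_2},e_{j_3}$ components and using the identity for $P_b$, a quick computation suggests this system forces all $\lambda$'s to vanish. I therefore expect transfers alone to be insufficient.

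The main obstacle is thus to bring in rescalings of $j_1,j_2,j_3$ and cancel their row-sum defects on rows outside $\{a,b,c\}$. On those rows the membership pattern of $(H_{j_1},H_{j_2},H_{j_3})$ is arbitrary, and the only admissible compensations are the one-sided transfers available at that row. My approach is a Farkas-type argument: the set of images of admissible $D$ is a polyhedral cone $C$, and non-extremality is equivalent to $C\cap(-C)\neq\{0\}$. I would try to exhibit the symmetric direction explicitly. First, rescale $j_1$ up and $j_2,j_3$ down in proportions that make the row-sum defects on $a,b,c$ vanish; the patterns there sum to the constant $(2,2,2)$, which suggests $\delta_{j_1}=\delta_{j_2}=\delta_{j_3}$ up to sign. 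Then, on each remaining row, repair the defect with a transfer whose direction is dictated by that row's membership pattern, and choose $\alpha,\beta$-weighted transfers at $a,b,c$ to adjust the net image. Because every such row-repair is admissible in one sign only, I would build $D^+$ and $D^-$ separately rather than as $\pm D$. I expect the bookkeeping that makes their images exact negatives while keeping them nonzero to be the delicate part. Nonvanishing should follow from $r_a<r_b<r_c$, which prevents the transferred vectors $P_a,P_b,P_c$ from being collinear.
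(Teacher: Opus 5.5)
Your framing is right: non-extremality amounts to finding two admissible perturbations whose images are exact, nonzero negatives of each other, and these must be built separately because the useful moves are one-sided. Your check that the six local transfers on rows $a,b,c$ generate a pointed cone is also correct. But the proposal stops exactly where the lemma's content lies. No admissible pair $D^\pm$ is ever exhibited, and the concrete step you suggest fails for two reasons. First, transfers as you define them preserve row sums, so they cannot ``compensate'' rescaling defects; that would require single-entry moves, i.e. lowering a high entry or raising a low one. Second, no nonzero combination of rescalings of $j_1,j_2,j_3$ has zero defect on $\{a,b,c\}$. Restricted to those rows, the three profiles $(e^\varepsilon,1,e^\varepsilon)$, $(e^\varepsilon,e^\varepsilon,1)$, $(1,e^\varepsilon,e^\varepsilon)$ have determinant $(e^\varepsilon-1)^2(2e^\varepsilon+1)\neq 0$. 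The observation that the patterns sum to $(2,2,2)$ only gives a constant defect, not a vanishing one.

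The missing idea is a single non-local move that needs no row-by-row repair.
\begin{itemize}
\item Take the $(101)$ column $u$ as reference, and let $\alpha_i=1$ if $i\in H_u$ and $\alpha_i=e^{-\varepsilon}$ otherwise. Shift mass from $u$ into another column $z$ in the shape of $u$ itself: $G_{i,u}=-\alpha_i$, $G_{i,z}=+\alpha_i$.
\item This $G$ is admissible as it stands. In column $u$ it is a pure rescaling. In column $z$ the added entries lie in $\{1,e^{-\varepsilon}\}$, so $G_{h,z}\le 1\le e^\varepsilon G_{\ell,z}$. Its image is $-(\bar p,\bar q)(e_u-e_z)$, where $\bar p=\sum_i p_i\alpha_i$ and $\bar q=\sum_i q_i\alpha_i>0$.
\item The opposite direction is one transfer at row $b$ from $z$ into $u$, of size $\bar q/q_b$. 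Its image is $(r_b\bar q,\bar q)(e_u-e_z)$, so the $q$-parts of the two directions cancel.
\item The leftover mismatch $\bar p-r_b\bar q$ in the $p$-coordinate is absorbed by adding $\lambda C$ to $G$. Here $C$ is a $q$-neutral pair of transfers: out of $u$ into $z$ at a row $x$, and back at row $b$ with weight $q_x/q_b$. Its image is $(q_x r_b-p_x,0)(e_u-e_z)$.
\item Since $C$ is admissible only with $\lambda\ge 0$, its $p$-image must have the sign of $\bar p-r_b\bar q$. This forces a case split. When $\bar p\ge r_b\bar q$, take $x=a$ with $z$ the $(011)$ column, using $r_a<r_b$. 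When $\bar p<r_b\bar q$, take $x=c$ with $z$ the $(110)$ column, using $r_b<r_c$.
\end{itemize}
This case split is exactly where both remaining columns of the $3$-cycle are needed, and nothing in your plan anticipates it. Nonvanishing is then immediate from the $q$-component $2\gamma\bar q(e_u-e_z)$, so no collinearity argument on $P_a,P_b,P_c$ is required.
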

\input{appendix/A_extreme_refinements/perturbation_3_cycle_proof}

\subsection{Proof of Proposition~\ref{prop:extreme-refinement}}
\label{sec:proof-extreme-refinement}

\begin{proof}[Proof of Proposition~\ref{prop:extreme-refinement}]
Let $(u,v)\in \mathcal R_\ell(p,q)$ be an extreme point, and let $Q$ be an $\ell$-output row-stochastic $\varepsilon$-LDP channel such that
\begin{align*}
    (Q^\top p,Q^\top q)=(u,v).
\end{align*}

By Lemma~\ref{lem:nested-staircase-refinement}, there exist a staircase $\varepsilon$-LDP channel $\widetilde Q$ and a deterministic map $\tau:\widetilde{\mathcal Y}\to[\ell]$ such that
\begin{align*}
    Q(i,y)=\sum_{\tilde y:\tau(\tilde y)=y}\widetilde Q(i,\tilde y)
    \qquad (i\in[k],\ y\in[\ell]),
\end{align*}
and, for each $y\in[\ell]$, the high sets of the fine columns merged into $y$ are nested.

Suppose some active column of $\widetilde Q$ is not LR-contiguous. Then for some LR-ordered triple $a<b<c$ it has pattern $(101)$, so $w_{101}>0$ for the weights defined in Lemma~\ref{lemma:any-101-triple-forces-zigzag-or-3-cycle}. By that lemma, either $w_{010}>0$ or $w_{110}>0$ and $w_{011}>0$. Hence $\widetilde Q$ contains either a zigzag configuration or a $3$-cycle configuration on that triple. These columns are distinct, since one column cannot realize two different patterns on the same triple. Let $S$ be the set of columns involved in that local configuration.

No two columns in $S$ can be merged into the same coarse output. Indeed, for a fixed coarse output, the high sets of the fine columns merged into it are nested; but in a zigzag the two patterns $(101)$ and $(010)$ are not nested on $\{a,b,c\}$, and in a $3$-cycle the three patterns $(101)$, $(110)$, and $(011)$ are pairwise non-nested on $\{a,b,c\}$.

By the corresponding perturbation lemma (Lemma~\ref{lemma:zigzag-implies-non-extremality} in the zigzag case and Lemma~\ref{lemma:3-cycle-implies-non-extremality-in-joint-range} in the $3$-cycle case) there exist feasible row-stochastic $\varepsilon$-LDP channels $\widetilde Q^+$ and $\widetilde Q^-$, differing from $\widetilde Q$ only in columns from $S$, such that
\begin{align*}
    (\widetilde Q^\top p,\widetilde Q^\top q)
    =
    \tfrac12\bigl((\widetilde Q^+)^\top p,(\widetilde Q^+)^\top q\bigr)
    +
    \tfrac12\bigl((\widetilde Q^-)^\top p,(\widetilde Q^-)^\top q\bigr),
\end{align*}
with distinct endpoints.

Let $\Pi$ be the $|\widetilde{\mathcal Y}|\times \ell$ post-processing matrix induced by $\tau$, and define
\begin{align*}
    Q^+ &:= \widetilde Q^+\Pi, & Q^- &:= \widetilde Q^-\Pi.
\end{align*}
Since deterministic post-processing preserves row-stochasticity and $\varepsilon$-LDP, both $Q^+$ and $Q^-$ are feasible $\ell$-output channels. Moreover, by linearity of post-processing,
\begin{align*}
    (Q^\top p,Q^\top q) = \tfrac12\bigl((Q^+)^\top p,(Q^+)^\top q\bigr) + \tfrac12\bigl((Q^-)^\top p,(Q^-)^\top q\bigr).
\end{align*}

It remains to show that the two coarse endpoints are still distinct. Let
\begin{align*}
    \Delta_p &:= (\widetilde Q^+)^\top p-(\widetilde Q^-)^\top p, & \Delta_q &:= (\widetilde Q^+)^\top q-(\widetilde Q^-)^\top q.
\end{align*}
Since $\widetilde Q^+$ and $\widetilde Q^-$ differ only in columns from $S$, both $\Delta_p$ and $\Delta_q$ are supported on $S$. They are not both zero because the fine endpoints are distinct.

If the coarse endpoints were equal, then
\begin{align*}
    (Q^+)^\top p &= (Q^-)^\top p, & (Q^+)^\top q &= (Q^-)^\top q.
\end{align*}
Since $Q^+=\widetilde Q^+\Pi$ and $Q^-=\widetilde Q^-\Pi$, this would give
\begin{align*}
    \Pi^\top \Delta_p &= 0, & \Pi^\top \Delta_q &= 0.
\end{align*}
But the columns in $S$ are sent to distinct coarse outputs, so the restriction of $\Pi^\top$ to vectors supported on $S$ is injective. Hence $\Delta_p=\Delta_q=0$, a contradiction. Therefore
\begin{align*}
    \bigl((Q^+)^\top p,(Q^+)^\top q\bigr)\neq \bigl((Q^-)^\top p,(Q^-)^\top q\bigr).
\end{align*}

Thus $(u,v)=(Q^\top p,Q^\top q)$ is a nontrivial midpoint of two distinct points in $\mathcal R_\ell(p,q)$, contradicting the assumption that $(u,v)$ is extreme.

We conclude that every active column of $\widetilde Q$ is LR-contiguous. Since $(u,v)$ is realized by a deterministic post-processing of $\widetilde Q$, this proves the proposition.
\end{proof}

%% file: figures/proof_logic_figure.tex
\begin{tikzpicture}[
    node distance=15mm,
    box/.style={rectangle, rounded corners=3pt, draw=black!55,
                fill=blockblue!35, align=center, minimum height=13mm,
                minimum width=82mm, font=\small, inner sep=6pt},
    sbox/.style={rectangle, rounded corners=3pt, draw=black!55,
                 fill=blockyellow!55, align=center, minimum height=16mm,
                 minimum width=54mm, font=\small\bfseries, inner sep=6pt},
    redbox/.style={box, fill=red!10, draw=red!50, font=\small\itshape},
    gbox/.style={box, fill=blockgreen!55, draw=black!55,
                 font=\small\bfseries},
    arr/.style={-Latex, line width=1pt, color=black!75},
    darr/.style={arr, dashed, color=red!55!black},
    arrlbl/.style={font=\small, fill=white, inner sep=2pt,
                   align=center}
]

\node[box] (start)
  {Consider $(u,v)\in\mathcal R_\ell(p,q)$ extreme,\\
   realized by some $Q\in\mathcal Q_{\eps,\ell}$};

\node[box, below=of start] (staircase)
  {Staircase refinement $\widetilde Q$, post-processed back to $Q$\\
   by a deterministic $\tau$, with nested per-output high sets};
\draw[arr] (start) -- (staircase)
  node[midway, right=2pt, arrlbl] {Lemma~\ref{lem:nested-staircase-refinement}};

\node[redbox, below=of staircase] (triple)
  {\textbf{Suppose} some active column of $\widetilde Q$ has\\
   non-contiguous high set: pattern $(1,0,1)$ on a triple $a<b<c$};
\draw[darr] (staircase) -- (triple)
  node[midway, right=2pt, arrlbl] {for contradiction};

\node[sbox] (zigzag) at ($(triple.south)+(-3.0,-1.9)$)
  {Zigzag\\ \small\mdseries $(1,0,1)+(0,1,0)$};
\node[sbox] (cycle) at ($(triple.south)+(3.0,-1.9)$)
  {$3$-cycle\\ \small\mdseries $(1,0,1)+(1,1,0)+(0,1,1)$};
\draw[arr] (triple.south) -- (zigzag.north);
\draw[arr] (triple.south) -- (cycle.north);
\node[arrlbl] at ($(triple.south)+(0,-0.65)$) {Lemma~\ref{lemma:any-101-triple-forces-zigzag-or-3-cycle}};

\node[box] (perturb)
  at ($(zigzag.south)!0.5!(cycle.south)+(0,-2.2)$)
  {Construct feasible $\widetilde Q^+,\widetilde Q^-$, differing from $\widetilde Q$ only\\
   on the involved columns, with distinct image pair\\
   whose midpoint is $\widetilde Q$'s image pair};
\draw[arr] (zigzag.south) -- (zigzag.south |- perturb.north)
  node[pos=0.5, left=1pt, arrlbl] {Lemma~\ref{lemma:zigzag-implies-non-extremality}};
\draw[arr] (cycle.south) -- (cycle.south |- perturb.north)
  node[pos=0.5, right=1pt, arrlbl] {Lemma~\ref{lemma:3-cycle-implies-non-extremality-in-joint-range}};

\node[box, below=of perturb] (push)
  {Post-processing $\widetilde Q^\pm$ by $\tau$ yields feasible $\eps$-LDP $Q^\pm$\\
   with distinct image pair whose midpoint is still $(u,v)$};
\draw[arr] (perturb) -- (push)
  node[midway, right=2pt, arrlbl] {apply $\tau$};

\node[redbox, below=of push] (contra)
  {$(u,v)$ is a nontrivial midpoint of two points in $\mathcal R_\ell(p,q)$};
\draw[arr] (push) -- (contra);

\node[gbox, below=of contra] (concl)
  {Every active column of $\widetilde Q$ is LR-contiguous};
\draw[arr, line width=1.05pt] (contra) -- (concl)
  node[midway, right=2pt, arrlbl] {contradicts\\extremality};

\end{tikzpicture}

%% file: appendix/A_extreme_refinements/perturbation_zigzag_proof.tex
\begin{proof}
Let $a<b<c$ and $j\neq j'$ witness the zigzag configuration, so that on rows $(a,b,c)$ the columns $j$ and $j'$ have patterns
\begin{align*}
    101,\qquad 010.
\end{align*}
Let $e_j,e_{j'}$ be the standard basis vectors in $\mathbb{R}^m$.

Since the likelihood ratios are strictly ordered and $a<b<c$, the rows $a$ and $b$ have finite likelihood ratio, hence $q_a,q_b>0$. The row $c$ may have finite likelihood ratio, in which case $q_c>0$, or it may be the unique infinite-likelihood-ratio endpoint, in which case $q_c=0$ and $p_c>0$.

Define $A,B\in\mathbb{R}^{k\times m}$ by
\begin{align*}
    A_{a,j}=-1,\qquad A_{a,j'}=1,\qquad
    A_{b,j}=\frac{q_a}{q_b},\qquad A_{b,j'}=-\frac{q_a}{q_b},
\end{align*}
\begin{align*}
    B_{b,j}=q_c,\qquad B_{b,j'}=-q_c,\qquad
    B_{c,j}=-q_b,\qquad B_{c,j'}=q_b,
\end{align*}
and all other entries equal to $0$.

Equivalently,
\begin{align*}
    A=
    \begin{bmatrix}
        \cdots & \cdots & \cdots & \cdots & \cdots \\
        \cdots & -1 & \cdots & +1 & \cdots \\
        \cdots & \cdots & \cdots & \cdots & \cdots \\
        \cdots & +\dfrac{q_a}{q_b} & \cdots & -\dfrac{q_a}{q_b} & \cdots \\
        \cdots & \cdots & \cdots & \cdots & \cdots \\
        \cdots & 0 & \cdots & 0 & \cdots \\
        \cdots & \cdots & \cdots & \cdots & \cdots
    \end{bmatrix},
    \qquad
    B=
    \begin{bmatrix}
        \cdots & \cdots & \cdots & \cdots & \cdots \\
        \cdots & 0 & \cdots & 0 & \cdots \\
        \cdots & \cdots & \cdots & \cdots & \cdots \\
        \cdots & +q_c & \cdots & -q_c & \cdots \\
        \cdots & \cdots & \cdots & \cdots & \cdots \\
        \cdots & -q_b & \cdots & +q_b & \cdots \\
        \cdots & \cdots & \cdots & \cdots & \cdots
    \end{bmatrix}.
\end{align*}
When $q_c>0$, this $B$ is exactly $q_c$ times the original finite-ratio perturbation with entries $1,-1,-q_b/q_c,q_b/q_c$. When $q_c=0$, it reduces to the endpoint perturbation supported only on row $c$.

Both $A$ and $B$ are supported on the active columns $j,j'$ and satisfy
\begin{align*}
    A\mathbf{1}=\mathbf{0},\qquad B\mathbf{1}=\mathbf{0}.
\end{align*}
Moreover, in each modified column, every high-row perturbation is nonpositive and every low-row perturbation is nonnegative: for the rows not listed above, the perturbation is $0$. Hence the pairwise inequalities of Lemma~\ref{lemma:perturbation-criterion-staircase} hold for both $A$ and $B$. It follows that
\begin{align*}
    Q+\gamma A,\qquad Q+\gamma B
\end{align*}
are row-stochastic and $\varepsilon$-LDP for all sufficiently small $\gamma>0$.

Their images are
\begin{align*}
    A^\top q=\mathbf{0},
    \qquad
    A^\top p
    =
    \left(-p_a+\frac{q_a}{q_b}p_b\right)(e_j-e_{j'})
    =
    q_a(r_b-r_a)(e_j-e_{j'}),
\end{align*}
and
\begin{align*}
    B^\top q=\mathbf{0},
    \qquad
    B^\top p
    =
    (q_c p_b-q_b p_c)(e_j-e_{j'}).
\end{align*}
Define
\begin{align*}
    \Delta_{ab}:=q_a(r_b-r_a)>0,
    \qquad
    \Delta_{bc}:=q_b p_c-q_c p_b.
\end{align*}
Then
\begin{align*}
    A^\top p=\Delta_{ab}(e_j-e_{j'}),
    \qquad
    B^\top p=-\Delta_{bc}(e_j-e_{j'}).
\end{align*}
We have $\Delta_{bc}>0$. Indeed, if $q_c>0$, then
\begin{align*}
    \Delta_{bc}
    =
    q_bq_c(r_c-r_b)>0,
\end{align*}
while if $q_c=0$, then $p_c>0$ and
\begin{align*}
    \Delta_{bc}=q_b p_c>0.
\end{align*}
Set
\begin{align*}
    \lambda:=\frac{\Delta_{ab}}{\Delta_{bc}}
    =
    \frac{q_a(r_b-r_a)}{q_b p_c-q_c p_b}.
\end{align*}
Then $\lambda>0$, and
\begin{align*}
    (\lambda B)^\top q=\mathbf{0},
    \qquad
    (\lambda B)^\top p=-A^\top p.
\end{align*}
Since $\lambda\ge 0$, the same sign condition also holds for $\lambda B$, so
\begin{align*}
    Q+\gamma A,\qquad Q+\gamma \lambda B
\end{align*}
are row-stochastic and $\varepsilon$-LDP for all sufficiently small $\gamma>0$.

Choose $\gamma>0$ small enough that both
\begin{align*}
    Q':=Q+\gamma A,
    \qquad
    Q'':=Q+\gamma \lambda B
\end{align*}
are row-stochastic and $\varepsilon$-LDP. Then
\begin{align*}
    (Q^\top p,Q^\top q)
    =
    \frac12(Q'^\top p,Q'^\top q)
    +
    \frac12(Q''^\top p,Q''^\top q),
\end{align*}
because
\begin{align*}
    (A^\top p,A^\top q)+((\lambda B)^\top p,(\lambda B)^\top q)=0.
\end{align*}
This convex combination is nontrivial since
\begin{align*}
    Q'^\top p-Q^\top p
    =
    \gamma A^\top p
    =
    \gamma q_a(r_b-r_a)(e_j-e_{j'})\neq 0.
\end{align*}
Hence $(Q^\top p,Q^\top q)$ is not an extreme point of $\mathcal{R}_m(p,q)$.

By construction, $Q'$ and $Q''$ differ from $Q$ only in the two columns $j$ and $j'$.
\end{proof}

\paragraph{In matrix form.}
The perturbed channels $Q'=Q+\gamma A$ and $Q''=Q+\gamma\lambda B$ have entries
\begin{align*}
    Q'=
    \begin{bmatrix}
        \cdots & \cdots & \cdots & \cdots & \cdots \\
        \cdots & e^\varepsilon \theta_j-\gamma & \cdots & \theta_{j'}+\gamma & \cdots \\
        \cdots & \cdots & \cdots & \cdots & \cdots \\
        \cdots & \theta_j+\gamma \frac{q_a}{q_b} & \cdots & e^\varepsilon \theta_{j'}-\gamma \frac{q_a}{q_b} & \cdots \\
        \cdots & \cdots & \cdots & \cdots & \cdots \\
        \cdots & e^\varepsilon \theta_j & \cdots & \theta_{j'} & \cdots \\
        \cdots & \cdots & \cdots & \cdots & \cdots
    \end{bmatrix},
    \qquad
    Q''=
    \begin{bmatrix}
        \cdots & \cdots & \cdots & \cdots & \cdots \\
        \cdots & e^\varepsilon \theta_j & \cdots & \theta_{j'} & \cdots \\
        \cdots & \cdots & \cdots & \cdots & \cdots \\
        \cdots & \theta_j+\gamma\lambda q_c & \cdots & e^\varepsilon \theta_{j'}-\gamma\lambda q_c & \cdots \\
        \cdots & \cdots & \cdots & \cdots & \cdots \\
        \cdots & e^\varepsilon \theta_j-\gamma\lambda q_b & \cdots & \theta_{j'}+\gamma\lambda q_b & \cdots \\
        \cdots & \cdots & \cdots & \cdots & \cdots
    \end{bmatrix},
\end{align*}
with $\lambda = \frac{q_a(r_b-r_a)}{q_b p_c-q_c p_b}$ and remaining entries (denoted $\cdots$) unchanged from $Q$.

%% file: appendix/A_extreme_refinements/perturbation_3_cycle_proof.tex
\begin{proof}
Let $e_1,\dots,e_m$ be the standard basis of $\mathbb{R}^m$. Choose rows $a<b<c$ and distinct columns $u,v,w$ witnessing the $3$-cycle, so that on rows $(a,b,c)$ the columns $u,v,w$ have patterns
\begin{align*}
    101,\qquad 110,\qquad 011.
\end{align*}
We take column $u$ as the reference column. For each row $i$, define
\begin{align*}
    \alpha_i=
    \begin{cases}
        1, & \text{if row $i$ is high in column $u$,}\\
        e^{-\varepsilon}, & \text{if row $i$ is low in column $u$,}
    \end{cases}
\end{align*}
and set
\begin{align*}
    \bar q=\sum_{i=1}^k q_i\alpha_i,\qquad
    \bar p=\sum_{i=1}^k p_i\alpha_i.
\end{align*}
Then $\bar q>0$.

We distinguish two cases:
\begin{align*}
    \text{Case I: }\bar p\ge r_b\bar q,
    \qquad
    \text{Case II: }\bar p<r_b\bar q.
\end{align*}
In the two cases, define $x,z,\Delta,\lambda$ as follows:
\begin{align*}
    \renewcommand{\arraystretch}{2.0}
    \begin{array}{c|c|c|c|c}
         & x & z & \Delta & \lambda \\ \hline
        \text{Case I} & a & w & q_a(r_b-r_a) &
        \dfrac{\bar p-r_b\bar q}{\Delta} \\
        \text{Case II} & c & v & p_c-q_c r_b &
        \dfrac{r_b\bar q-\bar p}{\Delta}.
    \end{array}
\end{align*}
In both cases, row $x$ is high in column $u$ and low in column $z$, while row $b$ is low in column $u$ and high in column $z$. Also $\Delta>0$ in both cases. In Case I this follows from $q_a>0$ and $r_b>r_a$. In Case II, if $q_c>0$ then
\begin{align*}
    \Delta=p_c-q_c r_b=q_c(r_c-r_b)>0,
\end{align*}
while if $q_c=0$, then by the reduced extended likelihood-ratio order $p_c>0$, and hence $\Delta=p_c>0$. Therefore $\lambda\ge 0$ in both cases.

Define $D_+$, $G$, and $C$ by
\begin{align*}
    (D_+)_{b,u}=\frac{\bar q}{q_b},
    \qquad
    (D_+)_{b,z}=-\frac{\bar q}{q_b},
\end{align*}
\begin{align*}
    G_{i,u}=-\alpha_i,\qquad G_{i,z}=+\alpha_i \quad (i\in[k]),
\end{align*}
and
\begin{align*}
    C_{x,u}=-1,\qquad C_{x,z}=1,\qquad
    C_{b,u}=\frac{q_x}{q_b},\qquad C_{b,z}=-\frac{q_x}{q_b},
\end{align*}
with all other entries equal to zero. Finally set
\begin{align*}
    D_-:=G+\lambda C.
\end{align*}

Each of $D_+$, $G$, and $C$ is supported on the two columns $u,z$ and has row sums equal to zero. Hence $D_-\mathbf 1=0$ as well.

We now verify the high--low inequality in Lemma~\ref{lemma:perturbation-criterion-staircase}. For $D_+$, in column $u$ the only nonzero perturbation is on the low row $b$ and is nonnegative, while in column $z$ the only nonzero perturbation is on the high row $b$ and is nonpositive. Thus $D_+$ satisfies the required high--low inequality. For $C$, in column $u$ the nonzero perturbations are $-1$ on the high row $x$ and $q_x/q_b\ge 0$ on the low row $b$, while in column $z$ they are $1$ on the low row $x$ and $-q_x/q_b\le 0$ on the high row $b$. Thus $C$ also satisfies the high--low inequality. For $G$, in column $u$ every high-row perturbation is $-1$ and every low-row perturbation is $-e^{-\varepsilon}$, so the high--low inequality holds with equality. In column $z$, all entries of $G$ lie in $\{1,e^{-\varepsilon}\}$, and hence for every high row $h$ and low row $\ell$,
\begin{align*}
    G_{h,z}\le 1 \le e^\varepsilon G_{\ell,z}.
\end{align*}
Since $\lambda\ge 0$, the same high--low inequality holds for $D_-=G+\lambda C$. Therefore, by Lemma~\ref{lemma:perturbation-criterion-staircase},
\begin{align*}
    Q+\gamma D_+,\qquad Q+\gamma D_-
\end{align*}
are row-stochastic and $\varepsilon$-LDP for all sufficiently small $\gamma>0$.

We now compute the induced image increments. By construction,
\begin{align*}
    D_+^\top q=\bar q(e_u-e_z),
    \qquad
    D_+^\top p=r_b\bar q(e_u-e_z),
\end{align*}
and
\begin{align*}
    G^\top q=-\bar q(e_u-e_z),
    \qquad
    G^\top p=-\bar p(e_u-e_z).
\end{align*}
Moreover,
\begin{align*}
    C^\top q=
    \Bigl(-q_x+\frac{q_x}{q_b}q_b\Bigr)e_u
    +
    \Bigl(q_x-\frac{q_x}{q_b}q_b\Bigr)e_z
    =\mathbf 0,
\end{align*}
and
\begin{align*}
    C^\top p=
    \Bigl(-p_x+\frac{q_x}{q_b}p_b\Bigr)e_u
    +
    \Bigl(p_x-\frac{q_x}{q_b}p_b\Bigr)e_z
    =
    (q_xr_b-p_x)(e_u-e_z).
\end{align*}
By the definitions of $\Delta$ and $\lambda$, in both cases
\begin{align*}
    \lambda(q_xr_b-p_x)=\bar p-r_b\bar q.
\end{align*}
Indeed, in Case I, $q_xr_b-p_x=q_a(r_b-r_a)=\Delta$, while in Case II, $q_xr_b-p_x=q_c r_b-p_c=-\Delta$.

Therefore,
\begin{align*}
    D_-^\top q=G^\top q+\lambda C^\top q=-\bar q(e_u-e_z),
\end{align*}
and
\begin{align*}
    D_-^\top p
    =
    G^\top p+\lambda C^\top p
    =
    \bigl(-\bar p+\lambda(q_xr_b-p_x)\bigr)(e_u-e_z)
    =
    -r_b\bar q(e_u-e_z).
\end{align*}
Hence
\begin{align*}
    (D_-^\top p,D_-^\top q)=-(D_+^\top p,D_+^\top q).
\end{align*}

Choose $\gamma>0$ small enough that both
\begin{align*}
    Q^+:=Q+\gamma D_+,
    \qquad
    Q^-:=Q+\gamma D_-
\end{align*}
are row-stochastic and $\varepsilon$-LDP. Then, by linearity,
\begin{align*}
    (Q^\top p,Q^\top q)
    =
    \frac12\bigl((Q^+)^\top p,(Q^+)^\top q\bigr)
    +
    \frac12\bigl((Q^-)^\top p,(Q^-)^\top q\bigr).
\end{align*}
This convex combination is nontrivial because
\begin{align*}
    (Q^+)^\top q-(Q^-)^\top q
    =
    2\gamma\bar q(e_u-e_z)\neq 0,
\end{align*}
since $\bar q>0$ and $u\neq z$. Hence
\begin{align*}
    \bigl((Q^+)^\top p,(Q^+)^\top q\bigr)
    \neq
    \bigl((Q^-)^\top p,(Q^-)^\top q\bigr),
\end{align*}
so $(Q^\top p,Q^\top q)$ is not an extreme point of $\mathcal R_m(p,q)$.

By construction, $Q^+$ and $Q^-$ differ from $Q$ only in the two modified columns, namely $\{u,w\}$ in Case I and $\{u,v\}$ in Case II. In particular, they differ from $Q$ only in columns participating in the $3$-cycle configuration.
\end{proof}

%% file: appendix/B_convex_hull_spr.tex
\section{Convex hull of SPR mechanisms}
\label{sec:proof-convex-hull-partition-rr}

We isolate the combinatorial core of the proof as a decomposition lemma for weighted interval families with constant coverage.

\begin{lemma}[Layer decomposition of constant-coverage interval families]
\label{lem:constant-coverage-interval-decomposition}
Suppose we have finitely many nonempty intervals $I_y\subseteq [k]$, indexed by $y\in A$, and positive weights $\theta_y>0$. Assume that for each $i\in [k]$, the total weight of intervals covering $i$,
\begin{align*}
    h(i):=\sum_{y:\, i\in I_y}\theta_y,
\end{align*}
is the same positive constant. Then there exist subsets $\Pi_1,\dots,\Pi_N\subseteq A$ and numbers $\lambda_1,\dots,\lambda_N>0$ such that:
\begin{enumerate}
\item for each $t$, the intervals $\{I_y:y\in \Pi_t\}$ form a partition of $[k]$;
\item for each $y\in A$,
\begin{align*}
    \theta_y=\sum_{t:\, y\in \Pi_t}\lambda_t.
\end{align*}
\end{enumerate}
\end{lemma}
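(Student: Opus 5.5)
We argue by induction on the number of intervals with positive weight, $|A|$, peeling off one partition layer at a time in a greedy left-to-right fashion. Let $h>0$ denote the common coverage value. The key claim is that the current weighted family always contains a sub-family $\Pi\subseteq A$ whose intervals partition $[k]$. Given such a $\Pi$, we set
\begin{align*}
    \lambda:=\min_{y\in\Pi}\theta_y>0
\end{align*}
and subtract $\lambda$ from $\theta_y$ for every $y\in\Pi$. Since $\Pi$ covers each point of $[k]$ exactly once, the coverage $h(i)$ drops by exactly $\lambda$ for every $i$, so it stays constant, now equal to $h-\lambda$. At least one weight becomes zero, and we discard the corresponding intervals. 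If $h-\lambda=0$, then every remaining weight must be zero: each interval is nonempty and so contributes positively to the coverage of some point. The procedure therefore stops. Otherwise we recurse on a strictly smaller family. Because $|A|$ is finite, the process terminates. Summing the layers $(\Pi_t,\lambda_t)$ recovers $\theta_y=\sum_{t:y\in\Pi_t}\lambda_t$ for every $y$, since each $\theta_y$ is reduced to zero exactly through the layers containing $y$.

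To build $\Pi$, we proceed greedily. We start with $x=1$. As long as $x\le k$, we choose some interval in the family with left endpoint exactly $x$, say $[x:b]$, add it to $\Pi$, and set $x\leftarrow b+1$. The resulting intervals are disjoint, contiguous and cover $[k]$, hence they form a partition. The only thing to verify is that an interval starting at each required $x$ exists.

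This existence step is the main point, and we handle it with a telescoping coverage argument. For $x=1$, we have $h(1)=h>0$, and every interval containing $1$ starts at $1$, so such an interval exists. For $x=b+1\le k$, compare $h(x)$ with $h(x-1)$. Write $L_x$ for the total weight of intervals with left endpoint $x$ and $R_{x-1}$ for the total weight of intervals with right endpoint $x-1$. Then
\begin{align*}
    h(x)=h(x-1)-R_{x-1}+L_x .
\end{align*}
Constant coverage gives $L_x=R_{x-1}$. The previously chosen interval $[\cdot:x-1]$ has positive weight and ends at $x-1$, so $R_{x-1}>0$. Hence $L_x>0$, and some interval starts at $x$. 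This argument uses only the constant-coverage hypothesis, which is preserved after each peeling step, so the construction of $\Pi$ works at every stage of the induction.
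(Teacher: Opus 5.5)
Your proposal is correct and follows essentially the same route as the paper. You build each partition layer greedily from left to right, using the constant-coverage identity $h(x)=h(x-1)-R_{x-1}+L_x$ to show the next block exists (the paper phrases this as the weight of intervals ending at $b_r$ equaling the weight of intervals starting at $b_r+1$); you then subtract $\lambda=\min_{y\in\Pi}\theta_y$, which keeps coverage constant and zeroes at least one weight, so the process terminates.
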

The lemma says that if weighted intervals overlap in such a way that every position $i \in [k]$ sees the same total weight, then the whole system can be peeled into layers, each of which is a partition of $[k]$. The coefficients $\lambda_t$ record how much weight each layer contributes, and adding the layers back together recovers the original interval weights. This can be viewed as the standard flow-decomposition theorem \cite{Ahuja1993NetworkFlows} applied to the DAG that identifies each interval $[a:b]$ with an edge $a-1\to b$.
\begin{proof}
We peel off the interval family one partition layer at a time. At each stage, we keep track of the remaining weight on each interval and choose a collection of intervals that partitions $[k]$; we then subtract the same amount from every interval in that layer.

Formally, start with remaining weights $w_y:=\theta_y$. At any stage, let
\begin{align*}
    h_{\mathrm{cur}}(i):=\sum_{y:\, i\in I_y} w_y
\end{align*}
be the total remaining weight covering $i$. Initially $h_{\mathrm{cur}}$ is constant by assumption, and after each layer is removed it will remain constant.

As long as some $w_y$ is positive, we build one layer from left to right. Since some positive-weight interval is nonempty, there is at least one point with positive remaining coverage. Because $h_{\mathrm{cur}}$ is constant, this implies that $h_{\mathrm{cur}}(i)>0$ for every $i\in [k]$, in particular for $i=1$. Hence there exists a positive-weight interval containing $1$, and since it is an interval, it must be of the form $[1:b_1]$ for some $b_1$.
Now suppose we have already chosen disjoint positive-weight intervals
\begin{align*}
    [1:b_1],\ [b_1+1:b_2],\ \dots,\ [b_{r-1}+1:b_r]
\end{align*}
and that $b_r<k$. Because $h_{\mathrm{cur}}(b_r)=h_{\mathrm{cur}}(b_r+1)$, the total remaining weight of intervals ending at $b_r$ equals the total remaining weight of intervals starting at $b_r+1$:
\begin{align*}
    \sum_{y:\, b_r\in I_y,\ b_r+1\notin I_y} w_y
    =
    \sum_{y:\, b_r\notin I_y,\ b_r+1\in I_y} w_y.
\end{align*}
The last chosen interval ends at $b_r$, so the left-hand side is positive. Hence the right-hand side is also positive, and there exists a positive-weight interval starting at $b_r+1$. Choose one as the next block.

This shows that the construction cannot stop before reaching $k$. Therefore it produces a set of intervals $\{I_y:y\in \Pi_t\}$ that partitions $[k]$.

Now let
\begin{align*}
    \lambda_t:=\min_{y\in \Pi_t} w_y>0,
\end{align*}
and subtract $\lambda_t$ from the weight of every interval in this layer:
\begin{align*}
    w_y\leftarrow
    \begin{cases}
    w_y-\lambda_t, & y\in \Pi_t,\\
    w_y, & y\notin \Pi_t.
    \end{cases}
\end{align*}
Since the intervals in $\Pi_t$ partition $[k]$, each point of $[k]$ lies in exactly one of them, so $h_{\mathrm{cur}}(i)$ decreases by exactly $\lambda_t$ for every $i$. Thus $h_{\mathrm{cur}}$ remains constant. Also, by the choice of $\lambda_t$, at least one interval in $\Pi_t$ now has remaining weight zero. Therefore the process terminates after finitely many steps.

Finally, each time $I_y$ appears in a layer $\Pi_t$, we subtract exactly $\lambda_t$ from its remaining weight. Since $I_y$ starts with weight $\theta_y$ and ends with weight $0$, the total amount removed from it is
\begin{align*}
    \theta_y=\sum_{t:\, y\in \Pi_t}\lambda_t.
\end{align*}
This proves the lemma.
\end{proof}

\begin{proof}[Proof of Proposition~\ref{prop:convex-hull-partition-rr}]
We first view $Q$ as a weighted family of LR-contiguous intervals, and then apply Lemma~\ref{lem:constant-coverage-interval-decomposition}. Example~\ref{ex:convex-decomposition-spr} at the end of this section illustrates the construction on a $5\times 5$ instance.

Let $A\subseteq[m]$ be the set of active columns of $Q$. For each $y\in A$, choose a scale $\theta_y>0$ and a nonempty LR-contiguous interval $I_y\subseteq [k]$ such that
\begin{align*}
    Q(i,y)=\theta_y\Bigl(1+(e^\eps-1)\mathbf 1\{i\in I_y\}\Bigr)
    \qquad (i\in [k]).
\end{align*}
(If column $y$ is constant, take $I_y=[k]$ and choose $\theta_y$ to be the common entry divided by $e^\eps$.) For $y\notin A$, we have $Q(i,y)=0$ for all $i$.

Define the ``coverage'' function
\begin{align*}
    h(i):=\sum_{y\in A:\, i\in I_y}\theta_y.
\end{align*}
Since $Q$ is row-stochastic,
\begin{align*}
    1=\sum_{y\in A}Q(i,y)
    =\sum_{y\in A}\theta_y+(e^\eps-1)h(i),
\end{align*}
so $h(i)$ is independent of $i$. It is also positive: because $A\neq\varnothing$ and each $I_y$ is nonempty, there exist $y\in A$ and $i\in I_y$, giving $h(i)\ge \theta_y>0$.

Lemma~\ref{lem:constant-coverage-interval-decomposition} therefore yields subsets $\Pi_1,\dots,\Pi_N\subseteq A$ and numbers $\lambda_1,\dots,\lambda_N>0$ such that, for each $t$, the family $\{I_y:y\in \Pi_t\}$ is a partition of $[k]$, and
\begin{align*}
    \theta_y=\sum_{t:\, y\in \Pi_t}\lambda_t \qquad (y\in A).
\end{align*}

For each layer $t$, the set $\Pi_t$ is a partition of $[k]$ into $s_t:=|\Pi_t|$ contiguous blocks. We now turn this partition into an $\eps$-LDP mechanism by applying $s_t$-ary randomized response to the block label. Thus, for row $i$, the unique block in $\Pi_t$ containing $i$ gets probability $e^\eps/(e^\eps+s_t-1)$, and each of the other $s_t-1$ active outputs gets probability $1/(e^\eps+s_t-1)$. We index these active outputs by the corresponding $y\in \Pi_t$, and keep the output alphabet $[m]$ by setting all other columns to zero. This gives the mechanism $\widehat Q^{(t)}$, which is just the SPR mechanism associated with $\Pi_t$, relabeled and zero-padded.

The layer decomposition tells us how much raw staircase weight comes from each layer: $\lambda_t$ is the thickness of layer $t$. Since $\widehat Q^{(t)}$ is the normalized SPR mechanism associated with that layer, we choose
\begin{align*}
    \beta_t:=\lambda_t(e^\eps+s_t-1)>0
\end{align*}
so that the normalization factor cancels. Indeed, for any $y\in \Pi_t$,
\begin{align*}
    \beta_t \widehat Q^{(t)}(i,y)
    = \lambda_t\Bigl(1+(e^\eps-1)\mathbf 1\{i\in I_y\}\Bigr),
\end{align*}
while for $y\notin \Pi_t$ we have $\widehat Q^{(t)}(i,y)=0$. Thus layer $t$ contributes exactly the raw staircase amount coming from the interval decomposition.

We now verify that
\begin{align*}
    Q=\sum_{t=1}^N \beta_t \widehat Q^{(t)}.
\end{align*}

If $y\notin A$, then both sides are identically zero in column $y$. If $y\in A$, then
\begin{align*}
\sum_{t=1}^N \beta_t \widehat Q^{(t)}(i,y)
&=
\sum_{t:\, y\in \Pi_t}
\lambda_t\Bigl(1+(e^\eps-1)\mathbf 1\{i\in I_y\}\Bigr)\\
&=
\Bigl(\sum_{t:\, y\in \Pi_t}\lambda_t\Bigr)
\Bigl(1+(e^\eps-1)\mathbf 1\{i\in I_y\}\Bigr)\\
&=
\theta_y\Bigl(1+(e^\eps-1)\mathbf 1\{i\in I_y\}\Bigr)\\
&=Q(i,y),
\end{align*}
using the decomposition of $\theta_y$. Hence the identity holds entrywise.

Both $Q$ and $\widehat Q^{(t)}$ are row-stochastic. Summing over the entries in any fixed row of $Q=\sum_{t=1}^N \beta_t \widehat Q^{(t)}$ therefore gives
\begin{align*}
    1=\sum_{t=1}^N \beta_t.
\end{align*}
Thus $Q$ is a convex combination of relabeled, zero-padded SPR mechanisms, as claimed.
\end{proof}
\begin{example}[Convex decomposition of a one-run staircase mechanism]
\label{ex:convex-decomposition-spr}
We consider a staircase mechanism $Q$ where each column has exactly one run of $e^\eps$, and exhibit a convex combination $Q=\beta_1\widehat Q^{(1)}+\beta_2\widehat Q^{(2)}+\beta_3\widehat Q^{(3)}$ with $\widehat Q^{(1)}, \widehat Q^{(2)}, \widehat Q^{(3)}$ all SPR. Take \(\varepsilon=1\), so \(e^\varepsilon=e\), and let $D:=17+14e$.
Consider the five output columns with high sets
\begin{align*}
    I_1=[1,1],\qquad
    I_2=[1,2],\qquad
    I_3=[2,2],\qquad
    I_4=[2,5],\qquad
    I_5=[3,5],
\end{align*}
and weights
\begin{align*}
    \theta_1=\frac{10}{D},\qquad
    \theta_2=\frac{4}{D},\qquad
    \theta_3=\frac{3}{D},\qquad
    \theta_4=\frac{7}{D},\qquad
    \theta_5=\frac{7}{D}.
\end{align*}
Define \(Q\in\mathbb R_+^{5\times 5}\) by
\begin{align*}
    Q(i,y)=\theta_y\Bigl(1+(e-1)\mathbf 1\{i\in I_y\}\Bigr).
\end{align*}
Equivalently,
\begin{align*}
    Q=
    \frac1D
    \begin{bmatrix}
    10e & 4e & 3 & 7 & 7\\
    10  & 4e & 3e& 7e& 7\\
    10  & 4  & 3 & 7e& 7e\\
    10  & 4  & 3 & 7e& 7e\\
    10  & 4  & 3 & 7e& 7e
    \end{bmatrix}.
\end{align*}
Each row sums to \(1\), since for example
\begin{align*}
    10e+4e+3+7+7=14e+17=D,
\end{align*}
and similarly for the other rows. The total weight of intervals containing each input is constant:
\begin{align*}
    \theta_1+\theta_2=\frac{14}{D},\qquad
    \theta_2+\theta_3+\theta_4=\frac{14}{D},\qquad
    \theta_4+\theta_5=\frac{14}{D}.
\end{align*}
Thus the weighted interval family has constant row-coverage. Now peel off partition layers:
\begin{itemize}
\item First layer:
\begin{align*}
    \Pi_1=\{I_1,I_4\}=\{[1,1],[2,5]\},
    \qquad
    \lambda_1=\min\{\theta_1,\theta_4\}=\frac7D.
\end{align*}
Subtracting \(\lambda_1\) leaves
\begin{align*}
    \theta_1'=\frac3D,\quad
    \theta_2'=\frac4D,\quad
    \theta_3'=\frac3D,\quad
    \theta_4'=0,\quad
    \theta_5'=\frac7D.
\end{align*}
\item Second layer:
\begin{align*}
    \Pi_2=\{I_2,I_5\}=\{[1,2],[3,5]\},
    \qquad
    \lambda_2=\min\{\theta_2',\theta_5'\}=\frac4D.
\end{align*}
Subtracting \(\lambda_2\) leaves
\begin{align*}
    \theta_1''=\frac3D,\quad
    \theta_2''=0,\quad
    \theta_3''=\frac3D,\quad
    \theta_4''=0,\quad
    \theta_5''=\frac3D.
\end{align*}
\item Third layer:
\begin{align*}
    \Pi_3=\{I_1,I_3,I_5\}=\{[1,1],[2,2],[3,5]\},
    \qquad
    \lambda_3=\frac3D.
\end{align*}
After subtracting \(\lambda_3\), all remaining weights are zero.
\end{itemize}
So the original weights decompose as
\begin{align*}
    \theta_1=\lambda_1+\lambda_3,\qquad
    \theta_2=\lambda_2,\qquad
    \theta_3=\lambda_3,\qquad
    \theta_4=\lambda_1,\qquad
    \theta_5=\lambda_2+\lambda_3.
\end{align*}
Each layer \(\Pi_t\) gives an SPR mechanism \(\widehat Q^{(t)}\), relabeled onto the common output alphabet \([5]\):
\begin{align*}
    \widehat Q^{(1)}=
    \begin{bmatrix}
    \frac{e}{e+1} & 0 & 0 & \frac1{e+1} & 0\\[4pt]
    \frac1{e+1} & 0 & 0 & \frac{e}{e+1} & 0\\
    \frac1{e+1} & 0 & 0 & \frac{e}{e+1} & 0\\
    \frac1{e+1} & 0 & 0 & \frac{e}{e+1} & 0\\
    \frac1{e+1} & 0 & 0 & \frac{e}{e+1} & 0
    \end{bmatrix},
\end{align*}
\begin{align*}
    \widehat Q^{(2)}=
    \begin{bmatrix}
    0 & \frac{e}{e+1} & 0 & 0 & \frac1{e+1}\\[4pt]
    0 & \frac{e}{e+1} & 0 & 0 & \frac1{e+1}\\
    0 & \frac1{e+1} & 0 & 0 & \frac{e}{e+1}\\
    0 & \frac1{e+1} & 0 & 0 & \frac{e}{e+1}\\
    0 & \frac1{e+1} & 0 & 0 & \frac{e}{e+1}
    \end{bmatrix},
\end{align*}
\begin{align*}
    \widehat Q^{(3)}=
    \begin{bmatrix}
    \frac{e}{e+2} & 0 & \frac1{e+2} & 0 & \frac1{e+2}\\[4pt]
    \frac1{e+2} & 0 & \frac{e}{e+2} & 0 & \frac1{e+2}\\
    \frac1{e+2} & 0 & \frac1{e+2} & 0 & \frac{e}{e+2}\\
    \frac1{e+2} & 0 & \frac1{e+2} & 0 & \frac{e}{e+2}\\
    \frac1{e+2} & 0 & \frac1{e+2} & 0 & \frac{e}{e+2}
    \end{bmatrix}.
\end{align*}
With
\begin{align*}
    \beta_1=\lambda_1(e+1)=\frac{7(e+1)}{D},\qquad
    \beta_2=\lambda_2(e+1)=\frac{4(e+1)}{D},\qquad
    \beta_3=\lambda_3(e+2)=\frac{3(e+2)}{D},
\end{align*}
we have
\begin{align*}
    \beta_1+\beta_2+\beta_3
    =\frac{7(e+1)+4(e+1)+3(e+2)}{D}
    =\frac{14e+17}{D}=1,
\end{align*}
and
\begin{align*}
    Q=\beta_1\widehat Q^{(1)}+\beta_2\widehat Q^{(2)}+\beta_3\widehat Q^{(3)}.
\end{align*}
\end{example}

%% file: appendix/C_coarsening_spr.tex
\section{Nontrivial coarsenings of SPR mechanisms}
\label{sec:proof-no-new-vertices-under-coarsening}

We show that a nontrivial deterministic merging of the outputs of an SPR mechanism never creates a new extreme point. The reason is that, after merging, one copy of each coarse output still carries the usual randomized-response signal, while the remaining merged copies contribute only hypothesis-independent noise.

\begin{proof}[Proof of Proposition~\ref{prop:no-new-vertices-under-coarsening}]
Let $\pi=(B_1,\dots,B_s)$ be an LR-contiguous partition of $[k]$, and let $Q^\pi$ be the corresponding SPR channel. Equivalently, $Q^\pi$ applies $s$-ary randomized response to the partition label:
\begin{align*}
    Q^\pi(j\mid x)=
    \begin{cases}
    \dfrac{e^\eps}{e^\eps+s-1}, & x\in B_j,\\[6pt]
    \dfrac{1}{e^\eps+s-1}, & x\notin B_j.
    \end{cases}
\end{align*}

Let $\tau:[s]\to[\ell]$ be a deterministic merging map, and write $\tau\circ Q^\pi$ for the induced channel
\begin{align*}
    (\tau\circ Q^\pi)(y\mid x):=\sum_{j:\tau(j)=y} Q^\pi(j\mid x)
    \qquad (x\in[k],\ y\in[\ell]).
\end{align*}
Write
\begin{align*}
    A:=\tau([s])\subseteq [\ell]
\end{align*}
for the coarse output labels that remain active, with $t:=|A|$. Assume $1<t<s$, so at least one pair of fine outputs is merged, but the channel is not collapsed to a single active output. For each $y\in A$, define
\begin{align*}
    n_y:=|\tau^{-1}(y)|,
    \qquad
    C_y:=\bigcup_{j:\tau(j)=y} B_j.
\end{align*}
Thus $n_y$ is the number of fine labels merged into $y$, and the sets $\{C_y:y\in A\}$ form the coarser partition of $[k]$ induced by $\tau$.

We now define two $\ell$-output channels. The first is randomized response on the coarser partition $\{C_y:y\in A\}$, and the second is an input-independent channel capturing the remaining mass.
\begin{itemize}
    \item Let $R$ be randomized response on the coarser partition:
    \begin{align*}
    R(y\mid x)=
    \begin{cases}
        \dfrac{1+(e^\eps-1)\mathbf 1\{x\in C_y\}}{e^\eps+t-1}, & y\in A,\\[8pt]
        0, & y\notin A.
        \end{cases}
    \end{align*}

    \item Let $U$ be the input-independent channel
    \begin{align*}
    U(y\mid x)=
    \begin{cases}
        \dfrac{n_y-1}{s-t}, & y\in A,\\[8pt]
        0, & y\notin A,
        \end{cases}
        \qquad x\in [k].
    \end{align*}
    This is well defined because $\sum_{y\in A}(n_y-1)=\sum_{y\in A}n_y-t=s-t$.
\end{itemize}

Set $\displaystyle \alpha:=\frac{e^\eps+t-1}{e^\eps+s-1}\in(0,1)$.
We claim that
\begin{align*}
    \tau\circ Q^\pi=\alpha R+(1-\alpha)U.
\end{align*}
Fix $x\in[k]$ and $y\in[\ell]$. If $y\notin A$, then both sides are zero. Now suppose $y\in A$.

\begin{itemize}
    \item If $x\in C_y$, then among the $n_y$ fine outputs merged into $y$, exactly one corresponds to the block containing $x$, and therefore has probability $\frac{e^\eps}{e^\eps+s-1}$; the other $n_y-1$ have probability $\frac{1}{e^\eps+s-1}$. Hence
    \begin{align*}
        (\tau\circ Q^\pi)(y\mid x)=\frac{e^\eps+n_y-1}{e^\eps+s-1}
        =\alpha\frac{e^\eps}{e^\eps+t-1}+(1-\alpha)\frac{n_y-1}{s-t}.
    \end{align*}
    \item If $x\notin C_y$, then none of the $n_y$ fine outputs merged into $y$ corresponds to the block containing $x$, so all $n_y$ of them have probability $\frac{1}{e^\eps+s-1}$. Therefore
    \begin{align*}
        (\tau\circ Q^\pi)(y\mid x)=\frac{n_y}{e^\eps+s-1}
        =\alpha\frac{1}{e^\eps+t-1}+(1-\alpha)\frac{n_y-1}{s-t}.
    \end{align*}
\end{itemize}
This proves the claim.

Let
\begin{align*}
    M_0:=((\tau\circ Q^\pi)^\top p),
    \qquad
    M_1:=((\tau\circ Q^\pi)^\top q),
\end{align*}
and similarly let
\begin{align*}
    M_0^R:=R^\top p,\quad M_1^R:=R^\top q,
    \qquad
    M_0^U:=U^\top p,\quad M_1^U:=U^\top q.
\end{align*}
Since $U$ is input-independent, we have $M_0^U=M_1^U$. Write this common distribution as $c:=M_0^U=M_1^U$, or equivalently,
\begin{align*}
    c(y)=M_0^U(y)=M_1^U(y)=
    \begin{cases}
    \dfrac{n_y-1}{s-t}, & y\in A,\\[8pt]
    0, & y\notin A.
    \end{cases}
\end{align*}
Taking induced marginals under $p$ and $q$ in the identity
\begin{align*}
    \tau\circ Q^\pi=\alpha R+(1-\alpha)U
\end{align*}
gives
\begin{align*}
    (M_0,M_1)=\alpha(M_0^R,M_1^R)+(1-\alpha)(c,c).
\end{align*}
Both terms on the right lie in $\mathcal R_\ell(p,q)$: the pair $(M_0^R,M_1^R)$ comes from the $\eps$-LDP channel $R$, and $(c,c)$ comes from the input-independent channel $U$.
\begin{itemize}
    \item If $(M_0^R,M_1^R)\neq(c,c)$, this is already a nontrivial convex decomposition, so $(M_0,M_1)$ is not extreme.

    \item If $(M_0^R,M_1^R)=(c,c)$, then $c=M_0^R=M_1^R$, and $(M_0,M_1)=\alpha(c,c)+(1-\alpha)(c,c)=(c,c)$, so it suffices to show that $(c,c)$ is not an extreme point of $\mathcal R_\ell(p,q)$. For every $y\in A$,
\begin{align*}
    R(y\mid x)\ge \frac{1}{e^\eps+t-1}\qquad \text{for all }x\in[k],
\end{align*}
so
\begin{align*}
    c(y)=M_0^R(y)\ge \frac{1}{e^\eps+t-1}>0.
\end{align*}
Since $t=|A|>1$, choose distinct $y_1,y_2\in A$ and
\begin{align*}
    0<\eta<\min\{c(y_1),c(y_2)\}.
\end{align*}
Obtain two new probability distributions by moving mass $\eta$ from $y_2$ to $y_1$ and in the opposite direction:
\begin{align*}
    c^+(y)=
    \begin{cases}
    c(y_1)+\eta, & y=y_1,\\
    c(y_2)-\eta, & y=y_2,\\
    c(y), & \text{otherwise},
    \end{cases}
    \qquad
    c^-(y)=
    \begin{cases}
    c(y_1)-\eta, & y=y_1,\\
    c(y_2)+\eta, & y=y_2,\\
    c(y), & \text{otherwise}.
    \end{cases}
\end{align*}
Then $c^+\neq c^-$ and
\begin{align*}
    c=\frac12(c^++c^-).
\end{align*}
Since $c^\pm$ are probability distributions, the input-independent channels
\begin{align*}
    U^\pm(y\mid x):=c^\pm(y)
\end{align*}
are feasible, so $(c^+,c^+),(c^-,c^-)\in \mathcal R_\ell(p,q)$. Therefore
\begin{align*}
    (c,c)=\frac12(c^+,c^+)+\frac12(c^-,c^-)
\end{align*}
is not an extreme point of $\mathcal R_\ell(p,q)$.

\end{itemize}

In either case, the image point induced by $\tau\circ Q^\pi$ is not an extreme point of $\mathcal R_\ell(p,q)$.
\end{proof}

%% file: appendix/D_egamma_proofs.tex
\section{Optimality of the generalized binary mechanism for \texorpdfstring{$E_\gamma$}{E\_gamma}}
\label{sec:proof-optimal-egamma}

\begin{proof}[Proof of Theorem~\ref{thm:optimal-egamma}]
Let
\begin{align*}
    S_\gamma := \{ x \in \X : P_0(x) \ge \gamma P_1(x) \}.
\end{align*}
If $S_\gamma = \emptyset$ or $S_\gamma = \X$, then $E_\gamma(P_0\|P_1) = 0$, and the theorem is immediate: in the second case, summing $P_0(x) \ge \gamma P_1(x)$ over $x$ gives $1 \ge \gamma$, which together with $\gamma \ge 1$ forces $\gamma = 1$ and $P_0 = P_1$, hence $M_0 = M_1$. So assume both $S_\gamma$ and $S_\gamma^c$ are nonempty.

\citet{Zamanlooy2024E_gammaMixing} (Theorem~3) show the upper bound in Theorem~\ref{thm:optimal-egamma}. We show that the generalized binary mechanism attains this bound. Under $Q_\gamma$,
\begin{align*}
    M_0(0) - \gamma M_1(0)
    = \sum_x \bigl( P_0(x) - \gamma P_1(x) \bigr) Q_\gamma(0 \mid x).
\end{align*}

By the Neyman-Pearson characterization of $E_\gamma$,
\begin{equation}
    \label{eq:egamma-np}
    E_\gamma(P_0\|P_1) = P_0(S_\gamma) - \gamma P_1(S_\gamma),
    \qquad
    P_0(S_\gamma^c) - \gamma P_1(S_\gamma^c) = (1 - \gamma) - E_\gamma(P_0\|P_1) .
\end{equation}

Splitting the sum over $S_\gamma$ and $S_\gamma^c$ and using \eqref{eq:egamma-np}, we get
\begin{align*}
    M_0(0) - \gamma M_1(0)
    = \frac{e^\eps}{1 + e^\eps} \bigl( P_0(S_\gamma) - \gamma P_1(S_\gamma) \bigr)
    + \frac{1}{1 + e^\eps} \bigl( P_0(S_\gamma^c) - \gamma P_1(S_\gamma^c) \bigr)
\end{align*}
\begin{align*}
    = \frac{e^\eps E_\gamma(P_0\|P_1) + (1 - \gamma - E_\gamma(P_0\|P_1))}{1 + e^\eps}
    = \frac{e^\eps - 1}{e^\eps + 1} E_\gamma(P_0 \| P_1) + \frac{1 - \gamma}{e^\eps + 1}.
\end{align*}
Since $E_\gamma(M_0 \| M_1)$ is the supremum over all events, it is at least the value of the event $\{0\}$ and also at least $0$ (by taking the empty event). Hence
\begin{align*}
    E_\gamma(M_0 \| M_1)
    \ge \max\{ 0, \, M_0(0) - \gamma M_1(0) \}
    = \left( \frac{e^\eps - 1}{e^\eps + 1} E_\gamma(P_0 \| P_1) + \frac{1 - \gamma}{e^\eps + 1} \right)_+.
\end{align*}
Combined with the upper bound, this proves equality and therefore optimality.
\end{proof}

We show that one can recover the bound from our results as well:

\begin{corollary}[Recovering the $E_\gamma$ upper bound from SPR optimality]
SPR optimality implies
\begin{align*}
    E_\gamma(M_0 \| M_1)
    \le
    \left(
        \frac{e^\eps - 1}{e^\eps + 1} E_\gamma(P_0 \| P_1)
        + \frac{1 - \gamma}{e^\eps + 1}
    \right)_+ .
\end{align*}
\end{corollary}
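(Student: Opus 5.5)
By Corollary~\ref{cor:fdiv-spr-optimality} applied to the $f$-divergence with $f_\gamma(t)=(t-\gamma)_+$, it suffices to bound $E_\gamma(M_0\|M_1)$ for the SPR mechanism $Q^\pi$ of an arbitrary LR-contiguous partition $\pi=(B_1,\dots,B_s)$ of $[k]$. For $Q^\pi$ the value equals
\begin{align*}
    E_\gamma(M_0\|M_1)=\sum_{j=1}^s\bigl(M_0(j)-\gamma M_1(j)\bigr)_+ .
\end{align*}
Let $d(x):=P_0(x)-\gamma P_1(x)$ and $d(B):=\sum_{x\in B}d(x)$. Since $\sum_x d(x)=1-\gamma$, we get
\begin{align*}
    M_0(j)-\gamma M_1(j)=\frac{(e^\eps-1)\,d(B_j)+(1-\gamma)}{e^\eps+s-1}.
\end{align*}

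Let $J:=\{j: M_0(j)-\gamma M_1(j)>0\}$. If $J=\emptyset$, the value is $0$ and the bound holds trivially. Otherwise we set $n:=|J|\ge 1$ and $B_J:=\bigcup_{j\in J}B_j$. The value then equals
\begin{align*}
    \frac{(e^\eps-1)\,d(B_J)+n(1-\gamma)}{e^\eps+s-1}.
\end{align*}
By the Neyman--Pearson characterization, $d(B_J)\le E_\gamma(P_0\|P_1)=:E$. Because $1-\gamma\le 0$ and $n\ge 1$, we have $n(1-\gamma)\le 1-\gamma$. This gives
\begin{align*}
    E_\gamma(M_0\|M_1)\le \frac{(e^\eps-1)E+(1-\gamma)}{e^\eps+s-1}.
\end{align*}
Note that $E\ge 0$.

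It remains to replace the denominator $e^\eps+s-1$ by $e^\eps+1$. If the numerator $N:=(e^\eps-1)E+(1-\gamma)$ is nonpositive, then the value (a sum of positive parts that is bounded by $N/(e^\eps+s-1)\le 0$ whenever $J\ne\emptyset$) forces $J=\emptyset$, and the bound $(\cdot)_+=0$ holds. If $N>0$ and $s\ge 2$, then $N/(e^\eps+s-1)\le N/(e^\eps+1)$, which is the claimed bound.

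The main obstacle is the case $s=1$, where the denominator $e^\eps$ is smaller than $e^\eps+1$. This case is easy to dispose of directly: the single-block mechanism is input-independent, so $M_0=M_1$ and $E_\gamma(M_0\|M_1)=(1-\gamma)_+=0$. The only other subtlety is the case $\gamma=1$ with $J$ of size $n\ge 2$; this is covered because we only used $n(1-\gamma)\le 1-\gamma$, which is an equality when $\gamma=1$. Combining the cases, every SPR mechanism satisfies the bound, and by SPR optimality so does every $\eps$-LDP mechanism.
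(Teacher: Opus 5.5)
Your proposal is correct and follows essentially the same route as the paper: you reduce to SPR mechanisms, write each block's contribution as $\bigl((e^\eps-1)d(B_j)+(1-\gamma)\bigr)/(e^\eps+s-1)$, bound the $|J|$ copies of the $(1-\gamma)$ penalty by a single copy and $d(B_J)$ by $E_\gamma(P_0\|P_1)$, and then use $s\ge 2$, with $s=1$ handled separately because that mechanism is input-independent. The only difference is cosmetic: the paper factors out $e^\eps-1$ and carries $(\cdot)_+$ through the chain, whereas you split on the sign of the numerator $N$.
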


\begin{proof}
Since $E_\gamma$ is an $f$-divergence, it is enough to optimize over SPR mechanisms. Consider an arbitrary SPR mechanism induced by an LR-contiguous partition $\pi=(B_1,\dots,B_s)$ into $s$ blocks.

If $s=1$, then the SPR mechanism has a single output and is input-independent, so $M_0^\pi=M_1^\pi$. Since $\gamma\ge 1$, this implies $E_\gamma(M_0^\pi\|M_1^\pi)=0$, which is at most the right-hand side of the claimed bound because the latter is nonnegative by definition of $(\cdot)_+$. Hence it remains to consider $s\ge 2$.

The induced output probabilities at the output corresponding to block $B$ are
\begin{align*}
    M_0^\pi(B)
    =
    \frac{1+(e^\eps-1)P_0(B)}{e^\eps+s-1},
    \qquad
    M_1^\pi(B)
    =
    \frac{1+(e^\eps-1)P_1(B)}{e^\eps+s-1}.
\end{align*}
Therefore the contribution of this output to $E_\gamma(M_0^\pi\|M_1^\pi)$ is
\begin{align*}
    \bigl(M_0^\pi(B)-\gamma M_1^\pi(B)\bigr)_+
    &=
    \frac{\bigl(1-\gamma+(e^\eps-1)\bigl(P_0(B)-\gamma P_1(B)\bigr)\bigr)_+}{e^\eps+s-1} \\
    &=
    \frac{e^\eps-1}{e^\eps+s-1}\Bigl(P_0(B)-\gamma P_1(B)-\dfrac{\gamma-1}{e^\eps-1}\Bigr)_+,
\end{align*}
where the last equality factors out $e^\eps-1>0$. Summing over blocks,
\begin{align}
    \label{eq:egamma-spr-partition-value}
    E_\gamma(M_0^\pi\|M_1^\pi)
    =
    \frac{e^\eps-1}{e^\eps+s-1}
    \sum_{B\in\pi} \Bigl(P_0(B)-\gamma P_1(B)-\dfrac{\gamma-1}{e^\eps-1}\Bigr)_+.
\end{align}

We upper-bound the partition score: Let $\mathrm{Pos}(\pi)$ denote the blocks of $\pi$ that contribute positively to the sum, i.e., $\mathrm{Pos}(\pi):=\{B\in\pi:P_0(B)-\gamma P_1(B)>\frac{\gamma-1}{e^\eps-1}\}$.
If $\mathrm{Pos}(\pi)=\emptyset$, then the sum in \eqref{eq:egamma-spr-partition-value} is zero.

Otherwise, since $|\mathrm{Pos}(\pi)|\ge 1$ and $\dfrac{\gamma-1}{e^\eps-1}\ge 0$,
\begin{align*}
    \sum_{B\in\pi} \Bigl(P_0(B)-\gamma P_1(B)-\dfrac{\gamma-1}{e^\eps-1}\Bigr)_+
    &=
    \sum_{B\in\mathrm{Pos}(\pi)} \bigl(P_0(B)-\gamma P_1(B)\bigr) - |\mathrm{Pos}(\pi)|\,\dfrac{\gamma-1}{e^\eps-1} \\
    &\le
    \sum_{B\in\mathrm{Pos}(\pi)} \bigl(P_0(B)-\gamma P_1(B)\bigr) - \dfrac{\gamma-1}{e^\eps-1} \\
    &\le
    \sum_{i=1}^k \bigl(P_0(i)-\gamma P_1(i)\bigr)_+ - \dfrac{\gamma-1}{e^\eps-1} \\
    &=
    E_\gamma(P_0\|P_1) - \dfrac{\gamma-1}{e^\eps-1} .
\end{align*}
Thus, in all cases,
\begin{align}
    \label{eq:block-positive-bound}
    \sum_{B\in\pi} \Bigl(P_0(B)-\gamma P_1(B)-\dfrac{\gamma-1}{e^\eps-1}\Bigr)_+
    \le
    \Bigl(E_\gamma(P_0\|P_1)-\dfrac{\gamma-1}{e^\eps-1}\Bigr)_+ .
\end{align}

Combining \eqref{eq:egamma-spr-partition-value} and \eqref{eq:block-positive-bound} gives
\begin{align*}
    E_\gamma(M_0^\pi\|M_1^\pi)
    &\le
    \frac{e^\eps-1}{e^\eps+s-1}\Bigl(E_\gamma(P_0\|P_1)-\dfrac{\gamma-1}{e^\eps-1}\Bigr)_+ \\
    &\le
    \frac{e^\eps-1}{e^\eps+1}\Bigl(E_\gamma(P_0\|P_1)-\dfrac{\gamma-1}{e^\eps-1}\Bigr)_+ \\
    &=
    \left(
        \frac{e^\eps-1}{e^\eps+1}E_\gamma(P_0\|P_1)
        + \frac{1-\gamma}{e^\eps+1}
    \right)_+ .
\end{align*}
By SPR optimality for $f$-divergence objectives, the same bound holds for every $\eps$-LDP mechanism.
\end{proof}

Equation~\eqref{eq:egamma-spr-partition-value} conveys the intuition behind the optimal mechanism being binary: the factor $\frac{e^\eps-1}{e^\eps+s-1}$ decreases with the number of outputs, and the $\frac{\gamma-1}{e^\eps-1}$ penalty scales with the number of outputs.

%% file: appendix/E_numerical_experiments.tex
\section{Numerical experiments}
\label{sec:experiments}

\subsection{Monte-Carlo verification of the joint-range polytope}
\label{sec:joint-range-monte-carlo}

Figure~\ref{fig:joint-range-three-eps} empirically corroborates Theorem~\ref{thm:extreme-points-spr} at three privacy levels for the same $(p,q)$ used in Figure~\ref{fig:joint-range-k5-eps2}. For each $\varepsilon\in\{0.4,2,5\}$ we sample random $\varepsilon$-LDP channels $Q\in\mathcal Q_{\varepsilon,2}$, plot their induced image points $(Q^\top p,Q^\top q)$, and overlay the predicted polytope: the convex hull of the $2(k-1)+2=10$ SPR-induced candidate points from Theorem~\ref{thm:extreme-points-spr}. As $\varepsilon$ grows the feasible set of channels expands, the cloud of Monte-Carlo samples spreads accordingly, and additional SPR vertices become extreme; in every panel the cloud is contained in the predicted hull.

\paragraph{Sampling procedure.}
Each panel uses $N=2{,}000{,}000$ candidate channels drawn as follows. For $\varepsilon=0.4$, each of the $k$ rows is sampled uniformly on $[0,1]^2$ and then renormalized so that the two row entries sum to $1$; for $\varepsilon\in\{2,5\}$, the $k$ rows are instead drawn from a $\mathrm{Beta}(0.3,0.3)$ corner-seeking distribution and renormalized analogously, which concentrates proposals near the boundary of the feasible set where the LDP constraint is most likely to bind. A candidate is then kept only if every column ratio lies in $[e^{-\varepsilon},e^\varepsilon]$, i.e.\ the channel is $\varepsilon$-LDP. The corresponding rejection-sampling acceptance rates were $0.147\%$, $1.675\%$, and $26.336\%$ for $\varepsilon=0.4,2,5$ respectively. Proposals are generated with NumPy's \texttt{default\_rng(20260503)}, so the figure is bit-exactly reproducible.

\begin{figure}[tbp]
  \centering
  \includegraphics[width=\linewidth]{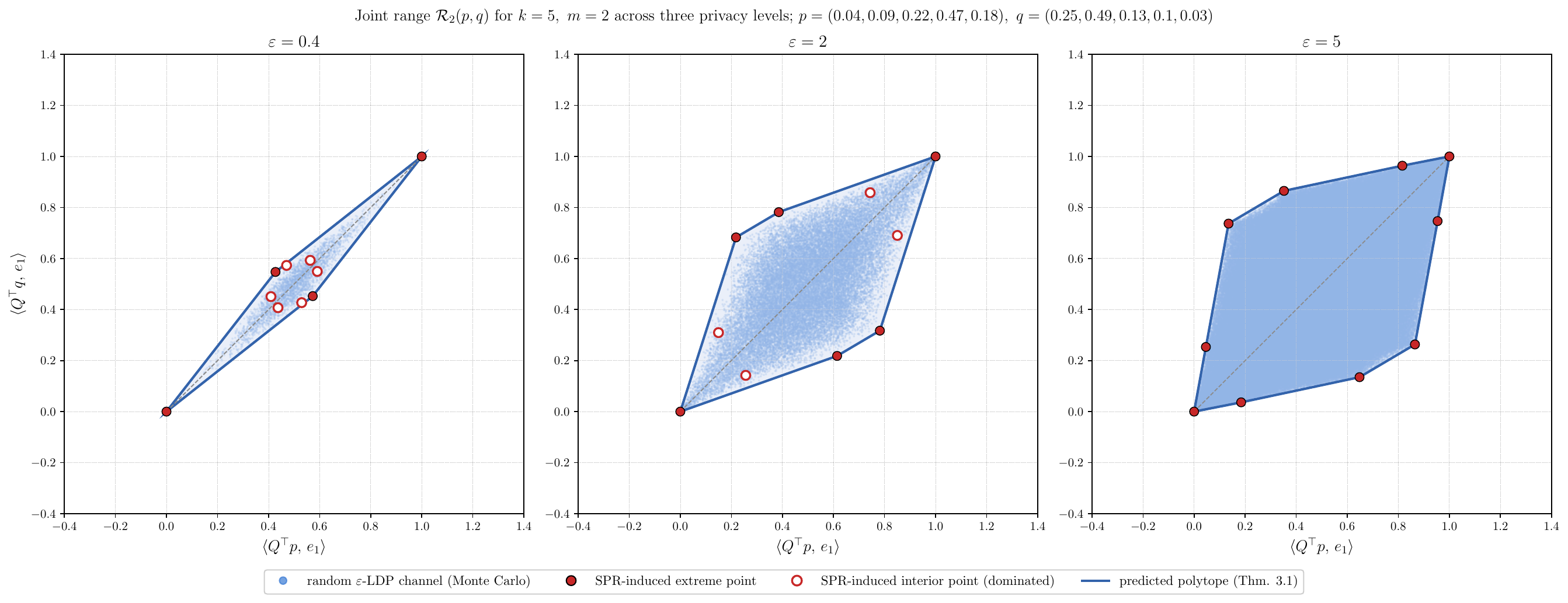}
  \caption{Monte-Carlo verification of $\mathcal R_2(p,q)$ at three privacy levels for $k=5$, $\ell=2$, $p=(0.04,0.09,0.22,0.47,0.18)$, $q=(0.25,0.49,0.13,0.10,0.03)$. Each panel shows random $\varepsilon$-LDP channels (gray cloud), the $2(k-1)+2=10$ SPR-induced candidate points (red: filled if extreme, hollow if dominated), and the predicted polytope (Theorem~\ref{thm:extreme-points-spr}, in blue). The hull contains every sampled channel at each $\varepsilon$. At $\varepsilon=0.4$ only $4$ of $10$ SPR candidates are extreme; at $\varepsilon=2$ that grows to $6$; at $\varepsilon=5$ all $10$ are extreme. Again, because $\ell=2$, the plotted coordinates give the full induced pair $(Q^\top p,Q^\top q)$, so points marked as non-extreme are not artifacts of a lower-dimensional projection.}
  \label{fig:joint-range-three-eps}
\end{figure}

\subsection{KL comparison: setup, bootstrap CIs, and runtimes}
\label{sec:experiments-kl}

We numerically compare the SPR dynamic program (Algorithm~\ref{alg:optimal-spr-dp}) against the closed-form binary mechanism, $k$-ary randomized response, and (where tractable) the Kairouz--Oh--Viswanath linear program~\cite{Kairouz2016ExtremalMechanisms}, on the KL utility
\begin{align*}
    \frac{D_{\mathrm{kl}}(M_0\|M_1)}{D_{\mathrm{kl}}(P_0\|P_1)},
    \qquad
    M_\nu=Q^{\!\top}P_\nu,
\end{align*}
which we maximize over $\eps$-LDP channels $Q$.

\paragraph{Setup.}
For each value of the alphabet size $k$, we draw $T=100$ pairs
$(P_0^{(t)},P_1^{(t)})\sim\mathrm{Dirichlet}(\mathbf 1_k)\otimes\mathrm{Dirichlet}(\mathbf 1_k)$ once at the start of the experiment (NumPy seed $0$), uniform on the probability simplex. The same $100$ pairs are reused for every mechanism and every $\eps$, so the comparison is paired across mechanisms. We sweep a uniform privacy grid $\eps\in\{0,0.1,\dots,9.9,10.0\}$ ($101$ values), and at each $(\eps,t)$ evaluate every mechanism on the same $(P_0^{(t)},P_1^{(t)})$. Curves report the mean of the normalized KL utility over the $T=100$ trials, skipping the (measure-zero) trials with $D_{\mathrm{kl}}(P_0^{(t)}\|P_1^{(t)})=0$.

\paragraph{Confidence bands.}
For each mechanism we additionally report a $95\%$ percentile-bootstrap confidence interval for the mean curve (Figure~\ref{fig:kl-eps-two-panel-ci}), computed by resampling the $T=100$ trials with replacement $5000$ times and taking the $2.5$th and $97.5$th percentiles of the resampled mean at each $\eps$. The bootstrap is non-parametric and assumes only that the trials are exchangeable; no Gaussianity is assumed, and the band is asymmetric in general. The variability captured is over the random Dirichlet draws of $(P_0,P_1)$ only. The same bootstrap indices are used for every mechanism so that the paired structure across mechanisms is preserved at the band level. The bands are intentionally light because the $k=6$ optimality claim is not statistical: it is the load-bearing fact that the SPR dynamic program agrees with the KOV LP on every trial. Empirically, $\max_{t,\eps}|V_{\mathrm{SPR}}^{(t)}(\eps)-V_{\mathrm{KOV}}^{(t)}(\eps)|\le 6.1\times 10^{-14}$ across the full $T\times 101$ grid, consistent with floating-point round-off only.

\paragraph{Mechanisms compared.}
The KOV LP solves the $2^k$-variable staircase linear program with HiGHS via \texttt{scipy.optimize.linprog}; this is the exact $\eps$-LDP optimum but is only tractable for small $k$. The SPR DP runs in $O(\ell\,k^2)$ time after sorting the alphabet by likelihood ratio, with output budget $\ell=k$; for $k=100$ we use a vectorized cumulative-max along columns of $A[r,i]=F[s-1,r]+\mu[r+1:i]$, which keeps each $s$-step at one NumPy call. The binary mechanism uses the high set $H_0=\{x:P_0(x)\ge P_1(x)\}$; randomized response is the standard $k$-ary channel with diagonal probability $e^\eps/(k-1+e^\eps)$.

\paragraph{Results (Figure~\ref{fig:kl-eps-two-panel}).}
In Figure~\ref{fig:kl-eps-two-panel}\subref{fig:kl-eps-k6} ($k=6$) the LP is tractable and we use it as a baseline. The SPR DP curve coincides with the LP curve to floating-point precision: the maximum absolute gap across the $10{,}100$ evaluations $(t,\eps)$ is at most $6.1\times 10^{-14}$, consistent with round-off only. This is the empirical version of Theorem~\ref{thm:extreme-points-spr}: SPR mechanisms attain the same optimum as the LP. The binary mechanism plateaus around $0.72$ for $\eps\gtrsim 5$, while $k$-ary randomized response is uninformative for small $\eps$ (below $25\%$ of the SPR optimum at $\eps=0.5$) and only reaches within $5\%$ of the SPR curve by $\eps\approx 4$, with full parity around $\eps\approx 6$.

In Figure~\ref{fig:kl-eps-two-panel}\subref{fig:kl-eps-k100} ($k=100$) the LP has $2^{100}\approx 1.27\times 10^{30}$ variables and is not solvable; the SPR DP, in contrast, finishes the entire $101\times 100$ sweep in under $20$ seconds on a laptop (median $17.8$\,s over three runs on a MacBook Pro with an Apple M3 Pro chip and $36$\,GB RAM, isolated from the other mechanisms). By Theorem~\ref{thm:extreme-points-spr}, the SPR curve is the value the LP \emph{would} achieve if it could be run. Two regime changes are visible relative to $k=6$: the binary plateau drops from $\approx 0.72$ to $\approx 0.56$, since a single likelihood-ratio threshold discards more information at larger $k$; and randomized response is essentially uninformative for small $\eps$, only catching up to the optimum once $e^\eps\gtrsim k-1$, i.e.\ for $\eps$ above roughly $\ln(k-1)\approx 4.6$.

\begin{figure}[tbp]
  \centering
  \begin{subfigure}{0.49\linewidth}
    \centering
    \includegraphics[width=\linewidth]{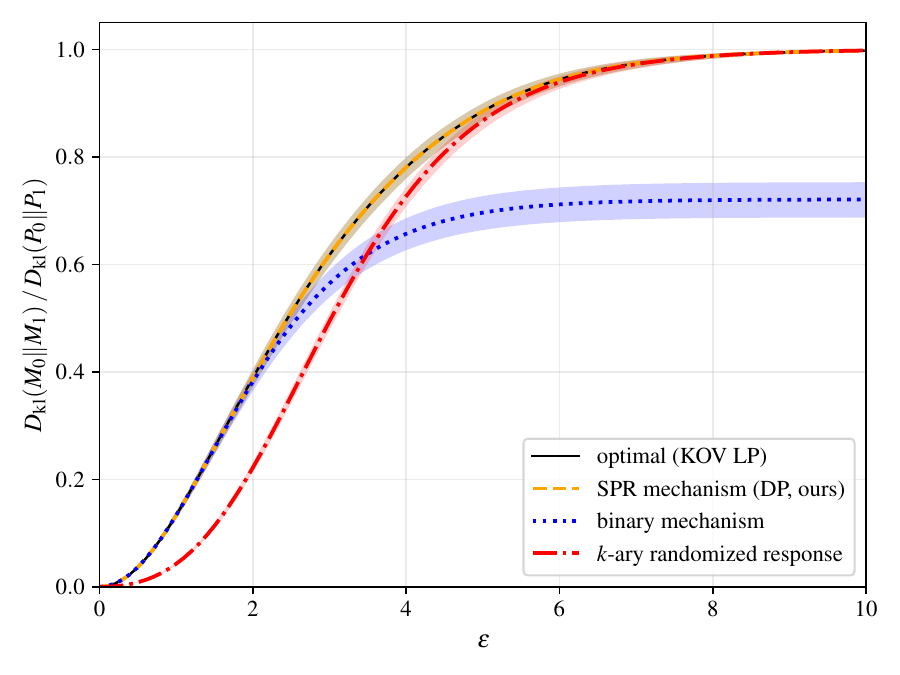}
    \caption{$k=6$.}
    \label{fig:kl-eps-k6-ci}
  \end{subfigure}\hfill
  \begin{subfigure}{0.49\linewidth}
    \centering
    \includegraphics[width=\linewidth]{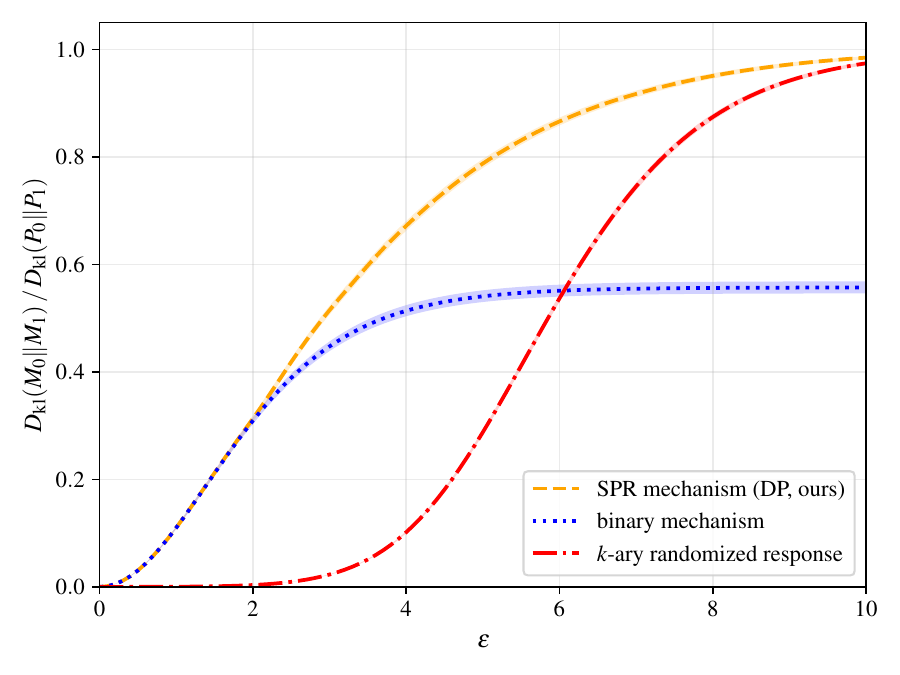}
    \caption{$k=100$.}
    \label{fig:kl-eps-k100-ci}
  \end{subfigure}
  \caption{Same curves as Figure~\ref{fig:kl-eps-two-panel}, with shaded $95\%$ percentile-bootstrap confidence intervals (5000 resamples) for the mean across the $T=100$ paired Dirichlet draws. The same bootstrap indices are used for every mechanism, so the paired structure of the experiment is preserved at the band level. At $k=6$ the SPR and KOV bands coincide to within bootstrap noise; the substantive SPR-vs-KOV comparison is the per-trial agreement $\max_{t,\eps}|V_{\mathrm{SPR}}^{(t)}-V_{\mathrm{KOV}}^{(t)}|\le 6.1\times 10^{-14}$ reported in the text, not the bands.}
  \label{fig:kl-eps-two-panel-ci}
\end{figure}